\documentclass{amsart}
\usepackage{fullpage}

%% Fonts
\usepackage[varg]{txfonts}

%% Figure
\usepackage[subrefformat=parens]{subcaption}
\usepackage{tikz}
\usetikzlibrary{calc}

%% Theorem environments
\newtheorem{Theorem}{Theorem}
\newtheorem{Lemma}[Theorem]{Lemma}
\newtheorem{claim}{\it Claim}
\newtheorem{Corollary}[Theorem]{Corollary}
\newtheorem{Example}[Theorem]{Example}
\newtheorem{Proposition}[Theorem]{Proposition}

%% Algorithm environment
\usepackage[lined,ruled,commentsnumbered]{algorithm2e}

%% Others
\usepackage{enumerate}
\usepackage{hyperref}
\usepackage{cite}

\title{Linear-semiorders and their incomparability graphs}
\author{Asahi~Takaoka}
\address{
  College of Information and Systems, 
  Muroran Institute of Technology, 
  Mizumoto 27-1, Muroran, 
  Hokkaido, 050--8585, Japan 
}
\email{takaoka@mmm.muroran-it.ac.jp}
\date{\today}

\keywords{
Comparability invariant, 
Linear-interval orders, 
PI graphs, 
Recognition algorithm, 
Semiorders, 
Triangle orders, 
Vertex ordering characterization
}

\subjclass[2010]{
05C62, % Graph representations (geometric and intersection representations, etc.)
05C75, % Structural characterization of types of graphs
68R10, % Graph theory [See also 05Cxx, 90B10, 90B35, 90C35]
06A07  % Combinatorics of partially ordered sets
}

\begin{document}
\begin{abstract}
A linear-interval order is the intersection 
of a linear order and an interval order. 
For this class of orders, several structural results have been known. 
This paper introduces a new subclass of linear-interval orders. 
We call a partial order a \emph{linear-semiorder} 
if it is the intersection of a linear order and a semiorder. 
We show a characterization 
and a polynomial-time recognition algorithm 
for linear-semiorders. 
We also prove that being a linear-semiorder is a comparability invariant, 
showing that incomparability graphs of linear-semiorders 
can be recognized in polynomial time. 
\end{abstract}

\maketitle

%% main text
\section{Introduction}
A graph is an \emph{intersection graph} 
if there is a set of objects such that 
each vertex corresponds to an object 
and two vertices are adjacent if and only if 
the corresponding objects have a nonempty intersection. 
The set of objects is called a \emph{representation} of the graph. 
Intersection graphs of geometric objects have been widely investigated 
because of their interesting structures and applications~\cite{BLS99,Golumbic04,Spinrad03}. 

Well-known examples of intersection graphs 
are interval graphs and permutation graphs. 
An \emph{interval graph} is the intersection graph 
of intervals on the real line. 
Let $L_1$ and $L_2$ be two horizontal lines 
in the $xy$-plane with $L_1$ above $L_2$. 
A \emph{permutation graph} is the intersection graph 
of line segments joining a point on $L_1$ and a point on $L_2$. 

A common generalization of the two graph classes 
is trapezoid graphs~\cite{CK87-CN,DGP88-DAM}. 
An interval on $L_1$ and an interval on $L_2$ define 
a trapezoid between $L_1$ and $L_2$. 
A \emph{trapezoid graph} is the intersection graph of 
such trapezoids. 
The structure of trapezoid graphs is well investigated, and 
several recognition algorithms are presented~\cite{GT04,MC11-DAM,Spinrad03}. % for survey. 

There is some correspondence between partial orders 
and intersection graphs of geometric objects 
between the two lines~\cite{GRU83-DM},~\cite[Theorem 1.11]{GT04}. 
A partial order $P$ on a set $V$ is a \emph{trapezoid order} 
if for each element $v \in V$, 
there is a trapezoid $T(v)$ between $L_1$ and $L_2$ 
such that for any two elements $u, v \in V$, 
we have that $u \prec v$ in $P$ if and only if 
%$T(u)$ do not intersect with $T(v)$ and 
$T(u)$ lies completely to the left of $T(v)$. 
The set of trapezoids $\{T(v) \colon\ v \in V\}$ is called 
a \emph{trapezoid representation} of $P$. 

By restricting trapezoids in the representation, 
many order classes have been introduced~\cite{BLR10-Order,BMR98-Order,Ryan98-Order}. 
An \emph{up-triangle order}~\cite{BLR10-Order} is a partial order 
representable by triangles spanned by a point on $L_1$ and 
an interval on $L_2$. 
An up-triangle order is also known as 
a \emph{PI order}~\cite{BLS99,COS08-ENDM,CK87-CN}, 
where \emph{PI} stands for \emph{point-interval}, 
and as a \emph{linear-interval order}~\cite{Mertzios15-SIAMDM} 
because it is the intersection of a linear order and an interval order. 
We use the term linear-interval orders to denote such orders. 
Several structural results have been shown 
for this order class~\cite{COS08-ENDM,CK87-CN,Takaoka18-DM}, 
including polynomial-time recognition algorithms~\cite{Mertzios15-SIAMDM,Takaoka20-DAM,Takaoka20a-DAM}. 
As noted in~\cite{Mertzios15-SIAMDM}, 
this is one of the first results for recognizing orders that are 
the intersection of orders from different classes. 

This paper deals with up-triangle orders 
representable by triangles spanned by a point on $L_1$ and 
a \emph{unit-length} interval on $L_2$ (Fig.~\ref{figs:chevron}). 
Such an order is the intersection of a linear order and a semiorder; 
hence we call it a \emph{linear-semiorder}. 
In Section~\ref{sec:charac}, we show 
a characterization of linear-semiorders 
in terms of linear extensions. 
We then propose a polynomial-time recognition algorithm 
for linear-semiorders in Section~\ref{sec:recog}. 
In Section~\ref{sec:compa}, we prove 
that being a linear-semiorder is a comparability invariant, 
showing that incomparability graphs of linear-semiorders 
can also be recognized in polynomial time. 
In Section~\ref{sec:byproducts}, 
we show two byproducts of the characterization. 
We finally discuss our results and 
further research in Section~\ref{sec:conclusion}. 

\begin{figure}
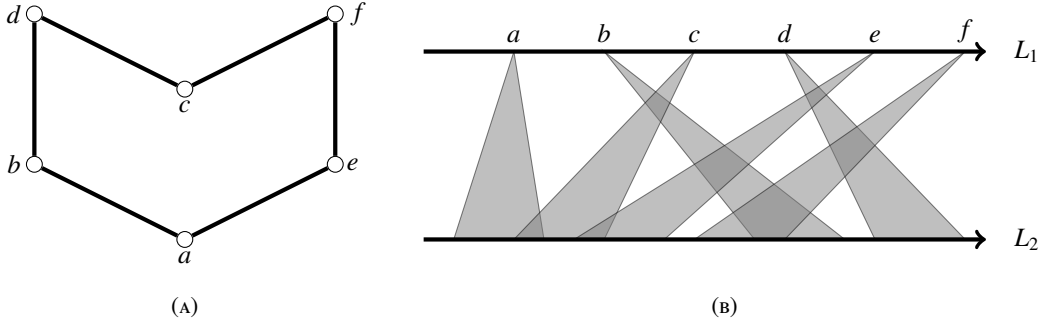

\centering
\subcaptionbox{\label{figs:chevron:order}}{\input{figs/chevron/order.tex}}
\subcaptionbox{\label{figs:chevron:representation}}{\input{figs/chevron/representation}}
\caption{A partial order (the dual of chevron) with a triangle representation. }
\label{figs:chevron}
\end{figure}

\section{Preliminaries}
A (strict) \emph{partially ordered set} %(\emph{poset} for short) 
is a pair $(V, P)$, 
where $V$ is a set and $P$ is a binary relation on $V$ 
that is irreflexive, transitive, and therefore asymmetric. 
The set $V$ is called a \emph{ground set} and 
the relation $P$ is called a (strict) \emph{partial order} on $V$. 
In this paper, we will deal only with partial orders on finite sets. 

We denote partial orders by $\prec$ instead of $P$, 
that is, we write $u \prec v$ in $P$ if $(u, v) \in P$. 
Two elements $u, v \in V$ are \emph{comparable} in $P$ 
if $u \prec v$ or $u \succ v$; 
otherwise $u$ and $v$ are \emph{incomparable}, 
denoted by $u \parallel v$. 
A partial order $P$ on a set $V$ is a \emph{linear order} 
if any two distinct elements of $V$ are comparable in $P$. 

A partial order $P$ on a set $V$ is an \emph{interval order} 
if for each element $v \in V$, 
there is a (closed) interval $I(v)$ on the real line 
such that for any two elements $u, v \in V$, 
we have that $u \prec v$ in $P$ if and only if 
$I(u)$ lies completely to the left of $I(v)$. 
Here, the interval $I(u) = [l(u), r(u)]$ 
\emph{lies completely to the left of} $I(v)  = [l(v), r(v)]$, 
and we write $I(u) \ll I(v)$, 
if $r(u) < l(v)$. 
The set of intervals $\{I(v) \colon\ v \in V\}$ is called 
an \emph{interval representation} of $P$. 

An interval representation is 
\emph{unit} if every interval has unit length, and 
it is \emph{proper} if no interval properly contains another. 
An interval order is a \emph{semiorder} 
if it has a unit interval representation. 
It is known that a partial order is a semiorder if and only if 
it has a proper interval representation~\cite{BW99-DM}. 

Let $P_1$ and $P_2$ be two partial orders on the same ground set $V$. 
The \emph{intersection} of $P_1$ and $P_2$ is 
the partial order $P = P_1 \cap P_2$. 
Equivalently, the intersection of $P_1$ and $P_2$ is 
the partial order $P$ on $V$ such that 
$u \prec v$ in $P$ if and only if $u \prec v$ in both $P_1$ and $P_2$. 
We call an order a \emph{linear-semiorder} 
if it is the intersection of a linear order and a semiorder. 

In addition to partially ordered sets, 
this paper deals with graphs. 
All graphs in this paper are finite 
with no loops or multiple edges. 
Unless stated otherwise, graphs are assumed to be undirected, 
but we also deal with graphs with directed edges. 
We write $uv$ for the \emph{undirected edge} 
joining two vertices $u$ and $v$ and 
write $(u, v)$ for the \emph{directed edge} 
from $u$ to $v$. 
For a graph $G = (V, E)$, we sometimes 
write $V(G)$ for the vertex set $V$ and 
write $E(G)$ for the edge set $E$. 
\par
Let $P$ be a partial order on a set $V$. 
The \emph{comparability graph} of $P$ is the graph $G$ 
such that $uv \in E(G)$ if and only if $u$ and $v$ are comparable in $P$. 
The \emph{incomparability graph} of $P$ is the graph $G$ 
such that $uv \in E(G)$ if and only if $u \parallel v$ in $P$. 
Note that 
the incomparability graph of $P$ is the complement of 
the comparability graph of $P$, where 
the \emph{complement} of a graph $G$ 
is the graph $\overline{G}$ 
such that $V(\overline{G}) = V(G)$ and 
$uv \in E(\overline{G})$ if and only if $uv \notin E(G)$ 
for any two vertices $u, v \in V(\overline{G})$. 
\par
Let $G$ be a graph, and 
let $E \subseteq E(G)$ be a set of (undirected) edges of $G$. 
We call a set $F$ of directed edges an \emph{orientation} of $E$ 
if $F$ is obtained from $E$ by 
orienting each edge of $E$, that is, 
replacing each edge $uv \in E$ with either $(u, v)$ or $(v, u)$. 
An orientation of $G$ is an orientation of $E(G)$. 
%\par
An orientation $F$ of $G$ is \emph{transitive} 
if $(u, v) \in F$ and $(v, w) \in F$ then $(u, w) \in F$ 
for any three vertices $u, v, w \in V(G)$. 
Note that a partial order $P$ on a set $V$ can be viewed as 
a transitive orientation $F$ of the comparability graph of $P$, 
in which $(u, v) \in F$ if and only if $u \prec v$ in $P$. 
We also note that a graph is a comparability graph if and only if 
it has a transitive orientation. 
\par
A linear-time algorithm is known for computing 
a transitive orientation of a comparability graph~\cite{MS99-DM}. 
However, the algorithm might produce an orientation that is not transitive 
if the given graph is not a comparability graph. 
Thus, to recognize comparability graphs, 
we must verify the transitivity after computing the orientation. 
\par
An orientation $F$ of a graph $G$ is \emph{quasi-transitive} 
if $(u, v) \in F$ and $(v, w) \in F$ then $uw \in E(G)$, 
that is, either $(u, w) \in F$ or $(w, u) \in F$. 
In other words, an orientation $F$ of $G$ is quasi-transitive 
if for any three vertices $u, v, w \in V$ 
with $uv, vw \in E(G)$ and $uw \notin E(G)$, 
either $(u, v), (w, v) \in F$ or $(v, u), (v, w) \in F$. 
We can see that an orientation is transitive if and only if 
it is quasi-transitive and acyclic. 
\par
The linear-time algorithm~\cite{MS99-DM} produces a linear extension 
of an orientation $F$ of a given graph $G$ 
such that $F$ is transitive if $G$ is a comparability graph. 
Obviously, $F$ is acyclic. 
We can test whether $F$ is quasi-transitive by checking 
either $(u, v), (w, v) \in F$ or $(v, u), (v, w) \in F$ 
for any pair of a non-edge $uw \notin E(G)$ and a vertex $v \in V(G)$ 
with $uv, vw \in E(G)$. 
Thus we have the following. 
\begin{Proposition}\label{prop:obtain transitive orientation}
Comparability graphs can be recognized in $O(n\bar{m})$ time, 
where $n$ and $\bar{m}$ are the number of vertices and non-edges 
of the given graph. 
A transitive orientation of a graph $G$ can be obtained 
in the same time bound if $G$ is a comparability graph. 
\end{Proposition}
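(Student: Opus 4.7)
The plan is to combine the linear-time orientation algorithm of McConnell--Spinrad with a direct verification of quasi-transitivity, exactly along the lines sketched just before the statement. First I would invoke the algorithm of~\cite{MS99-DM} to produce, in $O(n+m)$ time, an orientation $F$ of $G$ together with a linear extension. Because $F$ is consistent with a linear extension it is automatically acyclic, and the algorithm's guarantee is that $F$ is transitive whenever $G$ actually is a comparability graph. Since an orientation is transitive if and only if it is both acyclic and quasi-transitive, the task reduces to testing quasi-transitivity of $F$.

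The quasi-transitivity check goes as follows: for every non-edge $uw \notin E(G)$ and every vertex $v \in V(G)$ with $uv, vw \in E(G)$, verify that the orientations of $uv$ and $vw$ agree, i.e.\ that either $(u,v),(w,v)\in F$ or $(v,u),(v,w)\in F$. If this succeeds for every such triple, declare $F$ a transitive orientation (and therefore $G$ a comparability graph); if any triple fails, declare $G$ not to be a comparability graph. Correctness of the latter branch uses the algorithm's guarantee contrapositively: a failure means $F$ is not transitive, but on a comparability graph the algorithm must return a transitive orientation, so $G$ cannot be a comparability graph.

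For the running time, each non-edge $uw$ gives rise to at most $n$ relevant vertices $v$, so there are $O(n\bar{m})$ triples to inspect, each in constant time after suitable adjacency look-ups. Combined with the $O(n+m)$ preprocessing (which is absorbed into $O(n\bar{m})$ in the regime the statement intends), this yields the claimed $O(n\bar{m})$ bound, and the transitive orientation produced in the ``yes'' case is exactly $F$.

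The only nontrivial step is the last sentence of the first paragraph --- that we may rely on the McConnell--Spinrad algorithm as a black box with the stated guarantees --- but this is a cited result, not something to reprove. Everything else is routine enumeration, so I do not anticipate a substantive obstacle; the proof in the paper should consist essentially of stating these observations and pointing to~\cite{MS99-DM}.
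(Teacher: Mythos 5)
Your proposal is correct and follows essentially the same route as the paper: run the McConnell--Spinrad algorithm to obtain an orientation together with a linear extension (hence acyclic), reduce transitivity to quasi-transitivity, and verify quasi-transitivity by inspecting each non-edge $uw$ together with each vertex $v$ adjacent to both $u$ and $w$, for $O(n\bar{m})$ triples in all. The paper presents exactly these observations in the two paragraphs preceding the proposition and treats them as the proof, so there is no substantive difference.
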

\par
Let $G$ be a graph. 
A sequence of four vertices $(v_0, v_1, v_2, v_3)$ is 
an \emph{induced cycle of length $4$} if 
$v_0v_1, v_1v_2, v_2v_3, v_3v_0 \in E(G)$ and 
$v_0v_2, v_1v_3 \notin E(G)$, 
denoted by $C_4$. 
A graph consisting of four vertices $v_0, v_1, v_2, v_3$ is 
a \emph{claw} if $v_0v_1, v_0v_2, v_0v_3 \in E(G)$ 
and $v_1v_2, v_2v_3, v_3v_1 \notin E(G)$, 
denoted by $K_{1, 3}$. 
A sequence of three vertices $(v_0, v_1, v_2)$ is 
an \emph{induced path of length $2$} if 
$v_0v_1, v_1v_2 \in E(G)$ and $v_0v_2 \notin E(G)$, 
denoted by $P_3$.

\section{Characterization}\label{sec:charac}
Let $P$ be a partial order on a set $V$. 
A linear order $L$ on $V$ is a \emph{linear extension} of 
$P$ if $u \prec v$ in $L$ whenever $u \prec v$ in $P$. 
Hence, a linear extension $L$ of $P$ has all the relations of $P$ with 
additional relations making $L$ linear. 
We define some properties of linear extensions. 

The order $\mathbf{2+2}$ of $P$ is a partial order 
consisting of four elements $x$, $y$, $z$, $w$ of $V$ such that 
$x \prec y$ and $z \prec w$ whereas $x \parallel w$ and $z \parallel y$ in $P$. 
Note that $x \parallel z$ and $y \parallel w$ in $P$; 
otherwise, we would have $x \prec w$ or $z \prec y$ in $P$. 
We say that a linear extension $L$ of $P$ 
fulfills the \emph{$\mathbf{2+2}$ rule} if 
$y \prec z$ or $w \prec x$ in $L$ 
for each induced suborder $\mathbf{2+2}$ of $P$. 
If $y \prec z$ in $L$ then $x \prec y \prec z \prec w$ in $L$; 
if $w \prec x$ in $L$ then $z \prec w \prec x \prec y$ in $L$. 
Equivalently, a linear extension $L$ of $P$ 
fulfills the $\mathbf{2+2}$ rule if 
there are no four elements $x, y, z, w$ of $V$ 
such that $x \prec y$, $z \prec w$, $x \parallel w$, and 
$z \parallel y$ in $P$ whereas 
$x \prec w$ and $z \prec y$ in $L$ 
(Fig.~\ref{figs:forbidden}\subref{figs:forbidden:2+2}). 
We call such an induced suborder 
a \emph{forbidden configuration for $\mathbf{2+2}$}. 

The order $\mathbf{3+1}$ of $P$ is a partial order 
consisting of four elements $x$, $y$, $z$, $w$ of $V$ such that 
$x \prec y \prec z$ whereas $x \parallel w$ and $w \parallel z$ in $P$. 
Note that $y \parallel w$ in $P$; 
otherwise, we would have $x \prec w$ or $w \prec z$ in $P$. 
We say that a linear extension $L$ of $P$ 
fulfills the \emph{$\mathbf{3+1}$ rule} if 
$w \prec x$ or $z \prec w$ in $L$ 
for each induced suborder $\mathbf{3+1}$ of $P$. 
If $w \prec x$ in $L$ then $w \prec x \prec y \prec z$ in $L$; 
if $z \prec w$ in $L$ then $x \prec y \prec z \prec w$ in $L$. 
Equivalently, a linear extension $L$ of $P$ 
fulfills the $\mathbf{3+1}$ rule if 
there are no four elements $x, y, z, w$ of $V$ 
such that $x \prec y \prec z$, $x \parallel w$, and 
$w \parallel z$ in $P$ whereas 
$x \prec w \prec z$ in $L$ 
(Fig.~\ref{figs:forbidden}\subref{figs:forbidden:3+1}). 
We call such an induced suborder 
a \emph{forbidden configuration for $\mathbf{3+1}$}. 

\begin{figure}[t]
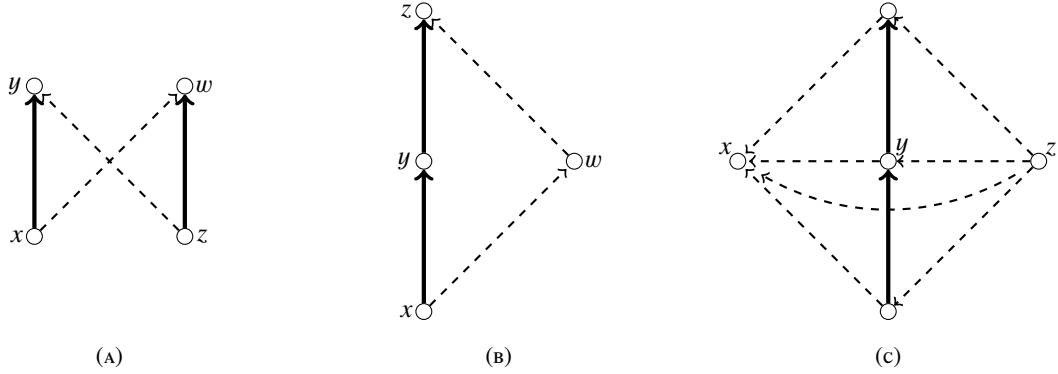

  \centering
  \subcaptionbox{\label{figs:forbidden:2+2}}{\input{figs/forbidden/2+2.tex}}
  \subcaptionbox{\label{figs:forbidden:3+1}}{\input{figs/forbidden/3+1.tex}}
  \subcaptionbox{\label{figs:intransitivity}}{\input{figs/intransitivity.tex}}
  \caption{
    (a) A forbidden configuration for $\mathbf{2+2}$. 
    (b) A forbidden configuration for $\mathbf{3+1}$. 
    (c) A partial order with $(x, y), (y, z) \in P \cup R_1 \cup R_2$ 
    but $(x, z) \notin P \cup R_1 \cup R_2$. 
    An arrow $u \to v$ denotes that $u \prec v$ in $P$; 
    a dashed arrow $u \dasharrow v$ denotes that $u \prec v$ in $L$ 
    but $u \parallel v$ in $P$. 
    } 
  \label{figs:forbidden}
\end{figure}

Note that the orders $\mathbf{2+2}$ and $\mathbf{3+1}$ 
characterize interval orders and semiorders as follows. 
A partial order is an interval order if and only if 
it does not contain $\mathbf{2+2}$ 
as an induced suborder~\cite{Fishburn70-JMP}. 
A partial order is a semiorder if and only if 
it does not contain $\mathbf{2+2}$ or $\mathbf{3+1}$ 
as an induced suborder~\cite{SS58-JSL}. 

Our previous work~\cite{Takaoka18-DM} shows that 
a partial order is a linear-interval order if and only if 
it has a linear extension fulfilling the $\mathbf{2+2}$ rule. 
In this paper, we show the following characterization for linear-semiorders. 
\begin{Theorem}\label{theorem:characterization}
A partial order is a linear-semiorder if and only if 
it has a linear extension fulfilling the $\mathbf{2+2}$ and $\mathbf{3+1}$ rules. 
\end{Theorem}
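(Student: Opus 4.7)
The plan is to handle the two directions separately: the forward implication ($\Rightarrow$) by a case analysis, and the reverse ($\Leftarrow$) by an explicit construction of the semiorder.

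For the forward direction, suppose $P = L \cap S$ with $L$ linear and $S$ a semiorder; I claim that $L$ itself, viewed as a linear extension of $P$, fulfills both rules. Assume toward contradiction a forbidden configuration $x, y, z, w$ for $\mathbf{2+2}$ in $L$. Since $P \subseteq S$, the relations $x \prec y$ and $z \prec w$ hold in $S$. Because $L \cap S = P$ and $x \parallel w$ in $P$ while $x \prec w$ in $L$, we cannot have $x \prec w$ in $S$; moreover, $w \prec x$ in $S$ would combine with $z \prec w$ and $x \prec y$ to force $z \prec y$ in $S$, whence $z \prec y$ in $L \cap S = P$, contradicting $z \parallel y$. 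Hence $x \parallel w$ in $S$. The same reasoning applied to the pairs $(z, y)$, $(x, z)$, and $(y, w)$ forces all four non-$P$-relations to be incomparabilities in $S$, exhibiting $\{x, y, z, w\}$ as an induced $\mathbf{2+2}$ of $S$---a contradiction. The $\mathbf{3+1}$ rule is verified in the same spirit, using $y \parallel w$ in $P$ (which is part of the configuration) to produce an induced $\mathbf{3+1}$ in $S$.

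For the reverse direction, given $P$ and a linear extension $L$ fulfilling both rules, I would build a semiorder $S$ with $L \cap S = P$ directly from a unit interval representation. Enumerating $V = \{v_1, \ldots, v_n\}$ along $L$, I would seek real numbers $l_1, \ldots, l_n$ satisfying, for each $i < j$,
\[
  l_j - l_i > 1 \text{ if } v_i \prec v_j \text{ in } P, \qquad
  l_j - l_i \le 1 \text{ if } v_i \parallel v_j \text{ in } P.
\]
The unit intervals $I(v_k) = [l_k, l_k + 1]$ then define a semiorder $S$ satisfying $L \cap S = P$ by construction. Feasibility of this system of (strict) difference inequalities reduces, by standard Bellman-Ford theory, to the absence of negative cycles in the associated constraint digraph.

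The main obstacle will be ruling out such negative cycles. Two-cycles collapse immediately from the linearity of $L$; for a shortest putative longer violation, I would argue that its vertices must contain a length-four subpattern which, together with $L$, forms a forbidden configuration for $\mathbf{2+2}$ or $\mathbf{3+1}$, contradicting the hypothesis. An alternative, more combinatorial route is to define ``backwards-in-$L$'' relation sets $R_1$ and $R_2$ that preemptively break each latent $\mathbf{2+2}$ and $\mathbf{3+1}$ of $P$ and to take $S$ as the transitive closure of $P \cup R_1 \cup R_2$; the intransitivity example in Fig.~\ref{figs:intransitivity} warns that this closure is genuinely needed, and the subtle step of that approach is verifying that no forward-in-$L$ relation slips in during the closure (which would violate $L \cap S = P$).
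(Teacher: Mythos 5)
Your forward direction is correct, complete, and genuinely different from the paper's: you derive the $\mathbf{2+2}$ and $\mathbf{3+1}$ rules for $L$ purely from the Scott--Suppes order-theoretic characterization of semiorders (no induced $\mathbf{2+2}$ or $\mathbf{3+1}$), by showing that a forbidden configuration in $L$ would force the corresponding four elements of $S$ to induce that same pattern. The paper instead works concretely with the proper interval representation of $S$ and derives a containment $I(w) \supseteq I(y)$. Your argument is shorter and avoids any geometry; it is a nice alternative.

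The reverse direction is where the real work lies, and your proposal leaves the decisive step as a gap. You correctly identify that one can try to realize $S$ by a unit interval assignment via the difference-constraint system
\[
  l_j - l_i > 1 \ \text{if}\ v_i \prec v_j\ \text{in}\ P,\qquad
  l_j - l_i \le 1 \ \text{if}\ v_i \parallel v_j\ \text{in}\ P,\quad (i<j\ \text{in}\ L),
\]
and that feasibility amounts to excluding violating cycles in the constraint digraph. Ruling out $2$-cycles is indeed immediate, and a short check (which you do not carry out) shows that a violating $3$-cycle forces a transitivity contradiction in $P$, while a violating $4$-cycle yields exactly a forbidden configuration for $\mathbf{2+2}$ (if the $\pm 1$ edges alternate) or for $\mathbf{3+1}$ (if the two $-1$ edges are adjacent). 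But the claim that a \emph{shortest} violating cycle is short, or that an arbitrary-length violating cycle contains a four-vertex forbidden subpattern, is asserted without argument, and it is not obvious: the ``shorten the cycle by taking a chord'' move changes the cycle weight, so minimality is not preserved in the naive way. The paper's corresponding step (the acyclicity of the relation $Q = P \cup R_1 \cup R_2$ in Lemma~\ref{lemma:leftarrow}) is the heart of the proof and requires a delicate multi-case analysis, ultimately reaching configurations with up to seven distinct elements and a chain of three relations $R_1, R_2, R_1$ that alone produces the contradiction; it is emphatically not a ``find one bad $4$-subset'' argument. Even granting acyclicity of $Q$, the paper still has to exhibit a concrete proper interval representation via the auxiliary function $f$ and verify Claim~\ref{claim:L_Q} and Claim~\ref{claim:f}; your second sketch (``take the transitive closure of $P \cup R_1 \cup R_2$'') does not match the paper's construction --- the paper takes a linear extension $L_Q$ of $Q$, not the transitive closure, precisely because of the intransitivity issue you flag --- and you leave unresolved the concern you yourself raise about forward-in-$L$ relations slipping in. In short, the forward direction stands, but the reverse direction needs a complete proof that your constraint graph has no violating cycle, or an explicit acyclicity-plus-construction argument along the lines of the paper.
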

\begin{proof}
The necessity and sufficiency follow immediately from 
Lemmas~\ref{lemma:rightarrow} and~\ref{lemma:leftarrow}, respectively.
\end{proof}

\begin{Lemma}\label{lemma:rightarrow}
If a partial order $P$ on a set $V$ has a linear order $L$ and 
a semiorder $S$ with $L \cap S = P$, then 
$L$ is a linear extension of $P$ 
fulfilling the $\mathbf{2+2}$ and $\mathbf{3+1}$ rules. 
\end{Lemma}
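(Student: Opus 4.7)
The plan is to prove the three claims in sequence: that $L$ extends $P$, that $L$ respects the $\mathbf{2+2}$ rule, and that $L$ respects the $\mathbf{3+1}$ rule. The unifying principle is that because $P = L \cap S$, every $P$-incomparability paired with an $L$-relation forces a non-relation in $S$; then transitivity inside the partial order $S$ and the $\mathbf{2+2}$/$\mathbf{3+1}$-freeness of semiorders should give the needed contradictions.

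First I would dispatch the easy part: since $P = L \cap S \subseteq L$ and $L$ is a linear order on $V$, the order $L$ is automatically a linear extension of $P$.

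For the $\mathbf{2+2}$ rule, I would take a forbidden configuration, i.e.\ four elements $x, y, z, w$ with $x \prec y$, $z \prec w$, $x \parallel w$, $z \parallel y$ in $P$, and assume for contradiction that $x \prec w$ and $z \prec y$ in $L$. From $x \parallel w$ in $P$ and $x \prec w$ in $L$ I conclude $x \not\prec w$ in $S$; symmetrically $z \not\prec y$ in $S$. Inside $S$ I still have $x \prec y$ and $z \prec w$ (since $P \subseteq S$). Now: if $w \prec x$ in $S$ then transitivity gives $z \prec w \prec x \prec y$, so $z \prec y$ in $S$, contradicting $z \not\prec y$ in $S$; hence $x \parallel w$ in $S$. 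Similarly $y \prec z$ in $S$ would yield $x \prec w$ in $S$, so $z \parallel y$ in $S$. Thus $\{x, y, z, w\}$ is an induced $\mathbf{2+2}$ in $S$, contradicting the fact that semiorders (being interval orders) are $\mathbf{2+2}$-free.

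For the $\mathbf{3+1}$ rule, I would take four elements with $x \prec y \prec z$, $x \parallel w$, $w \parallel z$, $y \parallel w$ in $P$ and assume $x \prec w \prec z$ in $L$. The same translation gives $x \not\prec w$ and $w \not\prec z$ in $S$, while $x \prec y \prec z$ still holds in $S$. If $w \prec x$ in $S$ then $w \prec z$ in $S$ by transitivity, contradicting $w \not\prec z$ in $S$; hence $x \parallel w$ in $S$. Symmetrically, $z \prec w$ in $S$ forces $x \prec w$, so $w \parallel z$ in $S$. Finally I would compare $w$ with $y$ in $S$: the case $w \prec y$ chains to $w \prec z$ (contradiction), the case $y \prec w$ chains to $x \prec w$ (contradiction), and the remaining case $y \parallel w$ produces an induced $\mathbf{3+1}$ on $\{x, y, z, w\}$ in $S$, contradicting the fact that semiorders are $\mathbf{3+1}$-free.

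There is no real obstacle; the only delicate point is being careful to distinguish the three orders $L$, $S$, and $P$ in each step, and to note that every conclusion in $S$ relies either on transitivity of $S$ as a partial order or on $S$ being a semiorder (i.e.\ both $\mathbf{2+2}$- and $\mathbf{3+1}$-free). The strict inequalities $x \prec y$ etc.\ in $P$ carry over to $S$ through $P \subseteq S$, which is what makes the transitivity arguments inside $S$ go through cleanly.
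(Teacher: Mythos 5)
Your proof is correct, but it takes a genuinely different route from the paper's. For the $\mathbf{3+1}$ rule the paper argues \emph{geometrically}: it fixes a proper interval representation $\{I(v)\}$ of $S$ and shows that a forbidden $\mathbf{3+1}$ configuration would force $I(w) \supseteq I(y)$, contradicting properness. You instead argue \emph{combinatorially}: you translate the hypotheses into incomparabilities inside $S$ and invoke the Scott--Suppes characterization that semiorders contain no induced $\mathbf{2+2}$ or $\mathbf{3+1}$. For the $\mathbf{2+2}$ rule the paper simply cites prior work (Corneil--Kamula and the author's earlier paper), whereas you give a direct, self-contained argument of the same combinatorial flavor. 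Your approach has the advantage of being representation-free and uniform --- the $\mathbf{2+2}$ and $\mathbf{3+1}$ cases are handled by the very same reasoning pattern (deduce $\not\prec$ facts in $S$ from $L$-relations over $P$-incomparable pairs, rule out the reversed comparabilities by transitivity of $S$, then exhibit the forbidden suborder in $S$). The paper's geometric argument is shorter for the $\mathbf{3+1}$ case and makes the ``$I(w)$ contains $I(y)$'' picture vivid, but requires taking a representation. Both are valid; yours is arguably cleaner as a standalone proof since it does not delegate the $\mathbf{2+2}$ half to a citation. One small point worth being explicit about: after establishing $x \parallel w$ and $z \parallel y$ in $S$, the remaining incomparabilities $x \parallel z$ and $y \parallel w$ needed for an induced $\mathbf{2+2}$ follow by the same transitivity argument (as the paper itself notes when it defines $\mathbf{2+2}$), so there is no gap --- it just deserves a sentence.
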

\begin{proof}
The linear order $L$ can be viewed as a linear extension of $P$. 
It is shown in~\cite{CK87-CN,Takaoka18-DM} that 
$L$ fulfills the $\mathbf{2+2}$ rule. 
We now show that $L$ fulfills the $\mathbf{3+1}$ rule. 
Suppose that $L$ has a forbidden configuration for $\mathbf{3+1}$ 
consisting of four elements $x, y, z, w$ of $V$ such that 
$x \prec y \prec z$, $x \parallel w$, and 
$z \parallel w$ in $P$ whereas 
$x \prec w \prec z$ in $L$. 
Let $\{I(v) \colon\ v \in V\}$ be a proper interval representation of 
the semiorder $S$, 
and let $I(v) = [l(v), r(v)]$ for each element $v$ of $V$. 
Since $x \prec y \prec z$ in $P$, we have $r(x) < l(y) \leq r(y) < l(z)$. 
Since $x \prec w$ in $L$ and $x \parallel w$ in $P$, 
we have $x \not\prec w$ in $S$. Thus $l(w) \leq r(x)$. 
Similarly, since $w \prec z$ in $L$ and $w \parallel z$ in $P$, 
we have $w \not\prec z$ in $S$. Thus $l(z) \leq r(w)$. 
Therefore, $I(w) \supseteq I(y)$, 
contradicting that no interval properly contains another. 
\end{proof}

\begin{Lemma}\label{lemma:leftarrow}
If a partial order $P$ on a set $V$ has a linear extension $L$ 
fulfilling the $\mathbf{2+2}$ and $\mathbf{3+1}$ rules, 
then there is a semiorder $S$ with $L \cap S = P$. 
The semiorder $S$ and its proper interval representation 
can be obtained in $O(n^3)$ time, 
where $n$ is the number of elements of $V$. 
\end{Lemma}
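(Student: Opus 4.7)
The plan is to build a semiorder $S$ by augmenting $P$ with additional comparabilities that all point opposite to $L$; any such ``reverse'' comparability is automatically absent from $L\cap S$, so $L\cap S=P$ is preserved as long as transitive closure does not introduce an $L$-agreeing pair outside $P$. Concretely, I introduce two relations $R_1$ and $R_2$ on $V$ (those referenced in Fig.~\ref{figs:forbidden}\subref{figs:intransitivity}), both consisting of pairs $(v,u)$ with $u\prec v$ in $L$ and $u\parallel v$ in $P$. Heuristically, $R_1$ records the $L$-reversals forced on pairs that would otherwise complete a $\mathbf{2+2}$ in any extension of $P$, and $R_2$ records those forced by $\mathbf{3+1}$-type triples of $P$. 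The candidate semiorder is the transitive closure $S$ of $P\cup R_1\cup R_2$.

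The main tasks are then to verify that (i)~$S$ is antisymmetric and hence a partial order; (ii)~$L\cap S = P$; and (iii)~$S$ contains neither $\mathbf{2+2}$ nor $\mathbf{3+1}$ as an induced suborder, so that it is a semiorder. For (i) and (ii), any offending pair must arise from a chain in $P\cup R_1\cup R_2$ that closes into an $L$-agreeing comparability outside $P$ (or into a cycle); applying the $\mathbf{2+2}$ and $\mathbf{3+1}$ rules to the four endpoints of the resulting configuration rules each such chain out. For (iii), any putative $\mathbf{2+2}$ or $\mathbf{3+1}$ in $S$ must use at least one edge from $R_1\cup R_2$, and a case analysis on which edge is reversed—together with the defining conditions of $R_1$ and $R_2$—pushes the obstruction back to a violation of one of the two rules in the hypothesis.

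The most delicate step, flagged by Fig.~\ref{figs:forbidden}\subref{figs:intransitivity}, is that $P\cup R_1\cup R_2$ need not itself be transitive: a chain $(x,y),(y,z)$ may lie in the union while $(x,z)$ does not. Thus the transitive closure genuinely creates new pairs, and I must verify that each such new pair remains an $L$-reversal and introduces no forbidden subposet; this is where the two rules interact nontrivially and the case analysis is most intricate. Once $S$ is established as a semiorder, the Scott--Suppes theorem~\cite{SS58-JSL} yields a proper interval representation, which is built explicitly by a standard greedy layout of unit intervals in the order given by $L$. For the running time, computing $R_1$ and $R_2$ and taking the transitive closure are $O(n^3)$ operations on triples of elements, and the interval layout costs only $O(n^2)$, so the overall construction runs in $O(n^3)$ time as claimed.
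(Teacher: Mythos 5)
Your plan diverges from the paper's proof in a way that introduces a genuine gap: the paper does \emph{not} take the transitive closure of $P\cup R_1\cup R_2$ as the semiorder. Your description of $R_1$ and $R_2$ is also not what the paper uses. In the paper, $(x,y)\in R_1$ means there is some $z$ with $(x,z)\in P$ and $(y,z)\in L-P$, and $R_2$ is defined symmetrically; neither condition forces $y\prec x$ in $L$, and in particular $R_1$ can contain pairs lying in $L-P$. Here is a concrete failure of the transitive-closure idea even with the paper's exact $R_1,R_2$: take $V=\{a,b,c\}$, $P=\{(a,c)\}$, and $L$ with $a\prec b\prec c$. Then $(a,b)\in R_1$ (witness $z=c$), so $(a,b)$ belongs to $Q=P\cup R_1\cup R_2$ and hence to its transitive closure, yet $(a,b)\in L-P$, so $L\cap S\supsetneq P$ for your candidate $S$. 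Thus step~(ii) of your verification already fails; the rules $\mathbf{2+2}$ and $\mathbf{3+1}$ cannot rule out this pair because the offending configuration is only a $\mathbf{2+1}$.

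The paper uses $Q$ for a different purpose: it proves that $Q$ is acyclic (this is the role of the two rules, via a case analysis on shortest cycles), and then takes \emph{any} linear extension $L_Q$ of $Q$. The elements are re-indexed along $L_Q$, a monotone ``left-endpoint'' function $f$ is computed from $P$ in this new order, and $S$ is defined directly by the proper interval representation $[f(i)+\tfrac{i-1}{n},\,i]$. Because $S$ is given by a proper interval representation, it is a semiorder automatically, so your step~(iii) never has to be fought; and the properties of $f$ (Claim~\ref{claim:f}) directly give $P\subseteq S$ and $S\cap(L-P)=\emptyset$. Crucially, this $S$ is in general much larger than the transitive closure of $Q$ (it contains every pair $(v_i,v_j)$ with $i\le f(j)$), and the intervals are laid out in the order $L_Q$, not in the order $L$, as your proposal states. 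Even if you restricted $R_1\cup R_2$ to genuine $L$-reversals, you would still have to establish that the transitive closure of the resulting relation is a semiorder---a substantial claim you only sketch---and it would not reproduce the paper's $S$; the interval-representation construction is precisely what makes that verification unnecessary.
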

\begin{proof}
In this proof, we denote partial orders 
as a set of ordered pairs of elements. For example, 
we write $(u, v) \in P$ if $u \prec v$ in $P$, 
and write $(u, v) \in L - P$ if $u \parallel v$ in $P$ and $u \prec v$ in $L$. 
\par
We define the following two binary relations on $V$. 
\begin{align*}
R_1 & = \{(x, y) \in V^2 \colon\ \text{there is an element $z \in V$ with $(x, z) \in P$ and $(y, z) \in L - P$}\}; 
\\
R_2 & = \{(x, y) \in V^2 \colon\ \text{there is an element $z \in V$ with $(z, x) \in L - P$ and $(z, y) \in P$}\}. 
\end{align*}
Let $Q = P \cup R_1 \cup R_2$. 
Note that the relation $Q$ is not necessarily transitive. 
See a partial order in Fig.~\ref{figs:forbidden}\subref{figs:intransitivity}. 
We have $(x, y) \in R_2$ and $(y, z) \in R_1$, 
but $(x, z) \notin P \cup R_1 \cup R_2$. 

\begin{claim}\label{claim:Q}
There is a linear order $L_Q$ on $V$ such that 
$x \prec y$ in $L_Q$ whenever $(x, y) \in Q$. 
\end{claim}
\begin{proof}[Proof of Claim~\ref{claim:Q}]
We say that 
a sequence of distinct elements $(v_0, v_1, \ldots, v_{k-1})$ of $V$ 
is a \emph{cycle} of $Q$ if $(v_i, v_{i+1}) \in Q$ 
for every index $i$ with $0 \leq i < k$ (indices are modulo $k$). 
The \emph{length} of the cycle is the number $k$. 
To prove the claim, 
we show by a case analysis that $Q$ has no cycles. 
\par
Suppose that $Q$ has a cycle of length 2. 
We first suppose $(v_0, v_1) \in R_1$. 
Then there is an element $u \in V$ 
with $(v_0, u) \in P$ and $(v_1, u) \in L - P$. 
If $(v_1, v_0) \in P$ then 
$(v_0, u) \in P$ implies $(v_1, u) \in P$, a contradiction. 
If $(v_1, v_0) \in R_1$ then 
there is an element $w \in V$ with 
$(v_1, w) \in P$ and $(v_0, w) \in L - S$. 
Thus $L$ has a forbidden configuration for $\mathbf{2+2}$, a contradiction. 
If $(v_1, v_0) \in R_2$ then 
there is an elements $w \in V$ with 
$(w, v_0) \in P$ and $(w, v_1) \in L - S$. 
Thus $L$ has a forbidden configuration for $\mathbf{3+1}$, a contradiction. 
Therefore, $(v_0, v_1) \notin R_1$. 
A similar argument would show $(v_0, v_1) \notin R_2$. 
We finally suppose $(v_0, v_1) \in P$. 
Trivially, $(v_1, v_0) \notin P$. 
We also have $(v_1, v_0) \notin R_1$ and $(v_1, v_0) \notin R_2$ 
by symmetric arguments. Thus $(v_0, v_1) \notin P$. 
Therefore, $Q$ has no cycles of length 2. 
\par
Suppose that $Q$ has a cycle of length greater than 2. 
Let $(v_0, v_1, \ldots, v_{k-1})$ be such a cycle with minimal length, 
that is, there are no relations $(v_i, v_j) \in Q$ with $j \neq i + 1$. 
Suppose further that 
there is an index $i$ with $(v_i, v_{i+1}) \in P$. 
If $(v_{i+1}, v_{i+2}) \in P$ then 
$(v_i, v_{i+2}) \in P$, a contradiction. 
Suppose $(v_{i+1}, v_{i+2}) \in R_1$. 
Then there is an element $u \in V$ 
with $(v_{i+1}, u) \in P$ and $(v_{i+2}, u) \in L - P$. 
We have $(v_i, u) \in P$ from $(v_i, v_{i+1}), (v_{i+1}, u) \in P$. 
Now $(v_i, u) \in P$ and $(v_{i+2}, u) \in L - P$ 
imply $(v_i, v_{i+2}) \in R_1$, a contradiction. 
Suppose $(v_{i+1}, v_{i+2}) \in R_2$. 
Then there is an element $u \in V$ 
with $(u, v_{i+1}) \in L - P$ and $(u, v_{i+2}) \in P$. 
We have that $v_i \parallel v_{i+2}$ in $P$; otherwise, 
$(v_i, v_{i+2}) \in Q$ or $(v_{i+2}, v_i) \in Q$, 
a contradiction. 
Thus the elements $u, v_{i+2}, v_i, v_{i+1}$ induce $\mathbf{2+2}$. 
Since $L$ fulfills the $\mathbf{2+2}$ rule and $(u, v_{i+1}) \in L - P$, 
we have $(u, v_i) \in L - P$. 
Now $(u, v_i) \in L - P$ and $(u, v_{i+2}) \in P$ imply 
$(v_i, v_{i+2}) \in R_2$, a contradiction. 
Therefore, there is no index $i$ with $(v_i, v_{i+1}) \in P$. 
\par
Suppose there is an index $i$ with 
$(v_i, v_{i+1}), (v_{i+1}, v_{i+2}) \in R_1$. 
Then there are two elements $u_0, u_1 \in V$ with 
$(v_i, u_0), (v_{i+1}, u_1) \in P$ and 
$(v_{i+1}, u_0), (v_{i+2}, u_1) \in L - S$. 
If $(v_i, u_1) \in P$ then $(v_{i+2}, u_1) \in L - P$ implies 
$(v_i, v_{i+2}) \in R_1$, a contradiction. 
If $(v_i, u_1) \in L - P$ then the elements $v_i, u_0, v_{i+1}, u_1$ 
induce a forbidden configuration for $\mathbf{2+2}$, a contradiction. 
Thus $(u_1, v_i) \in L$, and 
we have $(v_{i+2}, u_0) \in L$ from $(v_{i+2}, u_1), (u_1, v_i), (v_i, u_0) \in L$. 
If $(v_{i+2}, u_0) \in P$ then 
the elements $v_{i+2}, u_0, v_{i+1}, u_1$ 
induce a forbidden configuration for $\mathbf{2+2}$, a contradiction. 
If $(v_{i+2}, u_0) \in L - P$ then $(v_i, u_0) \in P$ implies 
$(v_i, v_{i+2}) \in R_1$, a contradiction. 
Therefore, there is no index $i$ with 
$(v_i, v_{i+1}), (v_{i+1}, v_{i+2}) \in R_1$. 
A similar argument would show that there is no index $i$ with 
$(v_i, v_{i+1}), (v_{i+1}, v_{i+2}) \in R_2$. 
\par
Now, there is an index $i$ with $(v_i, v_{i+1}) \in R_1$; 
hence $(v_{i+1}, v_{i+2}) \in R_2$ and $(v_{i+2}, v_{i+3}) \in R_1$. 
Note that the vertices $v_i, v_{i+1}, v_{i+2}$ are distinct by assumption, 
but $v_{i+3}$ could be identical to $v_1$. 
Then there are three elements $u_0, u_1, u_2 \in V$ with 
$(v_i, u_0), (u_1, v_{i+2}), (v_{i+2}, u_2) \in P$ and 
$(v_{i+1}, u_0), (u_1, v_{i+1}), (v_{i+3}, u_2) \in L - P$. 
We have $u_0 \neq u_2$; otherwise, 
the elements $u_1, v_{i+2}, u_2=u_0, v_{i+1}$ 
induce a forbidden configuration for $\mathbf{3+1}$, a contradiction. 
If $(v_{i+2}, u_0) \in P$ then the elements $u_1, v_{i+2}, u_0, v_{i+1}$ 
induce a forbidden configuration for $\mathbf{3+1}$, a contradiction. 
If $(v_{i+2}, u_0) \in L - P$ then $(v_i, u_0) \in P$ implies 
$(v_i, v_{i+2}) \in R_1$, a contradiction. 
Thus $(u_0, v_{i+2}) \in L$, and 
we have $(v_{i+1}, u_2) \in L$ from $(v_{i+1}, u_0), (u_0, v_{i+2}), (v_{i+2}, u_2) \in L$. 
If $(v_{i+1}, u_2) \in P$ then $(v_{i+3}, u_2) \in L - P$ implies 
$(v_{i+1}, v_{i+3}) \in R_1$, a contradiction. 
If $(v_{i+1}, u_2) \in L - P$ then the elements $u_1, v_{i+2}, u_2, v_{i+1}$ 
induce a forbidden configuration for $\mathbf{3+1}$, a contradiction. 
\par
Therefore, the relation $Q$ has no cycles, and thus the claim holds. 
\end{proof}

Assume that the elements $v_1, v_2, \ldots, v_n$ of $V$ are 
indexed so that $i < j$ if $v_i \prec v_j$ in $L_Q$. 
\begin{claim}\label{claim:L_Q}
The elements $v_1, v_2, \ldots, v_n$ 
satisfies the following properties: 
\begin{enumerate}[(a)]
\item 
There are no three indices $i, j, k$ 
with $i < j < k$ such that 
$(v_i, v_k) \in L - P$ and $(v_j, v_k) \in P$. 
\item 
There are no three indices $i, j, k$ 
with $i < j < k$ such that 
$(v_i, v_k) \in L - P$ and $(v_i, v_j) \in P$. 
\item 
There are no four indices $i, j, k, h$ 
with $i < j < k < h$ such that 
$(v_i, v_h) \in L - P$ and $(v_j, v_k) \in P$. 
\end{enumerate}
\end{claim}
\begin{proof}[Proof of Claim~\ref{claim:L_Q}]
Suppose there are three indices $i, j, k$ 
with $i < j < k$ such that 
$(v_i, v_k) \in L - P$ and $(v_j, v_k) \in P$. 
Then $(v_j, v_i) \in R_1$, 
which implies $v_j \prec v_i$ in $L_Q$, 
a contradiction. 
Thus the property (a) holds. 
Suppose there are three indices $i, j, k$ 
with $i < j < k$ such that 
$(v_i, v_k) \in L - P$ and $(v_i, v_j) \in P$. 
Then $(v_k, v_j) \in R_2$, 
which implies $v_k \prec v_j$ in $L_Q$, 
a contradiction. 
Thus the property (b) holds. 
\par
Suppose there are four indices $i, j, k, h$ 
with $i < j < k < h$ such that 
$(v_i, v_h) \in L - P$ and $(v_j, v_k) \in P$. 
If $(v_i, v_k) \in P$ then the indices $i, k, h$ satisfy 
$i < k < h$ with 
$(v_i, v_h) \in L - P$ and $(v_i, v_k) \in P$, 
contradicting the property (b). 
If $(v_i, v_k) \in L - P$ then the indices $i, j, k$ satisfy 
$i < j < k$ with $(v_i, v_k) \in L - P$ and $(v_j, v_k) \in P$, 
contradicting the property (a). 
Thus $(v_k, v_i) \in L$. 
A similar argument would show $(v_h, v_j) \in L$. 
Then 
$(v_i, v_h), (v_h, v_j), (v_j, v_k), (v_k, v_i) \in L$, 
a contradiction. 
Thus the property (c) holds. 
\end{proof}

We define a function $f: \{1, 2, \ldots, n\} \to \mathbb{N}$ 
recursively as follows. 
For the base case, we set $f(1) = 0$; 
for each index $j$ with $1 < j \leq n$, we set 
\[
f(j) = \begin{cases}
    \max\{i \colon\ \text{$f(j-1) < i$ and $(v_i, v_j) \in P$}\} & 
        \text{if there is an index $i$ with $f(j-1) < i$ and $(v_i, v_j) \in P$, } \\
    f(j-1) &  \text{otherwise. }
  \end{cases}
\]

\begin{claim}\label{claim:f}
The function $f$ satisfies the following properties: 
\begin{enumerate}[(a)]
\item $0 \leq f(i) < i$. 
\item If $i < j$ then $f(i) \leq f(j)$. 
\item If $(v_i, v_j) \in P$ then $i \leq f(j)$. 
\item If $(v_i, v_j) \in L - P$ then $f(j) < i$. 
\end{enumerate}
\end{claim}
\begin{proof}[Proof of Claim~\ref{claim:f}]
Trivially, $f$ satisfies the properties (a)--(c). 
Suppose there are two indices $i$ and $j$ with 
$(v_i, v_j) \in L - P$ and $i \leq f(j)$. 
If $(v_{f(j)}, v_j) \in P$ then $i \neq f(j)$, 
that is, $i < f(j)$. 
Hence the indices $i, f(j), j$ satisfy $i < f(j) < j$ with 
$(v_i, v_j) \in L - P$ and $(v_{f(j)}, v_j) \in P$, 
contradicting the property (a) of Claim~\ref{claim:L_Q}. 
If $(v_{f(j)}, v_j) \notin P$ then 
from the definition of $f$, there is an index $k$ 
with $k < j$ such that $f(k) = f(j)$ and $(v_{f(k)}, v_k) \in P$. 
Then $i \leq f(j) = f(k) < k < j$ with 
$(v_i, v_j) \in L - P$ and $(v_{f(k)}, v_k) \in P$, 
contradicting the property (b) or (c) of Claim~\ref{claim:L_Q}. 
Therefore, $f$ satisfies the property (d). 
\end{proof}

Let $\mathcal{I} = \{[f(i) + \frac{i-1}{n}, i] \colon\ 1 \leq i \leq n\}$ 
be the set of intervals on the real line. 
The property (a) of Claim~\ref{claim:f} ensures that 
intervals of $\mathcal{I}$ are well-defined. 
The property (b) of Claim~\ref{claim:f} implies that 
no interval of $\mathcal{I}$ properly contains another. 
Let $S$ be the semiorder 
obtained from the proper interval representation $\mathcal{I}$. 
The properties (c) and (d) of Claim~\ref{claim:f} imply, respectively, that 
$v_i \prec v_j$ in $S$ if $(v_i, v_j) \in P$ and 
$v_i \not\prec v_j$ in $S$ if $(v_i, v_j) \in L - P$. 
Therefore, $L \cap S = P$. 

The relation $Q$ is obtained in $O(n^3)$ time from $P$ and $L$. 
The function $f$ and the representation $\mathcal{I}$ of $S$ are 
obtained in $O(n^2)$ time from $L_Q$. 
Thus $S$ and $\mathcal{I}$ can be obtained from $P$ and $L$ in $O(n^3)$ time. 
\end{proof}

In Fig.~\ref{figs:S}, we illustrate the construction of the semiorder 
in the proof of Lemma~\ref{lemma:leftarrow}. 
Consider a partial order $P$ in Fig.~\ref{figs:S}\subref{figs:S:chevron}. 
Figure~\ref{figs:chevron} shows that $P$ is a linear-semiorder. 
Let $L$ be the linear order with 
$a \prec b \prec c \prec d \prec e \prec f$. 
We can observe that $L$ fulfills 
the $\mathbf{2+2}$ and $\mathbf{3+1}$ rules. 
We have 
$R_1 \cup R_2 = \{(a, c), (c, b), (c, e), (e, b), (e, d), (f, d)\}$. 
Thus there are two linear orders that 
contain all ordered pairs of $Q = P \cup R_1 \cup R_2$, that is, 
a linear order with 
$a \prec c \prec e \prec b \prec f \prec d$ and 
a linear order with 
$a \prec c \prec e \prec f \prec b \prec d$. 
We choose the latter as $L_Q$. 
We obtain the six intervals 
$I(a) = [0, 1]$, 
$I(c) = [\frac{1}{6}, 2]$, 
$I(e) = [1+\frac{2}{6}, 3]$, 
$I(f) = [3+\frac{3}{6}, 4]$, 
$I(b) = [3+\frac{4}{6}, 5]$, and 
$I(d) = [5+\frac{5}{6}, 6]$ 
(Fig.~\ref{figs:S}\subref{figs:S:representation}). 
We can verify that no interval properly contains another. 
The semiorder $S$ defined by the intervals is 
in Fig.~\ref{figs:S}\subref{figs:S:order}, 
and we can observe $L \cap S = P$. 

\begin{figure}[t]
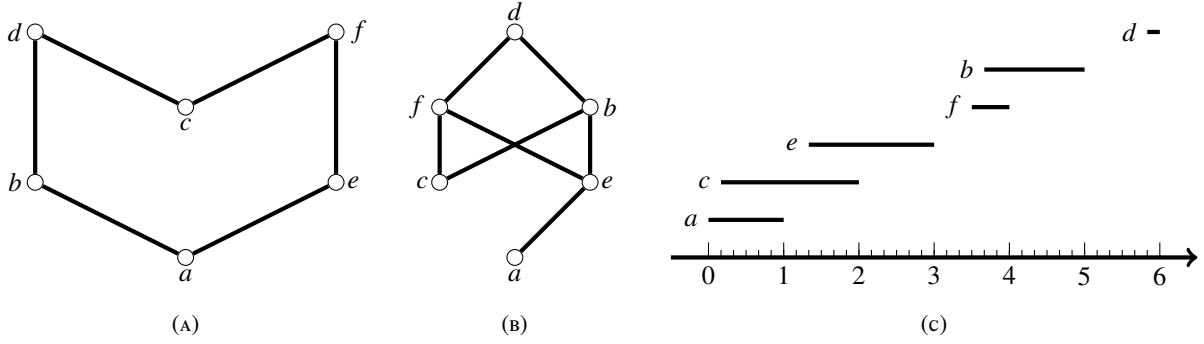

  \centering
  \subcaptionbox{\label{figs:S:chevron}}{\input{figs/chevron/order.tex}}
  \subcaptionbox{\label{figs:S:order}}{\input{figs/S/order}}
  \subcaptionbox{\label{figs:S:representation}}{\input{figs/S/representation}}
  \caption{
    (a) A linear-semiorder $P$. 
    (b) The semiorder $S$. 
    (c) The proper interval representation of $S$. 
    }
  \label{figs:S}
\end{figure}

\section{Recognition algorithm}\label{sec:recog}
In this section, we show a recognition algorithm for linear-semiorders. 
The following is our main result. 

\begin{Theorem}\label{theorem:recognition:main}
A linear-semiorder can be recognized in $O(n \bar{m})$ time. 
A linear extension fulfilling the $\mathbf{2+2}$ and $\mathbf{3+1}$ rules 
can be obtained in $O(m \bar{m}^2)$ time. 
Here, $n$, $m$, and $\bar{m}$ are the number of 
%vertices, edges, and non-edges of the given order, respectively. 
elements, comparable pairs, and incomparable pairs of the given order, respectively. 
\end{Theorem}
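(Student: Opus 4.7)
By Theorem~\ref{theorem:characterization}, the recognition task reduces to deciding whether $P$ admits a linear extension fulfilling the $\mathbf{2+2}$ and $\mathbf{3+1}$ rules. My plan is to encode this into a single comparability-graph recognition problem on an auxiliary graph $G^{*}$ on the vertex set $V$, and then invoke Proposition~\ref{prop:obtain transitive orientation}. The orientation information that each rule carries is naturally implicational: for each induced $\mathbf{2+2}$ $(x,y,z,w)$, the orientations $x \prec w$ and $z \prec y$ in $L$ cannot both occur; for each induced $\mathbf{3+1}$ $(x,y,z,w)$, the orientations $x \prec w$ and $w \prec z$ cannot both occur. So we are looking for an acyclic orientation of the incomparability graph, extending $P$, that is compatible with these implications.

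The edges of $G^{*}$ are the union of the comparable pairs of $P$ and a carefully chosen subset of incomparable pairs that encode the orientations forced by the two rules. The selection of these edges is modeled on the relations $R_1$ and $R_2$ in the proof of Lemma~\ref{lemma:leftarrow}: whenever an incomparable pair $\{u,v\}$ must be oriented the same way in every valid linear extension, because the other orientation together with $P$ would create a $\mathbf{2+2}$ or $\mathbf{3+1}$ violation, the pair is added to $G^{*}$ as a new forced edge. The central claim to establish is that $P$ is a linear-semiorder if and only if $G^{*}$ admits a transitive orientation extending the given orientation of $P$. One direction follows closely from Claim~\ref{claim:Q}; the other direction is handled by a case analysis similar to Claims~\ref{claim:L_Q} and~\ref{claim:f}, showing that any such transitive orientation induces a linear extension that avoids both forbidden configurations.

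Given this reduction, recognition is obtained by building $G^{*}$ and running Proposition~\ref{prop:obtain transitive orientation} on it; both steps should fit within $O(n\bar{m})$ time, provided the number of non-edges of $G^{*}$ remains $O(\bar{m})$. To obtain a linear extension fulfilling both rules, I would compute a transitive orientation of $G^{*}$ and then orient the remaining incomparable pairs one at a time, each time propagating the implications of the two rules. There are $O(\bar{m})$ decisions, each triggering propagation that may examine up to $O(\bar{m})$ further pairs and $O(m)$ comparabilities to re-verify transitivity, which yields the $O(m\bar{m}^2)$ bound.

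The main obstacle will be handling the $\Theta(n^3)$ potentially many induced $\mathbf{3+1}$ configurations efficiently. A key technical ingredient is showing that only $O(n\bar{m})$ of them yield genuinely new forced orientations when implications are propagated; the rest are redundant in the sense that their constraints are already implied by $\mathbf{2+2}$-forcings or by shorter chains already captured in $G^{*}$. Integrating the $\mathbf{3+1}$ rule into the framework developed in the author's prior work on linear-interval orders (where only the $\mathbf{2+2}$ rule appears) without sacrificing the $O(n\bar{m})$ recognition bound is the crucial new step.
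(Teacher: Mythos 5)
Your central reduction is not well-founded, and the gap is precisely the hard part of the proof. You propose building an auxiliary graph $G^{*}$ consisting of $P$ together with the incomparable pairs ``forced'' by the two rules, claiming that $P$ is a linear-semiorder if and only if $G^{*}$ admits a transitive orientation extending $P$. Two issues undermine this. First, the relations $R_1$ and $R_2$ in Lemma~\ref{lemma:leftarrow} are defined \emph{relative to a given linear extension} $L$ (each membership test refers to $L - P$); they are used to build a compatible semiorder once $L$ is already in hand, not to determine which pairs are forced a priori. Reading them as intrinsic, $L$-independent forcings is a misinterpretation, and without that you have no definition of $G^{*}$. Second, and more fundamentally, the constraints coming from $\mathbf{2+2}$ and $\mathbf{3+1}$ are \emph{implicational}: a forbidden $\mathbf{2+2}$ $(x,y,z,w)$ says that $(x,w)$ and $(z,y)$ cannot both be oriented forward, i.e., orienting one forward forces the other backward, but in isolation neither pair is forced. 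A single ``forced-edge'' graph cannot encode such binary implications; this is exactly why the paper's recognition algorithm (Lemma~\ref{lemma:recognition:2}) encodes the constraints as a 2-CNF formula over Boolean variables attached to edges of $E_o$ and solves the resulting 2-SAT instance.

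There is a second gap that your sketch does not confront. Even after an orientation $F$ of the relevant incomparable pairs is found such that $P+F$ avoids all $\mathbf{2+2}$, $\mathbf{3+1}$, and $\mathbf{2+1}$ violations, it is not obvious that $P+F$ is (or can be repaired to be) acyclic. Acyclicity is exactly what makes $P+F$ extendable to a linear order, and establishing it is the bulk of the paper's work: Lemma~\ref{lemma:recognition:3} characterizes cycles via regular and skewed obstructions through a lengthy case analysis on critical edges, and Lemma~\ref{lemma:recognition:4} shows how to destroy all obstructions by systematically reversing fronts, which is where the $O(m\bar m^2)$ bound arises. Your proposal handles this with a hand-wave about ``orient the remaining incomparable pairs one at a time, each time propagating the implications,'' but greedy propagation of 2-SAT-style implications can deadlock or produce a cyclic orientation without a global argument; nothing in Claims~\ref{claim:L_Q} or~\ref{claim:f} supplies that argument, since those claims assume $L$ is already a linear extension satisfying both rules. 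In short, the 2-SAT encoding and the obstruction-removal machinery are the two missing ideas, and your proposal does not supply substitutes for either.
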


Let $P$ be a partial order on a set $V$, and 
let $G$ be the comparability graph of $P$. 
In this section, we consider 
$V$ as the vertex set of $G$ and 
$P$ as the transitive orientation of $G$. 
That is, we consider $P$ as the set of directed edges 
such that for any three vertices $u, v, w$ of $G$, 
if $(u, v), (v, w) \in P$ then $(u, w) \in P$. 
Note $n = |V|$ and $m = |E(G)|$. 

Let $\overline{G}$ be the complement of $G$. 
Recall that $\overline{G}$ is the incomparability graph of $P$, 
and hence $\bar{m} = |E(\overline{G})|$. 
Let $F$ be an orientation of $\overline{G}$, and 
let $P + F$ denote the union of directed edges of $P$ and $F$. 
Note that $P + F$ is the orientation of the complete graph on $V$. 
Thus if $P + F$ is acyclic 
then $P + F$ can be viewed as the linear order on $V$.

% Remarks (or alternative definition) for forbidden configurations 
A \emph{forbidden configuration for $\mathbf{2+2}$} 
is a sequence of vertices $(x, y, z, w)$ 
with $(x, y), (z, w) \in P$ and $(z, y), (x, w) \in F$ 
(Fig.~\ref{figs:forbidden}\subref{figs:forbidden:2+2}). 
A \emph{forbidden configuration for $\mathbf{3+1}$} 
is a sequence of vertices $(x, y, z, w)$ 
with $(x, y), (y, z) \in P$ and $(w, z), (x, w) \in F$ 
(Fig.~\ref{figs:forbidden}\subref{figs:forbidden:3+1}). 
The following can be obtained from Theorem~\ref{theorem:characterization}. 
\begin{Proposition}\label{prop:consequence of characterization}
A partial order $P$ on a set $V$ is a linear-semiorder if and only if 
there is an orientation $F$ of the incomparability graph of $P$ 
such that $P + F$ is acyclic and 
contains no forbidden configurations for $\mathbf{2+2}$ or $\mathbf{3+1}$. 
\end{Proposition}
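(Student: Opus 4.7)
The plan is to derive this proposition as a direct translation of Theorem~\ref{theorem:characterization} from the language of linear extensions into the language of orientations of the incomparability graph. The key observation is that there is a canonical bijection between linear extensions $L$ of $P$ and orientations $F$ of the incomparability graph $\overline{G}$ such that $P + F$ is acyclic. Indeed, any linear extension $L$ of $P$ orients every pair of distinct elements; restricting this orientation to the incomparable pairs of $P$ yields an orientation $F$ of $\overline{G}$ with $L = P + F$, which is acyclic because $L$ is a linear order. Conversely, given an acyclic $P + F$, note that $P + F$ is an orientation of the complete graph on $V$, that is, a tournament; an acyclic tournament is automatically transitive and hence a linear order, which contains $P$ and is therefore a linear extension of $P$.

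Next I would match up the two notions of ``forbidden configuration.'' In an induced $\mathbf{2+2}$ on $x, y, z, w$ (with $x \prec y$, $z \prec w$ in $P$, and the other four pairs incomparable in $P$), the four incomparable pairs $xz$, $xw$, $yz$, $yw$ must be edges of $\overline{G}$, and so they are oriented by $F$. The configuration in Section~\ref{sec:charac} specifies $x \prec w$ and $z \prec y$ in $L$, which translates to $(x, w), (z, y) \in F$; this is exactly the definition of a forbidden configuration for $\mathbf{2+2}$ in terms of $F$. The $\mathbf{3+1}$ case is analogous: the pairs $xw$ and $wz$ (incomparable in $P$) are oriented by $F$, and the condition $x \prec w \prec z$ in $L$ becomes $(x, w), (w, z) \in F$, matching the definition given in this section.

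With this dictionary, the statement of Theorem~\ref{theorem:characterization} says precisely: $P$ is a linear-semiorder if and only if there exists an orientation $F$ of $\overline{G}$ such that $P + F$ is acyclic (equivalently, a linear extension) and contains no forbidden configurations for $\mathbf{2+2}$ or $\mathbf{3+1}$. This is the proposition. I do not anticipate any real obstacle, since the proof is just a change of notation; the only thing to be careful about is checking that the edges involved in the forbidden patterns lie in $\overline{G}$ (so that they are in the domain of $F$), which is built into the definitions of $\mathbf{2+2}$ and $\mathbf{3+1}$.
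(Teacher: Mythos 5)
Your proposal is correct and is exactly the argument the paper has in mind: the paper gives no explicit proof, stating only that the proposition ``can be obtained from Theorem~\ref{theorem:characterization},'' and your translation via the bijection between linear extensions of $P$ and acyclic orientations $P+F$ of the complete graph, together with the one-to-one matching of the two notions of forbidden configuration, is precisely the intended justification.
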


Let $C_4$ denote an induced cycle of length 4, and 
let $K_{1, 3}$ denote a claw. 
%
% Definition of $E_o$ (the edges we will orient) 
We denote by $E_o$ the set of edges 
of $C_4$ or $K_{1, 3}$ of $\overline{G}$. 
Note $E_o \subseteq E(\overline{G})$ by definition. 
The proposed algorithm orients the edges of $E_o$, 
instead of the edges of $\overline{G}$. 
In the following, we say that 
even if $F$ is an orientation of not all edges of $\overline{G}$, 
a sequence of vertices $(x, y, z, w)$ is 
a forbidden configuration for $\mathbf{2+2}$ if 
$(x, y), (z, w) \in P$ and $(z, y), (x, w) \in F$. 
Similarly, we say that $(x, y, z, w)$ is 
a forbidden configuration for $\mathbf{3+1}$ if 
$(x, y), (y, z) \in P$ and $(w, z), (x, w) \in F$. 
\begin{Lemma}\label{lemma:recognition:1}
A partial order $P$ on a set $V$ is a linear-semiorder if and only if 
there is an orientation $F$ of $E_o$ such that 
$P + F$ is acyclic and 
contains no forbidden configurations for $\mathbf{2+2}$ or $\mathbf{3+1}$. 
If the orientation $F$ exists, 
every linear extension of $P + F$ fulfills 
the $\mathbf{2+2}$ and $\mathbf{3+1}$ rules. 
\end{Lemma}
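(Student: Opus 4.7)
The plan is to derive this lemma from Proposition~\ref{prop:consequence of characterization} and Theorem~\ref{theorem:characterization}, exploiting the observation that the incomparability edges involved in a forbidden configuration always induce either a $C_4$ or a $K_{1,3}$ in $\overline{G}$.

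For the ``only if'' direction, I would start from an orientation $F'$ of the full incomparability graph furnished by Proposition~\ref{prop:consequence of characterization} and simply restrict it to $E_o$, setting $F = F' \cap E_o$. Since $P + F \subseteq P + F'$, both acyclicity and the absence of forbidden configurations pass to $P + F$: any cycle or forbidden configuration in $P + F$ would already be present in $P + F'$.

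For the ``if'' direction I would prove the stronger second claim at the same time, namely that every linear extension $L$ of $P + F$ fulfills both the $\mathbf{2+2}$ and $\mathbf{3+1}$ rules. Such an $L$ exists because $P + F$ is acyclic, and once the two rules are verified, $L$ is a linear extension of $P$ (since $P \subseteq P + F$), so Theorem~\ref{theorem:characterization} immediately gives that $P$ is a linear-semiorder. To verify the $\mathbf{2+2}$ rule, suppose for contradiction that $L$ carries a forbidden configuration $(x,y,z,w)$. As noted in Section~\ref{sec:charac}, besides $x \parallel w$ and $z \parallel y$ we also have $x \parallel z$ and $y \parallel w$ in $P$, so the subgraph of $\overline{G}$ induced on $\{x,y,z,w\}$ consists of exactly the edges $xz, zy, yw, wx$ — an induced $C_4$. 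Hence both $xw$ and $zy$ lie in $E_o$ and are oriented by $F$; because $L$ extends $P + F$, these orientations must agree with $L$, giving $(x, w), (z, y) \in F$, which combined with $(x, y), (z, w) \in P$ is a forbidden configuration for $\mathbf{2+2}$ in $P + F$, a contradiction. The $\mathbf{3+1}$ case proceeds identically: the four vertices of a forbidden configuration induce a claw $K_{1, 3}$ centered at $w$ (since $xy, yz, xz$ are comparable pairs in $P$ while $xw, yw, zw$ are incomparabilities), so $xw, wz \in E_o$ are oriented by $F$, agreement with $L$ forces $(x, w), (w, z) \in F$, and with $(x, y), (y, z) \in P$ we again reach a forbidden configuration inside $P + F$.

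The main obstacle — really the only one — is identifying the right subgraphs of $\overline{G}$ that carry forbidden configurations; once one sees that $\mathbf{2+2}$ sits on an induced $C_4$ and $\mathbf{3+1}$ sits on an induced $K_{1, 3}$, everything else is bookkeeping, and the very definition of $E_o$ is tailored to make this argument go through.
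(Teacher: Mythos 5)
Your proposal is correct and takes essentially the same approach as the paper: the necessity is the restriction of the orientation from Proposition~\ref{prop:consequence of characterization} to $E_o$, and the sufficiency takes a linear extension $L$ of $P+F$ and checks that any forbidden configuration in $L$ would already live on an induced $C_4$ or $K_{1,3}$ of $\overline{G}$ and hence be visible in $F$ itself. The only cosmetic difference is that you re-derive inline the facts the paper isolates as Propositions~\ref{prop:for 2+2} and~\ref{prop:for 3+1} (that a $\mathbf{2+2}$ sits on a $C_4$ and a $\mathbf{3+1}$ on a $K_{1,3}$ of $\overline{G}$) rather than citing them.
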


Before proving Lemma~\ref{lemma:recognition:1}, 
we show the following two observations. 
\begin{Proposition}\label{prop:for 2+2}
Let $x$, $y$, $z$, $w$ be four vertices of $V$. 
If $(x, y), (z, w) \in P$ and $xw, yz \in E(\overline{G})$ 
then $xz, yw \in E(\overline{G})$, 
that is, $(x, z, y, w)$ is $C_4$ of $\overline{G}$. 
\end{Proposition}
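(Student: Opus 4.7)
The plan is to argue by contradiction using transitivity of $P$. The hypothesis gives $x \prec y$ and $z \prec w$ in $P$, together with $x \parallel w$ and $y \parallel z$ (encoded by $xw, yz \in E(\overline{G})$). The goal is to show that both $x \parallel z$ and $y \parallel w$ hold in $P$, which is exactly the statement that $xz, yw \in E(\overline{G})$. This is the same observation already recorded parenthetically in the definition of $\mathbf{2+2}$, so the proof should be a direct case analysis.

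First I would handle the pair $\{x, z\}$. Suppose for contradiction that $x$ and $z$ are comparable in $P$. If $x \prec z$, then transitivity with $z \prec w$ yields $x \prec w$, contradicting $x \parallel w$. If $z \prec x$, then transitivity with $x \prec y$ yields $z \prec y$, contradicting $y \parallel z$. Either way we reach a contradiction, so $x \parallel z$ and hence $xz \in E(\overline{G})$.

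Next I would repeat the argument for the pair $\{y, w\}$: if $y \prec w$, then $x \prec y \prec w$ gives $x \prec w$, contradicting $x \parallel w$; if $w \prec y$, then $z \prec w \prec y$ gives $z \prec y$, contradicting $y \parallel z$. Therefore $y \parallel w$ and $yw \in E(\overline{G})$. Combined with the two given non-edges $xw, yz \in E(\overline{G})$ and the comparabilities $xy, zw \in E(G)$, the four vertices induce the cycle $(x, z, y, w)$ in $\overline{G}$, completing the proof.

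I expect no real obstacle here: the argument is a short case analysis that uses only transitivity of $P$, and its content is essentially the same as the remark in the definition of $\mathbf{2+2}$.
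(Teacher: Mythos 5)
Your proof is correct and takes essentially the same approach as the paper's: a short case analysis on the two possible orientations of the pairs $\{x,z\}$ and $\{y,w\}$, each resolved by transitivity against the hypotheses $(x,y),(z,w)\in P$ and $xw,yz\in E(\overline{G})$. The paper dispatches the second pair with ``Similarly''; you spell it out, which is the only difference.
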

\begin{proof}
If $(x, z) \in P$ then $(z, w) \in P$ implies $(x, w) \in P$, contradicting $xw \in E(\overline{G})$. 
If $(z, x) \in P$ then $(x, y) \in P$ implies $(z, y) \in P$, contradicting $yz \in E(\overline{G})$. 
Thus $xz \in E(\overline{G})$. 
Similarly, we have $yw \in E(\overline{G})$. 
\end{proof}
\begin{Proposition}\label{prop:for 3+1}
Let $x$, $y$, $z$, $w$ be four vertices of $V$. 
If $(x, y), (y, z) \in P$ and $xw, zw \in E(\overline{G})$ 
then $yw \in E(\overline{G})$, 
that is, the four vertices induce $K_{1, 3}$ of $\overline{G}$. 
\end{Proposition}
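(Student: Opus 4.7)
The plan is to proceed by contradiction, using transitivity of $P$ to rule out any possible comparability between $y$ and $w$. Since the hypotheses already pin down the relative positions of $x$, $y$, $z$ in $P$ and only leave the status of $y$ versus $w$ open, a straightforward case split on the direction of a putative comparability should suffice.

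Concretely, I would assume that $yw \notin E(\overline{G})$, so that $y$ and $w$ are comparable in $P$, and examine the two possible orientations. If $(y, w) \in P$, then combining this with the hypothesis $(x, y) \in P$ and applying transitivity yields $(x, w) \in P$, contradicting $xw \in E(\overline{G})$. If instead $(w, y) \in P$, then combining with $(y, z) \in P$ and applying transitivity yields $(w, z) \in P$, contradicting $zw \in E(\overline{G})$. Either case is impossible, so $yw \in E(\overline{G})$.

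To finish identifying the induced subgraph as $K_{1,3}$, I would then check the remaining three pairs among $\{x,y,z,w\}$. From $(x,y), (y,z) \in P$ we immediately have $xy, yz \in E(G)$ (so they are not edges of $\overline{G}$), and transitivity of $P$ gives $(x,z) \in P$, hence $xz \in E(G)$ as well. Combined with the three $\overline{G}$-edges $xw$, $yw$, $zw$, this shows that $\overline{G}[\{x,y,z,w\}]$ is precisely the claw $K_{1,3}$ centered at $w$, matching the statement and Proposition~\ref{prop:for 2+2} in spirit.

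There is essentially no main obstacle here: the proposition is a direct consequence of the transitivity of $P$, and the argument mirrors exactly the reasoning already used in Proposition~\ref{prop:for 2+2}. The only thing worth being careful about is making the correspondence between non-edges in $\overline{G}$ and comparabilities in $P$ explicit, so that the reader sees why ``$xw \in E(\overline{G})$'' blocks the transitive conclusion ``$(x,w) \in P$''.
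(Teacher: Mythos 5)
Your proof is correct and follows essentially the same approach as the paper: a case split on the direction of a hypothetical comparability between $y$ and $w$, using transitivity of $P$ together with $(x,y)\in P$ and $(y,z)\in P$ respectively to reach a contradiction with $xw, zw \in E(\overline{G})$. The extra paragraph verifying that $xy$, $yz$, $xz$ are non-edges of $\overline{G}$ is not in the paper's proof but is a harmless and reasonable way of spelling out that the induced subgraph is the claw $K_{1,3}$.
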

\begin{proof}
If $(y, w) \in P$ then $(x, y) \in P$ implies $(x, w) \in P$, contradicting $xw \in E(\overline{G})$. 
If $(w, y) \in P$ then $(y, z) \in P$ implies $(w, z) \in P$, contradicting $zw \in E(\overline{G})$. 
Thus $yw \in E(\overline{G})$. 
\end{proof}

\begin{proof}[Proof of Lemma~\ref{lemma:recognition:1}]
%The first claim is trivial 
The necessity is trivial 
from Proposition~\ref{prop:consequence of characterization}. 
We now show the sufficiency. 
Let $F$ be an orientation of $E_o$ such that 
$P + F$ is acyclic and 
contains no forbidden configurations for $\mathbf{2+2}$ or $\mathbf{3+1}$. 
Let $L$ be a linear extension of $P + F$, and let $F' = L - P$. 
Note that $F'$ is an orientation of $\overline{G}$ and $F' \supseteq F$. 
Suppose that $P + F'$ contains a forbidden configuration for $\mathbf{2+2}$, that is, 
there is a sequence of vertices $(x, y, z, w)$ with 
$(x, y), (z, w) \in P$ and $(z, y), (x, w) \in F'$. 
Since $xw, yz \in E(\overline{G})$, 
Proposition~\ref{prop:for 2+2} implies that 
$(x, z, y, w)$ is $C_4$ of $\overline{G}$. 
Thus $xw, yz \in E_o$. 
Since $F \subseteq F'$, we have $(z, y), (x, w) \in F$, 
contradicting that $P + F$ contains no forbidden configurations for $\mathbf{2+2}$. 
Similarly, when we suppose that 
$P + F'$ contains a forbidden configuration for $\mathbf{3+1}$, 
we have a contradiction from Proposition~\ref{prop:for 3+1}. 
\end{proof}

% Definition of 2+1 and forbidden configuration of 2+1. 
The order $\mathbf{2+1}$ of $P$ is the partial order 
consisting of three elements $x$, $y$, $z$ of $V$ such that 
$x \prec y$ whereas $x \parallel z$ and $y \parallel z$ in $P$. 
%Notice that the incomparability graph of $\mathbf{2+1}$ is $P_3$. 
A \emph{forbidden configuration for $\mathbf{2+1}$} 
is a sequence of vertices $(x, y, z)$ 
with $(x, y) \in P$ and $(y, z), (z, x) \in F$. 

% Importance of forbidden configuration for $\mathbf{2+1}$
If $P + F$ is acyclic then 
$P + F$ contains no forbidden configurations for $\mathbf{2+1}$. 
We show in Theorem~\ref{theorem:recognition:structural result} that 
if $P + F$ contains no forbidden configurations for 
$\mathbf{2+2}$, $\mathbf{3+1}$, or $\mathbf{2+1}$, 
then even if $P + F$ is not acyclic, 
there is an orientation $F'$ of $E_o$ 
such that $P + F'$ is acyclic and 
contains no forbidden configurations for $\mathbf{2+2}$ or $\mathbf{3+1}$. 

\begin{Lemma}\label{lemma:recognition:2}
We can compute in $O(n \bar{m})$ time %We can compute in $O(n^4)$ time 
an orientation $F$ of $E_o$ such that 
$P + F$ contains no forbidden configurations for 
$\mathbf{2+2}$, $\mathbf{3+1}$, or $\mathbf{2+1}$. 
\end{Lemma}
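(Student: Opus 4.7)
The plan is to cast the three avoidance constraints as a 2-SAT instance on the orientations of the edges of $E_o$ and to solve it with the standard strongly-connected-components technique. For each edge $uv \in E_o$ introduce a Boolean variable encoding whether it is oriented $u \to v$ or $v \to u$; each forbidden configuration then translates into a 2-clause forbidding a specific pair of orientation choices and hence into two implication arcs in an implication graph $H$ on the $2|E_o|$ oriented versions of edges of $E_o$.

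The three rules translate as follows. A $\mathbf{2+2}$ configuration $(x,y,z,w)$ with $(x,y), (z,w) \in P$ and non-edges $yz, xw$ contributes the clause ``not both $(z,y) \in F$ and $(x,w) \in F$''; by Proposition~\ref{prop:for 2+2}, the two non-edges lie in an induced $C_4$ of $\overline{G}$, so both are in $E_o$ and the clause is well-defined. The $\mathbf{3+1}$ case is analogous via Proposition~\ref{prop:for 3+1}, giving the clause ``not both $(w,z) \in F$ and $(x,w) \in F$''. A $\mathbf{2+1}$ configuration $(x,y,z)$ with $(x,y) \in P$ and non-edges $yz, xz$ yields ``not both $(y,z) \in F$ and $(z,x) \in F$'', but only when both non-edges already lie in $E_o$, since otherwise they are not being oriented at all.

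After building $H$, I would run the usual SCC-based 2-SAT solver: if no oriented edge lies in the same SCC as its reverse, extract $F$ by processing the SCCs in reverse topological order; otherwise the 2-SAT is infeasible, and the algorithm reports that no orientation of $E_o$ can simultaneously avoid the three patterns. Correctness is immediate from the translation, since satisfying assignments correspond bijectively to orientations of $E_o$ that avoid all three forbidden configurations.

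The main obstacle is achieving the $O(n\bar{m})$ bound for enumerating the configurations, because a naïve sweep over pairs of $P$-edges for $\mathbf{2+2}$ could cost $\Theta(m^2)$. The fix is to pivot on a single non-edge in each rule: for $\mathbf{2+1}$ and $\mathbf{3+1}$, iterate over each non-edge $xz$ (resp.\ $xw$) of $\overline{G}$ and over all $n$ candidate midpoints, reading off the remaining vertex from the structure of $P$; for $\mathbf{2+2}$, iterate over each non-edge $xw$ and, for each $y$ with $(x,y) \in P$, locate the partner non-edge $yz$ with $(z,w) \in P$ in a single $O(n)$ scan. This gives $O(n)$ work per non-edge and $O(n\bar{m})$ in total, which also bounds the number of arcs of $H$ and hence the cost of its SCC decomposition.
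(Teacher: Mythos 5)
Your reduction to 2-SAT is the same high-level strategy as the paper's, and your 2-SAT instance is correct (the satisfiability check and infeasibility handling are fine). However, the claimed $O(n\bar m)$ bound does not follow from the enumeration you describe, and this is not a bookkeeping slip: it reflects a missing idea. The issue is the $\mathbf{2+2}$ clauses. In a forbidden configuration for $\mathbf{2+2}$, the two $F$-edges $(z,y)$ and $(x,w)$ are on \emph{disjoint} vertex pairs; they are opposite sides of a $C_4$ in $\overline{G}$. So the distinct clauses you would emit are indexed by pairs of disjoint non-edges, and there can be $\Theta(\bar m^2)$ of them. (Take $P$ to be two incomparable chains $a_1 \prec \cdots \prec a_{n/2}$ and $b_1 \prec \cdots \prec b_{n/2}$: every $i<j$, $k<l$ gives a distinct clause on the disjoint pair $\{a_jb_k,\, a_ib_l\}$, so $\Theta(n^4) = \Theta(\bar m^2) \gg n\bar m$ clauses.) Your enumeration "for each non-edge $xw$, for each $y$ with $(x,y)\in P$, find $z$ in an $O(n)$ scan" is in fact $O(n^2)$ per non-edge, and even a faster scan cannot help, since it would be asked to output $\Theta(\bar m^2)$ distinct clauses.

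The missing ingredient is the structural equivalence in Propositions~\ref{prop:orientation of 2+2} and~\ref{prop:orientation of 3+1}: once $\mathbf{2+1}$-freedom is imposed, avoiding $\mathbf{2+2}$ (resp.\ $\mathbf{3+1}$) forbidden configurations is \emph{equivalent} to orienting all four edges of every $C_4$ (resp.\ all three edges of every $K_{1,3}$) of $\overline{G}$ consistently. This lets one replace the opposite-edge constraint by a chain of \emph{adjacent}-edge constraints around the cycle or claw. Since adjacent edges of $\overline{G}$ share a vertex, each edge of $E_o$ participates in only $O(n)$ such constraints, so the formula has $O(n\bar m)$ clauses and can be built by scanning the $O(n)$-sized sets $U(u)\cap N_{\overline G}(v)$ etc.\ for each edge $uv\in E_o$, exactly as in Algorithm~\ref{algo:lemma2}. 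Your proof needs this reformulation to make the running time claim correct.
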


We show two observations to prove Lemma~\ref{lemma:recognition:2}. 

\begin{Proposition}\label{prop:orientation of 2+2}
Let $F$ be an orientation of $E_o$ such that 
$P + F$ contains no forbidden configurations for $\mathbf{2+1}$. 
The orientation $P + F$ contains 
no forbidden configurations for $\mathbf{2+2}$ if and only if 
for any $C_4 = (x, y, z, w)$ of $\overline{G}$, either 
$(x, y), (z, y), (z, w), (x, w) \in F$ or 
$(y, x), (y, z), (w, z), (w, x) \in F$. 
\end{Proposition}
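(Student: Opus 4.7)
The plan is to prove the two directions separately. The $(\Leftarrow)$ direction is essentially a direct verification, while the $(\Rightarrow)$ direction proceeds by a symmetry reduction followed by a short case analysis driven by the no-$\mathbf{2+1}$ hypothesis.

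For the $(\Leftarrow)$ direction, I will argue by contradiction: assume $P+F$ contains a forbidden configuration $(x,y,z,w)$ for $\mathbf{2+2}$, so $(x,y),(z,w)\in P$ and $(z,y),(x,w)\in F$. Proposition~\ref{prop:for 2+2} then shows that $(x,z,y,w)$ is an induced $C_4$ of $\overline{G}$. The condition of the present proposition applied to this $C_4$ would, under case (A), require $(y,z)\in F$, contradicting $(z,y)\in F$; and, under case (B), require $(w,x)\in F$, contradicting $(x,w)\in F$. So neither case holds, giving the desired contradiction.

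For the $(\Rightarrow)$ direction, the key observation is that (A) and (B) are precisely the two orientations of a $C_4$ in which every vertex is either a source or a sink; any orientation violating both therefore contains a directed path of length $2$ around the $C_4$. Since the pair of orientations (A) and (B) is collectively invariant under the dihedral symmetries of the $C_4$, I may relabel $C_4=(x,y,z,w)$ so that this $2$-path is $x\to y\to z$, i.e.\ $(x,y),(y,z)\in F$. Now $xz\notin E(\overline{G})$, so $x$ and $z$ are comparable in $P$. If $(z,x)\in P$, then $(z,x,y)$ is a forbidden $\mathbf{2+1}$ outright. Otherwise $(x,z)\in P$, and I split on the direction of the chord $yw$: if $(y,w)\in P$, the triangle on $\{x,y,w\}$ together with $(x,y)\in F$ forces $(x,w)\in F$ (lest $(y,w,x)$ be a $\mathbf{2+1}$), whence $(x,z,y,w)$ is a forbidden $\mathbf{2+2}$; if $(w,y)\in P$, the triangle on $\{y,z,w\}$ together with $(y,z)\in F$ forces $(w,z)\in F$ (lest $(w,y,z)$ be a $\mathbf{2+1}$), whence $(x,z,w,y)$ is a forbidden $\mathbf{2+2}$.

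The main difficulty lies in the $(\Rightarrow)$ direction. The conceptual step is to recognize that ``(A) or (B)'' is equivalent to ``no directed $2$-path around the $C_4$,'' which immediately supplies the starting data $(x,y),(y,z)\in F$. From there, the no-$\mathbf{2+1}$ hypothesis plays an essential role in pinning down the orientation of the remaining edge of the $C_4$, which is exactly what makes the hidden $\mathbf{2+2}$ visible.
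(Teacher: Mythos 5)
Your proof is correct. Both directions check out: in $(\Leftarrow)$ you simply unfold the paper's ``trivial'' observation via Proposition~\ref{prop:for 2+2}, and in $(\Rightarrow)$ the symmetry reduction is legitimate (the pair $\{\text{(A)},\text{(B)}\}$ is indeed preserved by the dihedral action on the labeled $C_4$; rotation by one step swaps them), the reduction to a directed $2$-path is sound (any non-alternating orientation of a $4$-cycle has a vertex of in-degree and out-degree one), and the subsequent case analysis on the $P$-orientations of the two chords correctly produces a forbidden $\mathbf{2+1}$ or $\mathbf{2+2}$ in each case.

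Your route for the necessity differs from the paper's in how the WLOG normalization and the case split are arranged. The paper first normalizes the \emph{diagonals} to $(x,z),(y,w)\in P$ and then splits on the orientation of a single $F$-edge, deducing the remaining three edge orientations by a cascade of three forbidden-configuration arguments (two $\mathbf{2+1}$'s and one $\mathbf{2+2}$). You instead begin by observing that (A) and (B) are exactly the two alternating orientations of the $4$-cycle, so failure of both is equivalent to the presence of a directed $2$-path in $F$ around the $C_4$; you normalize that $2$-path, and then let the orientation of the diagonals in $P$ drive a shorter case split (each branch needs only one $\mathbf{2+1}$ deduction before the $\mathbf{2+2}$ appears). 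The paper's version has the virtue of directly constructing the full pattern (A) or (B); yours is slightly leaner and makes explicit the structural characterization of the condition as ``alternating orientation,'' which is a useful way to remember what the proposition says. Both rest on the same elementary ingredients and are of comparable length.
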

\begin{proof}
The sufficiency is trivial. We now show the necessity. 
Assume without loss of generality $(x, z), (y, w) \in P$. 
Suppose $(x, y) \in F$. 
If $(w, x) \in F$ then $(y, w, x)$ is 
a forbidden configuration for $\mathbf{2+1}$, a contradiction. 
Thus $(x, w) \in F$. 
If $(y, z) \in F$ then $(x, z, y, w)$ is 
a forbidden configuration for $\mathbf{2+2}$, a contradiction. 
Thus $(z, y) \in F$. 
If $(w, z) \in F$ then $(y, w, z)$ is 
a forbidden configuration for $\mathbf{2+1}$, a contradiction. 
Thus $(z, w) \in F$. 
\par
Conversely, suppose $(y, x) \in F$. 
If $(z, y) \in F$ then $(x, z, y)$ is 
a forbidden configuration for $\mathbf{2+1}$, a contradiction. 
Thus $(y, z) \in F$. 
If $(x, w) \in F$ then $(y, w, x, z)$ is 
a forbidden configuration for $\mathbf{2+2}$, a contradiction. 
Thus $(w, x) \in F$. 
If $(z, w) \in F$ then $(x, z, w)$ is 
a forbidden configuration for $\mathbf{2+1}$, a contradiction. 
Thus $(w, z) \in F$. 
\end{proof}

\begin{Proposition}\label{prop:orientation of 3+1}
Let $F$ be an orientation of $E_o$ such that 
$P + F$ contains no forbidden configurations for $\mathbf{2+1}$. 
The orientation $P + F$ contains 
no forbidden configurations for $\mathbf{3+1}$ if and only if 
for any $K_{1, 3}$ of $\overline{G}$ 
consisting of $x, y, z, w \in V$ 
with $xw, yw, zw \in E(\overline{G})$, either 
$(x, w), (y, w), (z, w) \in F$ or 
$(w, x), (w, y), (w, z) \in F$. 
\end{Proposition}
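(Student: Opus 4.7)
The plan is to follow the template of Proposition~\ref{prop:orientation of 2+2}: sufficiency is immediate, and necessity reduces to a short case analysis at the claw center.

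For sufficiency, suppose the orientations at every claw center are consistent in the stated sense, and assume for contradiction that $(x, y, z, w)$ is a forbidden configuration for $\mathbf{3+1}$, i.e., $(x, y), (y, z) \in P$ and $(w, z), (x, w) \in F$. Transitivity of $P$ gives $(x, z) \in P$, so $xy, yz, xz \in E(G)$, while $(x, w), (w, z) \in F$ forces $xw, zw \in E(\overline{G})$. Proposition~\ref{prop:for 3+1} then yields $yw \in E(\overline{G})$, so $\{x, y, z, w\}$ induces $K_{1, 3}$ in $\overline{G}$ centered at $w$, and in particular the edges $xw, yw, zw$ lie in $E_o$ and are oriented by $F$. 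But now $(x, w), (w, z) \in F$ contradicts the assumed consistency of $F$ at $w$.

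For necessity, fix a $K_{1, 3}$ of $\overline{G}$ with center $w$ and leaves $x, y, z$. The three leaves are pairwise adjacent in $G$, and since $P$ is transitive it induces a linear order on $\{x, y, z\}$; relabel so that $x \prec y \prec z$ in $P$. There are eight possible orientations of the three edges $xw, yw, zw$ under $F$, and I would rule out each of the six non-consistent ones. In the two cases where $(x, w) \in F$ and $(w, z) \in F$ (regardless of the orientation of $yw$), the quadruple $(x, y, z, w)$ together with $(x, y), (y, z) \in P$ is a forbidden configuration for $\mathbf{3+1}$, contradicting the hypothesis. Each of the remaining four mixed orientations contains two leaves $u, v$ with $u \prec v$ in $P$ and $(v, w), (w, u) \in F$, yielding a forbidden configuration for $\mathbf{2+1}$ via the chain $(u, v) \in P$; for example, $(w, x), (y, w) \in F$ uses the chain $x \prec y$, and $(z, w), (w, y) \in F$ uses $y \prec z$.

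The main task is simply the bookkeeping of these six cases, so I expect no real obstacle. The conceptual point is that transitivity of $P$ on the triangle $\{x, y, z\}$ produces the length-three chain needed to invoke the $\mathbf{3+1}$ rule on the two patterns where $F$ straddles the endpoints of the chain, while the $\mathbf{2+1}$ hypothesis handles the pairwise orientation incompatibilities between $w$ and the chain in the remaining four patterns.
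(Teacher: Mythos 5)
Your proof is correct and follows essentially the same strategy as the paper's: sufficiency via Proposition~\ref{prop:for 3+1} to recover the $K_{1,3}$ and invoke consistency, and necessity by ruling out the six inconsistent orientations at the claw center using forbidden configurations for $\mathbf{3+1}$ and $\mathbf{2+1}$. The only superficial difference is that the paper deduces the forced orientations of $yw$ and $zw$ sequentially from the orientation of $xw$, whereas you enumerate and eliminate all eight orientation patterns explicitly; the underlying deductions are identical.
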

\begin{proof}
The sufficiency is trivial. We now show the necessity. 
Assume without loss of generality $(x, y), (y, z) \in P$. 
Suppose $(x, w) \in F$. 
If $(w, z) \in F$ then $(x, y, z, w)$ is 
a forbidden configuration for $\mathbf{3+1}$, a contradiction. 
Thus $(z, w) \in F$. 
If $(w, y) \in F$ then $(y, z, w)$ is 
a forbidden configuration for $\mathbf{2+1}$, a contradiction. 
Thus $(y, w) \in F$. 
\par 
Conversely, suppose $(w, x) \in F$. 
If $(y, w) \in F$ then $(x, y, w)$ is 
a forbidden configuration for $\mathbf{2+1}$, a contradiction. 
Thus $(w, y) \in F$. 
If $(z, w) \in F$ then $(y, z, w)$ is 
a forbidden configuration for $\mathbf{2+1}$, a contradiction. 
Thus $(w, z) \in F$. 
\end{proof}

\begin{proof}[Proof of Lemma~\ref{lemma:recognition:2}]
The \emph{neighborhood} of a vertex $v$ of $\overline{G}$ is 
the set $N_{\overline{G}}(v) = \{u \in V \colon\ uv \in E(\overline{G})\}$. 
We define that 
the \emph{upper set} of $v \in V$ is the set $U(v) = \{u \in V \colon\ (v, u) \in P\}$ and 
the \emph{lower set} of $v \in V$ is the set $L(v) = \{u \in V \colon\ (u, v) \in P\}$. 
We first show a series of claims. 
Let $uv$ be an edge of $\overline{G}$. 
\begin{claim}\label{claim:2-1}
If $U(u) \cap N_{\overline{G}}(v) \neq \emptyset$ and $N_{\overline{G}}(u) \cap U(v) \neq \emptyset$, then 
for any two vertices $w \in U(u) \cap N_{\overline{G}}(v)$ and $z \in N_{\overline{G}}(u) \cap U(v)$, 
we have that $(u, v, w, z)$ is $C_4$. 
Conversely, each $C_4$ has an edge $uv$ with 
$U(u) \cap N_{\overline{G}}(v) \neq \emptyset$ and 
$N_{\overline{G}}(u) \cap U(v) \neq \emptyset$. 
\end{claim}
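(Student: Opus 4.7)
The plan is to handle the two directions of the equivalence separately, both by short case analyses that translate the set-memberships into $P$-relations and then appeal to Proposition~\ref{prop:for 2+2}.

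For the forward direction, I would pick arbitrary $w \in U(u) \cap N_{\overline{G}}(v)$ and $z \in N_{\overline{G}}(u) \cap U(v)$. Unpacking the definitions immediately gives $(u,w), (v,z) \in P$ together with $uz, vw \in E(\overline{G})$, while $uv \in E(\overline{G})$ by assumption on the edge. These are exactly the hypotheses of Proposition~\ref{prop:for 2+2} when its four vertices are instantiated as $u$, $w$, $v$, $z$ (in the roles of the Proposition's $x$, $y$, $z$, $w$ respectively). The conclusion of that proposition then delivers $uv, wz \in E(\overline{G})$ and moreover that $(u,v,w,z)$ is a $C_4$ of $\overline{G}$, which is what is required.

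For the converse, let $(a,b,c,d)$ be any $C_4$ of $\overline{G}$. The four cycle edges $ab, bc, cd, da$ lie in $E(\overline{G})$ and so are incomparable pairs in $P$, while the two diagonals $\{a,c\}$ and $\{b,d\}$ are non-edges of $\overline{G}$ and hence $P$-comparable. By the rotational symmetry of the labelling I may assume $a \prec c$ in $P$, and then split on the orientation of the other diagonal. If $b \prec d$, the edge $uv = ab$ satisfies the condition, with witnesses $c \in U(a) \cap N_{\overline{G}}(b)$ and $d \in N_{\overline{G}}(a) \cap U(b)$. If instead $d \prec b$, a mirror argument shows that the edge $uv = ad$ works, with witnesses $c$ and $b$.

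The main obstacle here is essentially bookkeeping rather than genuine difficulty. In the forward direction, the key observation is that $(u,w), (v,z) \in P$ combined with the non-edges $uz$ and $vw$ is precisely the $\mathbf{2+2}$-type configuration already handled by Proposition~\ref{prop:for 2+2}; once this is spotted, the remaining non-edges $uw$ and $vz$ of the purported $C_4$ are immediate from $P$-comparability. In the converse, the only subtle point is recognising that \emph{both} diagonals of a $C_4$ of $\overline{G}$ are $P$-comparable, so that after fixing the orientation of one diagonal one can always locate an edge of the cycle whose two sides supply the required witnesses.
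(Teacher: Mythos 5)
Your proof is correct and follows the paper's approach. The only differences are cosmetic: you invoke Proposition~\ref{prop:for 2+2} directly (with $u,w,v,z$ in the roles of its $x,y,z,w$) where the paper re-derives the same two-step contradiction inline to show $wz \in E(\overline{G})$, and you spell out the converse via the case split on the second diagonal, which the paper simply declares trivial.
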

\begin{proof}[Proof of Claim~\ref{claim:2-1}]
The second claim is trivial. 
We now show the first claim. 
If $(w, z) \in P$ then $(u, w) \in P$ implies $(u, z) \in P$, 
contradicting $uz \in E(\overline{G})$. 
If $(z, w) \in P$ then $(v, z) \in P$ implies $(v, w) \in P$, 
contradicting $vw \in E(\overline{G})$. 
Thus $wz \in E_{\overline{G}}(G)$, that is, $(u, v, w, z)$ is $C_4$. 
\end{proof}

By a similar argument, we have the following. 
\begin{claim}%\label{claim:2-*}
If $L(u) \cap N_{\overline{G}}(v) \neq \emptyset$ and $N_{\overline{G}}(u) \cap L(v) \neq \emptyset$, then 
for any two vertices $w \in L(u) \cap N_{\overline{G}}(v)$ and $z \in N_{\overline{G}}(u) \cap L(v)$, 
we have that $(u, v, w, z)$ is $C_4$. 
Conversely, each $C_4$ has an edge $uv$ with 
$L(u) \cap N_{\overline{G}}(v) \neq \emptyset$ and 
$N_{\overline{G}}(u) \cap L(v) \neq \emptyset$. 
\end{claim}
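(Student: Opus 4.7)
The plan is to mirror the proof of Claim~\ref{claim:2-1}, replacing the upper sets $U(\cdot)$ by the lower sets $L(\cdot)$ throughout. Equivalently, one can apply Claim~\ref{claim:2-1} to the reverse order $P^{-1}$ on $V$: it has the same comparability graph, hence the same $\overline{G}$, and the reversal swaps upper and lower sets while preserving the induced $4$-cycles of $\overline{G}$. I will instead give the direct argument because it is short and parallels the previous claim almost verbatim.

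For the first statement, I would pick any $w \in L(u) \cap N_{\overline{G}}(v)$ and $z \in N_{\overline{G}}(u) \cap L(v)$, so $(w,u), (z,v) \in P$ while $wv, uz \in E(\overline{G})$. Together with the given edge $uv \in E(\overline{G})$, the only thing left to verify to conclude that $(u,v,w,z)$ is an induced $C_4$ of $\overline{G}$ is that $wz \in E(\overline{G})$ (the non-edges $uw$ and $vz$ of $\overline{G}$ are immediate from $(w,u), (z,v) \in P$). I rule out both possible orientations of $wz$ in $P$ by transitivity: if $(w,z) \in P$, composing with $(z,v) \in P$ gives $(w,v) \in P$, contradicting $wv \in E(\overline{G})$; and if $(z,w) \in P$, composing with $(w,u) \in P$ gives $(z,u) \in P$, contradicting $uz \in E(\overline{G})$. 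Hence $wz \in E(\overline{G})$, as required.

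For the converse, I would start from any $C_4 = (a,b,c,d)$ of $\overline{G}$. Its two non-edges $ac$ and $bd$ correspond to comparable pairs of $P$, so after relabelling I may assume $(a,c) \in P$. I then split on the orientation of $bd$ in $P$: if $(b,d) \in P$, the edge $cd$ of the $C_4$ works, because $a \in L(c) \cap N_{\overline{G}}(d)$ and $b \in N_{\overline{G}}(c) \cap L(d)$; if $(d,b) \in P$, the edge $bc$ works by the symmetric verification. Either way some edge of the $C_4$ satisfies the two nonempty-intersection conditions.

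The argument is essentially bookkeeping and poses no real obstacle; the only mild point to double-check is that the four vertices $u,v,w,z$ in the forward direction are genuinely distinct, which is forced by the combination of the strict relations in $P$ (giving $w \neq u$ and $z \neq v$) and the $\overline{G}$-edges (giving all other pairwise inequalities).
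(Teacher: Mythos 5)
Your proof is correct and matches what the paper intends — the paper states this claim with only the remark ``By a similar argument,'' expecting the reader to mirror the proof of Claim~\ref{claim:2-1} with $U(\cdot)$ replaced by $L(\cdot)$, which is exactly what you do. Your alternative observation (apply Claim~\ref{claim:2-1} to $P^{-1}$) is also a valid and clean way to dispatch the claim.
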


The following is trivial. 
\begin{claim}%\label{claim:2-*}
If $U(u) \cap N_{\overline{G}}(v) \neq \emptyset$ and $L(u) \cap N_{\overline{G}}(v) \neq \emptyset$, then 
for any two vertices $w \in U(u) \cap N_{\overline{G}}(v)$ and $z \in L(u) \cap N_{\overline{G}}(v)$, 
the vertices $z, u, w, v$ induce $K_{1, 3}$. 
Conversely, each $K_{1, 3}$ has an edge $uv$ with 
$U(u) \cap N_{\overline{G}}(v) \neq \emptyset$ and $L(u) \cap N_{\overline{G}}(v) \neq \emptyset$. 
\end{claim}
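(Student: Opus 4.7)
The plan is to prove both implications of the claim by unwinding definitions and invoking the transitivity of $P$; the verification is routine, consistent with the text's ``trivial'' annotation.

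For the forward direction, I would take $w \in U(u) \cap N_{\overline{G}}(v)$ and $z \in L(u) \cap N_{\overline{G}}(v)$, and first derive the three $P$-comparabilities among the would-be leaves: $u \prec w$ (from $w \in U(u)$), $z \prec u$ (from $z \in L(u)$), and hence $z \prec w$ by transitivity. These three comparabilities say that $uw, uz, wz \notin E(\overline{G})$. At the same time $uv, vw, vz \in E(\overline{G})$ by the standing hypothesis that $uv$ is an edge of $\overline{G}$ and by $w, z \in N_{\overline{G}}(v)$. The four vertices $\{z, u, w, v\}$ therefore induce in $\overline{G}$ exactly the star centered at $v$ with leaves $z, u, w$; that is, a $K_{1,3}$.

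For the converse, let a $K_{1,3}$ of $\overline{G}$ be given, with center, say, $v$ and leaves $a, b, c$. The three leaves are pairwise non-adjacent in $\overline{G}$, hence pairwise comparable in $P$, and so they form a chain; after relabeling, assume $a \prec b \prec c$ in $P$. I would then select the edge $uv$ of the claw with $u := b$: since $b \prec c$ gives $c \in U(b) = U(u)$ and $cv \in E(\overline{G})$ gives $c \in N_{\overline{G}}(v)$, the intersection $U(u) \cap N_{\overline{G}}(v)$ contains $c$ and is non-empty; symmetrically, $a \in L(u) \cap N_{\overline{G}}(v)$.

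Both halves combine only the definitions of $U$, $L$, $N_{\overline{G}}$ with the transitivity of $P$, and I foresee no genuine obstacle. If a subtle point is lurking, it is the need to pick the \emph{middle} leaf $u = b$ in the converse---but this is forced by the chain structure of $\{a, b, c\}$ in $P$, and the choice is immediate once the leaves are linearly ordered.
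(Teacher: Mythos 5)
Your proof is correct and matches the natural unwinding of definitions that the paper has in mind when it labels the claim "trivial": forward, $u \prec w$, $z \prec u$, and (by transitivity) $z \prec w$ make $z, u, w$ pairwise nonadjacent in $\overline{G}$ while all three are adjacent to $v$; conversely, the three leaves of a claw are pairwise comparable in $P$ and hence form a chain, and choosing the middle leaf as $u$ gives the required nonempty intersections. The only thing left implicit is the (equally routine) distinctness of the four vertices, which follows from irreflexivity of $P$ and the absence of loops.
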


Therefore, we can compute $E_o$ by Algorithm~\ref{algo:E_o}. 
Each iteration on lines 2--23 takes $O(n)$ time for each edge of $\overline{G}$, 
and hence the running time is $O(n\bar{m})$. 

\IncMargin{1em}
\begin{algorithm}[t]
%\caption{Computing the set $E_o$ of edges}\label{algo_disjdecomp}
%\caption{The $O(n\overline{m})$ time algorithm for computing $E_o$}\label{algo_E_o}
\caption{Algorithm for computing $E_o$}\label{algo:E_o}
\LinesNumbered%
\KwIn{A partial order $P$ and the incomparability graph $\overline{G}$ of $P$. }
\KwOut{The set $E_o$ of edges of $\overline{G}$. }
%\KwIn{A partial order $P$ and the incomparability graph $\overline{G}$ of $P$. }
%\KwOut{A set $E_o$ of edges of $C_4$ or $K_{1, 3}$ of $\overline{G}$. }
\BlankLine

$E_o \gets \emptyset$\;
\ForEach{$uv \in E(\overline{G})$}{
%  Compute $U(u) \cap N_{\overline{G}}(v)$, $N_{\overline{G}}(u) \cap U(v)$, 
%          $L(u) \cap N_{\overline{G}}(v)$, and $N_{\overline{G}}(u) \cap L(v)$\;
  \If{$U(u) \cap N_{\overline{G}}(v) \neq \emptyset$ and $N_{\overline{G}}(u) \cap U(v) \neq \emptyset$}{
    $E_o \gets E_o + uv$\;
    \lForEach{$w \in U(u) \cap N_{\overline{G}}(v)$}{$E_o \gets E_o + vw$}
    \lForEach{$z \in N_{\overline{G}}(u) \cap U(v)$}{$E_o \gets E_o + uz$}
  }
  \If{$L(u) \cap N_{\overline{G}}(v) \neq \emptyset$ and $N_{\overline{G}}(u) \cap L(v) \neq \emptyset$}{
    $E_o \gets E_o + uv$\;
    \lForEach{$w \in L(u) \cap N_{\overline{G}}(v)$}{$E_o \gets E_o + vw$}
    \lForEach{$z \in N_{\overline{G}}(u) \cap L(v)$}{$E_o \gets E_o + uz$}
  }
  \If{$U(u) \cap N_{\overline{G}}(v) \neq \emptyset$ and $L(u) \cap N_{\overline{G}}(v) \neq \emptyset$}{
    $E_o \gets E_o + uv$\;
    \lForEach{$w \in U(u) \cap N_{\overline{G}}(v)$}{$E_o \gets E_o + vw$}
    \lForEach{$z \in L(u) \cap N_{\overline{G}}(v)$}{$E_o \gets E_o + vz$}
  }
  \If{$N_{\overline{G}}(u) \cap U(v) \neq \emptyset$ and $N_{\overline{G}}(u) \cap L(v) \neq \emptyset$}{
    $E_o \gets E_o + uv$\;
    \lForEach{$w \in N_{\overline{G}}(u) \cap U(v)$}{$E_o \gets E_o + uw$}
    \lForEach{$z \in N_{\overline{G}}(u) \cap L(v)$}{$E_o \gets E_o + uz$}
  }
}
\KwRet{$E_o$}\;
\end{algorithm}
\DecMargin{1em}

Let $P_3$ denote an induced path of length 2. 
The following is trivial. 
\begin{claim}\label{claim:2-*}
If $U(u) \cap N_{\overline{G}}(v) \neq \emptyset$ then 
for any vertex $w \in U(u) \cap N_{\overline{G}}(v)$, 
we have that $(u, v, w)$ is $P_3$. 
Conversely, each $P_3$ has an edge $uv$ with 
$U(u) \cap N_{\overline{G}}(v) \neq \emptyset$. 
\end{claim}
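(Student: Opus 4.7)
The claim is essentially a direct unpacking of definitions, so my plan is to verify each half by a short definitional check rather than any clever argument. For the forward direction, I fix $uv \in E(\overline{G})$ (the edge being processed by the enclosing loop of Algorithm~\ref{algo:E_o}) and an arbitrary $w \in U(u) \cap N_{\overline{G}}(v)$, then confirm the three defining conditions of $P_3$ in $\overline{G}$. The edge $uv$ is in $E(\overline{G})$ by assumption; the edge $vw$ is in $E(\overline{G})$ because $w \in N_{\overline{G}}(v)$; and $uw \notin E(\overline{G})$ because $w \in U(u)$ means $(u,w) \in P$, so $u$ and $w$ are comparable, hence adjacent in $G$ and not in $\overline{G}$.

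For the converse, I would start from an arbitrary $P_3$ of $\overline{G}$, say $(a, b, c)$, so that $ab, bc \in E(\overline{G})$ and $ac \notin E(\overline{G})$. The last condition means $a$ and $c$ are comparable in $P$, and the argument then splits into two symmetric cases on the direction of this comparability. If $(a, c) \in P$, I set $u := a$, $v := b$, and observe that $c \in U(a)$ and $c \in N_{\overline{G}}(b)$, so $c$ witnesses $U(u) \cap N_{\overline{G}}(v) \neq \emptyset$ for the edge $ab \in E(\overline{G})$. If instead $(c, a) \in P$, I set $u := c$, $v := b$, and then $a$ witnesses $U(u) \cap N_{\overline{G}}(v) \neq \emptyset$ for the edge $cb \in E(\overline{G})$.

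There is no real obstacle here; the only thing one must be careful about is the asymmetric roles of $u$ and $v$ in the hypothesis $U(u) \cap N_{\overline{G}}(v) \neq \emptyset$, so that in the converse the correct endpoint of the $P_3$ is chosen to play the role of $u$ (namely, the one that is below the other in $P$). This matches the pattern of the preceding three claims, which dispatch each of the structures $C_4$ and $K_{1,3}$ via similar definitional witnesses, and it is what Algorithm~\ref{algo:E_o} implicitly relies on when detecting induced structures from each edge $uv$ of $\overline{G}$.
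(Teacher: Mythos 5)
Your proof is correct and matches the paper's implicit approach: the paper simply declares the claim trivial, and you have spelled out the routine definitional unpacking (forward direction checking the three conditions defining $P_3$ in $\overline{G}$, converse direction splitting on the orientation of the unique comparable pair in the $P_3$ and choosing the lower element to play the role of $u$).
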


Now, we show an algorithm to find the orientation of $F$ of $E_o$ 
(Algorithm~\ref{algo:lemma2}). 
We use the 2-satisfiability problem in the algorithm. 
An instance of the 2-satisfiability problem is 
a \emph{2-CNF formula}, 
a Boolean formula in conjunctive normal form 
with at most two literals per clause. 
Assume that the vertices are linearly ordered, and 
we assign a Boolean variable $x_{(u, v)}$ for each edge $uv$ of $\overline{G}$, 
where $u$ precedes $v$ in the ordering. 
We denote the negation of $x_{(u, v)}$ by $x_{(v, u)}$. 
The algorithm first constructs the 2-CNF formula $\varphi$ (lines 1--21), 
then finds a satisfying truth assignment $\tau$ of $\varphi$ (line 22), 
and finally orients each edge $uv \in E_o$ as $(u, v) \in F$ 
if $x_{(u, v)} = 0$ in $\tau$ (line 23). 

\IncMargin{1em}
\begin{algorithm}[t]
\caption{Algorithm for Lemma~\ref{lemma:recognition:2}}\label{algo:lemma2}
\LinesNumbered
%\KwIn{A partial order $P$, the incomparability graph $\overline{G}$ of $P$, and the set $E_o$ of edges of $\overline{G}$. }
\KwIn{A partial order $P$ and the set $E_o$ of edges of $\overline{G}$. }
\KwOut{An orientation $F$ of $E_o$ such that $P + F$ contains no forbidden configurations. } 
%\KwOut{An orientation $F$ of $E_o$ such that $P + F$ contains no forbidden configurations for 
%$\mathbf{2+2}$, $\mathbf{3+1}$, nor $\mathbf{2+1}$. }
\BlankLine

$\varphi \gets \emptyset$\;
\ForEach(\tcc*[f]{construct the 2-CNF formula $\varphi$}){$uv \in E_o$}{
%\ForEach(\tcc*[f]{construct the 2CNF formula $\varphi$. $x_{(u, v)}$ is the negation of $x_{(v, u)}$}){$uv \in E_o$}{
%  Compute $U(u) \cap N_{\overline{G}}(v)$, $N_{\overline{G}}(u) \cap U(v)$, 
%          $L(u) \cap N_{\overline{G}}(v)$, and $N_{\overline{G}}(u) \cap L(v)$\;
  \If{$U(u) \cap N_{\overline{G}}(v) \neq \emptyset$ and $N_{\overline{G}}(u) \cap U(v) \neq \emptyset$}{
    \lForEach{$w \in U(u) \cap N_{\overline{G}}(v)$}{add the clauses $(x_{(u, v)} \vee x_{(v, w)})$ and $(x_{(w, v)} \vee x_{(v, u)})$ to $\varphi$}
    \lForEach{$z \in N_{\overline{G}}(u) \cap U(v)$}{add the clauses $(x_{(v, u)} \vee x_{(u, z)})$ and $(x_{(z, u)} \vee x_{(u, v)})$ to $\varphi$}
  }
  \If{$L(u) \cap N_{\overline{G}}(v) \neq \emptyset$ and $N_{\overline{G}}(u) \cap L(v) \neq \emptyset$}{
    \lForEach{$w \in L(u) \cap N_{\overline{G}}(v)$}{add the clauses $(x_{(u, v)} \vee x_{(v, w)})$ and $(x_{(w, v)} \vee x_{(v, u)})$ to $\varphi$}
    \lForEach{$z \in N_{\overline{G}}(u) \cap L(v)$}{add the clauses $(x_{(v, u)} \vee x_{(u, z)})$ and $(x_{(z, u)} \vee x_{(u, v)})$ to $\varphi$}
  }
  \If{$U(u) \cap N_{\overline{G}}(v) \neq \emptyset$ and $L(u) \cap N_{\overline{G}}(v) \neq \emptyset$}{
    \lForEach{$w \in U(u) \cap N_{\overline{G}}(v)$}{add the clauses $(x_{(u, v)} \vee x_{(v, w)})$ and $(x_{(w, v)} \vee x_{(v, u)})$ to $\varphi$}
    \lForEach{$z \in L(u) \cap N_{\overline{G}}(v)$}{add the clauses $(x_{(u, v)} \vee x_{(v, z)})$ and $(x_{(z, v)} \vee x_{(v, u)})$ to $\varphi$}
  }
  \If{$N_{\overline{G}}(u) \cap U(v) \neq \emptyset$ and $N_{\overline{G}}(u) \cap L(v) \neq \emptyset$}{
    \lForEach{$w \in N_{\overline{G}}(u) \cap U(v)$}{add the clauses $(x_{(v, u)} \vee x_{(u, w)})$ and $(x_{(w, u)} \vee x_{(u, v)})$ to $\varphi$}
    \lForEach{$z \in N_{\overline{G}}(u) \cap L(v)$}{add the clauses $(x_{(v, u)} \vee x_{(u, z)})$ and $(x_{(z, u)} \vee x_{(u, v)})$ to $\varphi$}
  }
  \lForEach{$w \in U(u) \cap N_{\overline{G}}(v)$ with $vw \in E_o$}{
      add the clause $(x_{(w, v)} \vee x_{(v, u)})$ to $\varphi$
  }
%  \ForEach{$w \in U(u) \cap N_{\overline{G}}(v)$}{
%    \lIf{$uv, vw \in E_o$}{
%      add the clause $(x_{(w, v)} \vee x_{(v, u)})$ to $\varphi$
%    }
%  }
  \lForEach{$z \in N_{\overline{G}}(u) \cap U(v)$ with $uz \in E_o$}{
      add the clause $(x_{(z, u)} \vee x_{(u, v)})$ to $\varphi$
  }
%  \ForEach{$z \in N_{\overline{G}}(u) \cap U(v)$}{
%    \lIf{$uv, uz \in E_o$}{
%      add the clause $(x_{(z, u)} \vee x_{(u, v)})$ to $\varphi$
%    }
%  }
}
find a satisfying truth assignment $\tau$ of $\varphi$\; %If $\varphi$ is not satisfiable, $P$ is not a linear-semiorder\;
orient each edge $uv \in E_o$ as $(u, v) \in F$ if $x_{(u, v)} = 0$ in $\tau$\; 
\KwRet{$F$}\; 
\end{algorithm}
\DecMargin{1em}

% Correctness of algorithm
The clauses added in lines 3--10 ensure that 
for any $C_4 = (x, y, z, w)$ of $\overline{G}$, either 
$(x, y), (z, y), (z, w), (x, w) \in F$ or 
$(y, x), (y, z), (w, z), (w, x) \in F$. 
Similarly, the clauses added in lines 11--18 ensure that 
for any $K_{1, 3}$ of $\overline{G}$ 
consisting of $x, y, z, w \in V$ 
with $xw, yw, zw \in E(\overline{G})$, either 
$(x, w), (y, w), (z, w) \in F$ or 
$(w, x), (w, y), (w, z) \in F$. 
Moreover, the clauses added in lines 19 and 20 ensure that 
$P + F$ contains no forbidden configurations for $\mathbf{2+1}$. 
Therefore, we have from Propositions~\ref{prop:orientation of 2+2} 
and~\ref{prop:orientation of 3+1} that 
$P + F$ contains no forbidden configurations for 
$\mathbf{2+2}$, $\mathbf{3+1}$, or $\mathbf{2+1}$. 

% Running time of the algorithm 
Each iteration on lines 2--21 takes $O(n)$ time for each edge of $E_o$, 
and hence $\varphi$ can be constructed in $O(n\bar{m})$ time. 
The 2-CNF formula $\varphi$ has at most $n (n-1) / 2$ Boolean variables and 
$O(n\bar{m})$ clauses. 
Since a satisfying truth assignment of $\varphi$ can be obtained 
in time linear to the size of $\varphi$~\cite{APT79-IPL}, 
Algorithm~\ref{algo:lemma2} takes $O(n\bar{m})$ time. 
\end{proof}

% Definition of obstructions
We call a sequence of vertices $(v_0, v_1, v_2, v_3, v_4, v_5)$ 
a \emph{regular obstruction} of $P + F$ if 
$(v_0, v_1), (v_2, v_3), (v_4, v_5) \in P$ and 
$(v_2, v_1), (v_4, v_3), (v_0, v_5) \in F$ 
(Fig.~\ref{figs:obstructions}\subref{figs:obstructions:regular}). 
We call $(v_0, v_1, v_2, v_3, v_4, v_5)$ 
a \emph{skewed obstruction} of $P + F$ if 
$(v_0, v_1), (v_1, v_2), (v_4, v_5) \in P$ and 
$(v_3, v_2), (v_4, v_3), (v_0, v_5), (v_5, v_3) \in F$ 
(Fig.~\ref{figs:obstructions}\subref{figs:obstructions:skewed}). 
%See Fig.~\ref{figs:obstructions}. 

\begin{figure*}[t]
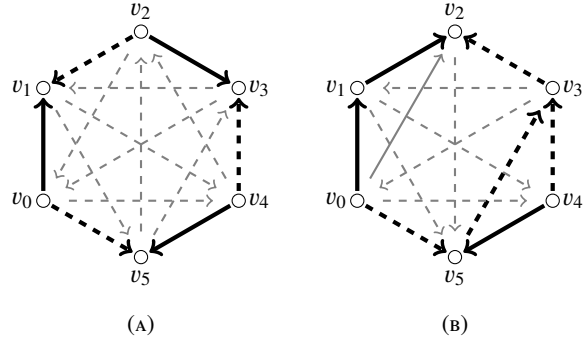

  \centering
  \subcaptionbox{\label{figs:obstructions:regular}}{\input{figs/obstructions/regular}}
  \subcaptionbox{\label{figs:obstructions:skewed}}{\input{figs/obstructions/skewed}}
  \caption{
    (a) A regular obstruction. 
    (b) A skewed obstruction. 
    An arrow $u \to v$ denotes edge $(u, v) \in P$, and 
    a dashed arrow $u \dasharrow v$ denotes edge $(u, v) \in F$. 
    }
  \label{figs:obstructions}
\end{figure*}

\begin{Lemma}\label{lemma:recognition:3}
Let $F$ be an orientation of $E_o$ such that 
$P + F$ contains no forbidden configurations for 
$\mathbf{2+2}$, $\mathbf{3+1}$, or $\mathbf{2+1}$. 
The orientation $P + F$ is acyclic if and only if 
$P + F$ contains no obstructions. 
\end{Lemma}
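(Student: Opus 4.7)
The plan is to prove both directions by contrapositive: to show acyclicity implies no obstructions, I would extract a directed cycle from each obstruction; to show no obstructions implies acyclicity, I would start from a shortest cycle in $P + F$ and exhibit an obstruction inside it.

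For the necessity, suppose $P + F$ contains a regular obstruction on vertices $v_0, \ldots, v_5$. I would show that each of the three ``diagonal'' pairs $\{v_0, v_3\}, \{v_2, v_5\}, \{v_1, v_4\}$ is incomparable in $P$, lies in $E_o$, and is oriented by $F$ in a forced direction. Take $\{v_0, v_3\}$ as the representative case. If $v_0 \prec v_3$ in $P$, then together with $(v_4, v_5) \in P$ and $(v_4, v_3), (v_0, v_5) \in F$, the quadruple $(v_0, v_3, v_4, v_5)$ is a forbidden configuration for $\mathbf{2+2}$, contradicting the hypothesis on $F$. If $v_3 \prec v_0$ in $P$, transitivity through $(v_2, v_3)$ and $(v_0, v_1)$ in $P$ forces $v_2 \prec v_1$ in $P$, contradicting $(v_2, v_1) \in F$. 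Hence $v_0 \parallel v_3$ in $P$, and Proposition~\ref{prop:for 2+2} applied to $(v_0, v_1), (v_2, v_3) \in P$ with $v_1 v_2, v_0 v_3 \in E(\overline{G})$ (the first holds because $(v_2, v_1) \in F \subseteq E_o$) yields $v_0 v_3 \in E_o$. The orientation $(v_0, v_3) \in F$ would again yield a forbidden $\mathbf{2+2}$, so $(v_3, v_0) \in F$. Symmetric arguments give $(v_5, v_2), (v_1, v_4) \in F$, and then $v_0 \to v_5 \to v_2 \to v_1 \to v_4 \to v_3 \to v_0$ is a directed 6-cycle in $F$, contradicting acyclicity of $P + F$. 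The skewed obstruction is handled analogously, with Proposition~\ref{prop:for 3+1} and the $\mathbf{3+1}$-rule playing the roles of their $\mathbf{2+2}$-counterparts and producing a short cycle (for example $v_4 \to v_3 \to v_0 \to v_4$) inside the obstruction.

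For the sufficiency, assume $P + F$ has a directed cycle and pick a shortest such cycle $C = (u_0, u_1, \ldots, u_{k-1})$. Transitivity of $P$ rules out two consecutive $P$-edges of $C$. Two consecutive $F$-edges $(u_i, u_{i+1}), (u_{i+1}, u_{i+2})$ are ruled out by examining $\{u_i, u_{i+2}\}$: if comparable in $P$, one orientation shortens $C$ and the other is a forbidden $\mathbf{2+1}$; if incomparable, Proposition~\ref{prop:for 2+2} or~\ref{prop:for 3+1} applied with the $P$-edges of $C$ flanking the two $F$-edges forces $u_i u_{i+2} \in E_o$, after which $F$'s orientation of this edge either shortens $C$ or creates a strictly shorter cycle. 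Consequently $P$- and $F$-edges alternate along $C$ and $k$ is even. The case $k = 4$ is excluded because the four cycle vertices form a $C_4$ of $\overline{G}$ (Proposition~\ref{prop:for 2+2}) whose cyclic orientation is exactly a forbidden $\mathbf{2+2}$ configuration. For $k \geq 6$, I would pick six consecutive vertices of $C$ and show that, after forcing the orientation of internal chord pairs by the no-forbidden-configuration hypothesis, these six vertices match either a regular obstruction (when the three $P$-edges of $C$ remain pairwise ``separated'') or a skewed obstruction (when a transitive shortcut along $C$ collapses two $P$-edges into the length-$2$ $P$-chain $v_0 \prec v_1 \prec v_2$ that distinguishes the skewed pattern).

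The main obstacle is the last step of the sufficiency argument: showing that a shortest alternating cycle of length at least six embeds into one of the two obstructions of Fig.~\ref{figs:obstructions}. The cycle edges by themselves do not form an obstruction, so the argument must augment them with chord orientations forced by the no-forbidden-configuration hypothesis, and these must be tracked consistently around all six vertices. I expect this to require a careful case analysis driven by which chord pairs are comparable in $P$ versus incomparable in $\overline{G}$, and by how consecutive $P$-edges of $C$ interact under transitivity; this is where the bulk of the proof's bookkeeping will lie.
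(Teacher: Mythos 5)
Your necessity argument is sound and matches the paper's Propositions~\ref{prop:for regular obstruction} and~\ref{prop:for skewed obstruction}: an obstruction forces additional chord orientations via the no-forbidden-configuration hypothesis, which closes up a directed cycle in $F$.

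The sufficiency direction, however, has a genuine gap. Your plan hinges on first showing that a shortest cycle $C$ in $P + F$ must alternate between $P$-edges and $F$-edges, but your argument for ruling out two consecutive $F$-edges $(u_i,u_{i+1}), (u_{i+1},u_{i+2})$ does not go through. When $u_i \parallel u_{i+2}$ you need $u_i u_{i+2} \in E_o$ (else it is simply not oriented by $F$ and no contradiction follows), and you say this is forced by ``Proposition~\ref{prop:for 2+2} or~\ref{prop:for 3+1} applied with the $P$-edges of $C$ flanking the two $F$-edges.'' But if $C$ carries three or more consecutive $F$-edges, those two propositions have nothing to latch onto --- there are no $P$-edges next to $u_i$ and $u_{i+2}$. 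Even in the case of a run of exactly two $F$-edges sandwiched between $P$-edges $(u_{i-1},u_i)$ and $(u_{i+2},u_{i+3})$, applying Proposition~\ref{prop:for 2+2} requires $u_{i-1}u_{i+3} \in E(\overline{G})$, which itself requires a separate argument that only works for $k \geq 6$ and can again fail if $u_{i-1}u_{i+3}$ is an unoriented non-edge. Nothing in the hypothesis forbids a shortest cycle from being built entirely out of $F$-edges that happen to lie in $E_o$ but whose ``two-apart'' chords lie outside $E_o$. In short, the reduction to alternating cycles is false as stated, and this is exactly what the paper's critical-edge machinery (type-1/2/3 critical edges, the decomposition of every $F$-edge into a $P$-path plus one critical edge, and the nine-case analysis of Claims~\ref{claim:3-4}--\ref{claim:3-5}) is designed to handle: a critical edge comes bundled with the specific $P$-edges needed to apply Propositions~\ref{prop:for 2+2}/\ref{prop:for 3+1}, so that consecutive critical edges on a cycle always yield an oriented chord. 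Your outline has none of this apparatus and no substitute for it.

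Two smaller points. First, the alternating $4$-cycle is not itself a forbidden $\mathbf{2+2}$ --- the $F$-arrows point the wrong way around the cycle for that; the paper excludes it (Proposition~\ref{prop:alternating 4-cycle}) by an extra chord argument producing a forbidden $\mathbf{2+1}$. Second, you identify the main difficulty as ``showing a shortest alternating cycle of length at least six embeds into one of the two obstructions.'' That step is indeed nontrivial (and is handled by Claims~\ref{claim:3-1} and~\ref{claim:3-2}), but you have the balance of difficulty wrong: every alternating $6$-cycle yields a regular obstruction, the skewed obstruction only arises inside the critical-edge case analysis, and the truly hard part of the proof is the non-alternating case that your plan does not address.
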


Before proving Lemma~\ref{lemma:recognition:3}, 
we introduce some observations. 

\begin{Proposition}\label{prop:for regular obstruction}
Let $F$ be an orientation of $E_o$ such that 
$P + F$ contains no forbidden configurations for 
$\mathbf{2+2}$ or $\mathbf{2+1}$. 
Let $(v_0, v_1, v_2, v_3, v_4, v_5)$ be a regular obstruction of $P + F$. 
All vertices of the obstruction are distinct. 
We have 
$(v_5, v_2), (v_5, v_3), (v_4, v_2), 
 (v_1, v_4), (v_1, v_5), (v_0, v_4), 
 (v_3, v_0), (v_3, v_1), (v_2, v_1) \in F$ 
(Fig.~\ref{figs:obstructions}\subref{figs:obstructions:regular}). 
\end{Proposition}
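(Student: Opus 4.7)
The plan is to proceed in three stages: first establish pairwise distinctness of the six vertices, then show that the nine pairs not appearing in the given $P$-relations are all incomparable in $P$, and finally extract three induced $C_4$'s in $\overline{G}$ on which we can invoke Proposition~\ref{prop:orientation of 2+2} to pin down all the required orientations at once.

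For distinctness, the three given $P$-edges and three given $F$-edges handle within-pair distinctness immediately. For cross-pair equalities, each collapse either forces an edge to lie simultaneously in $P$ and $F$ (impossible, since $P$ orients $G$ and $F$ orients a subset of $E(\overline{G})$) or contradicts a given $F$-edge by a short transitivity argument in $P$. For example, $v_0 = v_3$ gives $(v_2, v_0), (v_0, v_1) \in P$ hence $(v_2, v_1) \in P$, contradicting $(v_2, v_1) \in F$; and $v_1 = v_4$ gives $(v_0, v_1), (v_1, v_5) \in P$ hence $(v_0, v_5) \in P$, contradicting $(v_0, v_5) \in F$. All other coincidences reduce similarly.

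For incomparability, I would show for each relevant pair that neither direction of $P$-comparability is possible. In each case one direction contradicts a given $F$-edge via transitivity (for instance, $(v_3, v_0) \in P$ together with $(v_0, v_1) \in P$ and $(v_2, v_3) \in P$ yields $(v_2, v_1) \in P$), while the opposite direction produces a forbidden configuration for $\mathbf{2+2}$ together with the remaining given data (for instance, $(v_0, v_3) \in P$ combined with $(v_4, v_5) \in P$ and $(v_4, v_3), (v_0, v_5) \in F$ is precisely such a configuration, with $x=v_0$, $y=v_3$, $z=v_4$, $w=v_5$). The same pattern resolves all nine pairs $v_0v_2$, $v_0v_3$, $v_0v_4$, $v_1v_3$, $v_1v_4$, $v_1v_5$, $v_2v_4$, $v_2v_5$, $v_3v_5$.

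Once the nine incomparabilities are in hand, Proposition~\ref{prop:for 2+2} applied to each of the three given $P$-edges produces three induced $C_4$'s in $\overline{G}$, namely $(v_0, v_2, v_1, v_3)$, $(v_2, v_4, v_3, v_5)$, and $(v_0, v_4, v_1, v_5)$, so all twelve edges involved lie in $E_o$. Proposition~\ref{prop:orientation of 2+2} then allows only two coherent $F$-orientations per $C_4$, and the given edges $(v_2, v_1)$, $(v_4, v_3)$, and $(v_0, v_5)$ respectively select the alternative in each $C_4$. Reading off the four oriented edges of each $C_4$ yields exactly the list claimed.

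The main obstacle is the bookkeeping in step two, where many pairs must be ruled out independently. However, each individual case is a one-step derivation, always fitting the pattern \emph{transitivity contradiction in one direction, forbidden $\mathbf{2+2}$ in the other}; once this pattern is recognized the argument becomes routine, and steps one and three are then essentially mechanical.
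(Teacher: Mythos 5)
Your proposal is correct and follows essentially the same route as the paper: establish distinctness, produce the three induced $C_4$'s $(v_0, v_2, v_1, v_3)$, $(v_2, v_4, v_3, v_5)$, $(v_0, v_4, v_1, v_5)$ in $\overline{G}$, and read off the orientations via Proposition~\ref{prop:orientation of 2+2}. The only differences are cosmetic: the paper establishes a single new incomparability per $C_4$ (e.g.\ $v_2 v_5 \in E(\overline{G})$) and lets Proposition~\ref{prop:for 2+2} supply the remaining two edges of that $C_4$ for free, whereas you propose to verify all nine pairwise incomparabilities independently, which is more work than needed. Also note your claimed uniform pattern (``transitivity one way, forbidden $\mathbf{2+2}$ the other'') applies directly only to $v_0 v_3$, $v_1 v_4$, $v_2 v_5$; for the remaining six pairs (e.g.\ $v_0 v_2$), both orientations are excluded by transitivity, one of them by chaining through an already-excluded pair, with no direct $\mathbf{2+2}$ available. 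This does not affect soundness, but the bookkeeping is cleaner if you follow the paper and derive only the three pivotal incomparabilities.
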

\begin{proof}
We have $v_i \neq v_{i+1}$ for any $i = 0, 1, \ldots, k-1$ (indices are modulo $k$) 
since the graphs $G$ and $\overline{G}$ have no loops. 
We have $v_i \neq v_{i+2}$ for any $i = 0, 1, \ldots, k-1$ since 
$(v_i, v_{i+1}) \in P$ and $(v_{i+2}, v_{i+1}) \in F$ if $i$ is even, and 
$(v_{i+1}, v_i) \in F$ and $(v_{i+1}, v_{i+2}) \in P$ if $i$ is odd. 
If $v_0 = v_3$ then $(v_2, v_3), (v_0, v_1) \in P$ implies $(v_2, v_1) \in P$, 
contradicting $(v_2, v_1) \in F$. Thus $v_0 \neq v_3$. 
Similarly, we have $v_1 \neq v_4$ and $v_2 \neq v_5$. 
Therefore, all vertices are distinct. 
\par
If $(v_2, v_5) \in P$ then $(v_2, v_5, v_0, v_1)$ is 
a forbidden configuration for $\mathbf{2+2}$, a contradiction. 
If $(v_5, v_2) \in P$ then 
$(v_4, v_5), (v_2, v_3) \in P$ implies $(v_4, v_3) \in P$, 
contradicting $(v_4, v_3) \in F$. 
Thus $v_2v_5 \in E(\overline{G})$. 
Proposition~\ref{prop:for 2+2} implies that 
$(v_2, v_4, v_3, v_5)$ is $C_4$ in $\overline{G}$. 
Since $(v_4, v_3) \in F$, Proposition~\ref{prop:orientation of 2+2} implies 
$(v_5, v_2), (v_5, v_3), (v_4, v_2) \in F$. 
By similar arguments, we have 
$(v_1, v_4), (v_1, v_5), (v_0, v_4) \in F$ and 
$(v_3, v_0), (v_3, v_1), (v_2, v_0) \in F$. 
\end{proof}

\begin{Proposition}\label{prop:for skewed obstruction}
Let $F$ be an orientation of $E_o$ such that 
$P + F$ contains no forbidden configurations for 
$\mathbf{2+2}$, $\mathbf{3+1}$, or $\mathbf{2+1}$. 
Let $(v_0, v_1, v_2, v_3, v_4, v_5)$ be a skewed obstruction of $P + F$. 
All vertices of the obstruction are distinct. 
We have $(v_0, v_2) \in P$ and 
$(v_2, v_5), 
 (v_1, v_4), (v_1, v_5), (v_0, v_4), 
 (v_3, v_0), (v_3, v_1) \in F$ 
(Fig.~\ref{figs:obstructions}\subref{figs:obstructions:skewed}). 
\end{Proposition}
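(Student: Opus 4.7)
The plan is to mirror the strategy of Proposition~\ref{prop:for regular obstruction}: derive $(v_0, v_2) \in P$ by transitivity from $(v_0,v_1),(v_1,v_2)\in P$, check that the six vertices are pairwise distinct, and then read each claimed $F$-edge off Propositions~\ref{prop:orientation of 2+2} and~\ref{prop:orientation of 3+1} once the appropriate $C_4$ or $K_{1,3}$ of $\overline{G}$ has been exposed. Distinctness is a short case check: most identifications place a $P$-edge and an $F$-edge on the same pair (e.g.\ $v_2=v_4$ would put both $(v_4,v_3)$ and $(v_3,v_2)$ in $F$ as opposite orientations), and the remaining cases (e.g.\ $v_1=v_4$) collapse to $(v_0,v_5)\in P$, which contradicts $(v_0,v_5)\in F$.

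For the $F$-edges with endpoint $v_5$ I will show that $v_0, v_1, v_2, v_5$ forms a $K_{1,3}$ of $\overline{G}$ centred at $v_5$. The crucial incomparability is $v_2v_5$: $(v_2,v_5)\in P$ transports along $v_0\prec v_1\prec v_2$ to $(v_0,v_5)\in P$, conflicting with $(v_0,v_5)\in F$, while $(v_5,v_2)\in P$ creates the chain $v_4\prec v_5\prec v_2$ which, together with $(v_4,v_3),(v_3,v_2)\in F$, is a forbidden configuration for $\mathbf{3+1}$. The same dichotomy, applied to $v_1$ in place of $v_2$, gives $v_1v_5\in E(\overline{G})$; Proposition~\ref{prop:orientation of 3+1} then propagates $(v_0,v_5)\in F$ to $(v_1,v_5),(v_2,v_5)\in F$.

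By a symmetric analysis I obtain the $K_{1,3}$ of $\overline{G}$ centred at $v_3$: $(v_3,v_0)\in P$ and $(v_3,v_1)\in P$ each contradict $(v_3,v_2)\in F$ via the chain $v_0\prec v_1\prec v_2$, while $(v_0,v_3)\in P$ and $(v_1,v_3)\in P$ combine with $(v_4,v_5)\in P$ into forbidden $\mathbf{2+2}$ configurations using $(v_4,v_3)\in F$ together with $(v_0,v_5)\in F$ and the just-derived $(v_1,v_5)\in F$ respectively. Proposition~\ref{prop:orientation of 3+1} at $v_3$ then gives $(v_3,v_0),(v_3,v_1)\in F$. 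To finish, I exhibit the $C_4=(v_0,v_4,v_1,v_5)$ of $\overline{G}$: $(v_0,v_4)\in P$ and $(v_1,v_4)\in P$ are eliminated by transitivity with $(v_4,v_5)\in P$, and $(v_4,v_0)\in P$ (likewise $(v_4,v_1)\in P$) yields the three-chain $v_4\prec v_1\prec v_2$ which, with $(v_4,v_3),(v_3,v_2)\in F$, is another forbidden $\mathbf{3+1}$. Proposition~\ref{prop:orientation of 2+2} applied to this $C_4$ with $(v_0,v_5)\in F$ then orients $v_0v_4$ and $v_1v_4$ as claimed.

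The main obstacle throughout is the incomparability bookkeeping. For each pair $v_iv_j$ that must end up in $E(\overline{G})$, both $P$-orientations have to be excluded, and the correct witness varies between a pure transitivity clash, a $\mathbf{2+2}$, and a $\mathbf{3+1}$; identifying the right structural witness is the delicate step, and the arguments are interdependent in the order above (the $v_5$-side must be done before the $v_3$-side, so that $(v_1,v_5)\in F$ is available for the $\mathbf{2+2}$ step). Once the incomparability skeleton is in place, the uniform-orientation guarantees of Propositions~\ref{prop:orientation of 2+2} and~\ref{prop:orientation of 3+1} close the proof mechanically.
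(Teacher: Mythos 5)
Your proposal is correct and takes essentially the same approach as the paper: establish $(v_0, v_2) \in P$ and distinctness, expose the claws at $v_5$ and $v_3$ and the $C_4 = (v_0, v_4, v_1, v_5)$ in $\overline{G}$, and orient them via Propositions~\ref{prop:orientation of 2+2} and~\ref{prop:orientation of 3+1}. One small inaccuracy to fix in a writeup: in the $v_1v_5$ step, the branch $(v_5, v_1) \in P$ is \emph{not} handled by the same $\mathbf{3+1}$ witness as $(v_5, v_2) \in P$ (that would require $(v_3, v_1) \in F$, which is not yet available at that point); instead use transitivity $(v_5, v_1), (v_1, v_2) \in P \Rightarrow (v_5, v_2) \in P$, contradicting the just-proved $v_2 v_5 \in E(\overline{G})$, or --- as the paper does --- obtain $v_1v_5 \in E(\overline{G})$ at once from Proposition~\ref{prop:for 3+1} applied to $v_0, v_1, v_2, v_5$, which sidesteps the second dichotomy entirely. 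Your reordering (doing the $v_3$-claw before the $C_4$ and using $(v_1, v_5) \in F$ to rule out $(v_1, v_3) \in P$ where the paper invokes Proposition~\ref{prop:for 3+1}) is sound, since the $v_3$-claw relies only on facts already established by the $v_5$-claw.
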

\begin{proof}
We have $v_i \neq v_{i+1}$ for any $i = 0, 1, \ldots, k-1$ (indices are modulo $k$) 
since the graphs $G$ and $\overline{G}$ have no loops. 
We have $v_0 \neq v_2$ from $(v_0, v_1), (v_1, v_2) \in P$. 
We have $v_1 \neq v_3$ from $(v_1, v_2) \in P$ and $(v_3, v_2) \in F$. 
We have $v_2 \neq v_4$ from $(v_3, v_2), (v_4, v_3) \in F$. 
We have $v_3 \neq v_5$ from $(v_4, v_3) \in F$ and $(v_4, v_5) \in P$. 
We have $v_4 \neq v_0$ from $(v_4, v_5) \in P$ and $(v_0, v_5) \in F$. 
We have $v_5 \neq v_1$ from $(v_0, v_5) \in F$ and $(v_0, v_1) \in P$. 
If $v_0 = v_3$ then $(v_0, v_1), (v_1, v_2) \in P$ implies $(v_0, v_2) \in P$, 
contradicting $(v_3, v_2) \in F$. Thus $v_0 \neq v_3$. 
If $v_1 = v_4$ then $(v_0, v_1), (v_4, v_5) \in P$ implies $(v_0, v_5) \in P$, 
contradicting $(v_0, v_5) \in F$. Thus $v_1 \neq v_4$. 
If $v_2 = v_5$ then $(v_0, v_1), (v_1, v_2) \in P$ implies $(v_0, v_2) \in P$, 
contradicting $(v_0, v_5) \in F$. Thus $v_2 \neq v_5$. 
Therefore, all vertices are distinct. 
\par
We have $(v_0, v_2) \in P$ from $(v_0, v_1), (v_1, v_2) \in P$. 
If $(v_2, v_5) \in P$ then 
$(v_0, v_2) \in P$ implies $(v_0, v_5) \in P$, 
contradicting $(v_0, v_5) \in F$. 
If $(v_5, v_2) \in P$ then $(v_4, v_5, v_2, v_3)$ is 
a forbidden configuration for $\mathbf{3+1}$, a contradiction. 
Thus $v_2v_5 \in E(\overline{G})$. 
Proposition~\ref{prop:for 3+1} implies that 
the vertices $v_0, v_1, v_2, v_5$ induce $K_{1, 3}$ in $\overline{G}$. 
Since $(v_0, v_5) \in F$, Proposition~\ref{prop:orientation of 3+1} implies 
$(v_2, v_5), (v_1, v_5) \in F$. 
\par
If $(v_1, v_4) \in P$ then 
$(v_0, v_1), (v_4, v_5) \in P$ implies $(v_0, v_5) \in P$, 
contradicting $(v_0, v_5) \in F$. 
If $(v_4, v_1) \in P$ then $(v_4, v_1, v_2, v_3)$ is 
a forbidden configuration for $\mathbf{3+1}$, a contradiction. 
Thus $v_1v_4 \in E(\overline{G})$. 
Proposition~\ref{prop:for 2+2} implies that 
$(v_0, v_4, v_1, v_5)$ is $C_4$ in $\overline{G}$. 
Since $(v_0, v_5) \in F$, Proposition~\ref{prop:orientation of 2+2} implies 
$(v_0, v_4), (v_1, v_5), (v_1, v_4) \in F$. 
\par
If $(v_0, v_3) \in P$ then $(v_0, v_3, v_4, v_5)$ is 
a forbidden configuration for $\mathbf{2+2}$, a contradiction. 
If $(v_3, v_0) \in P$ then 
$(v_0, v_2) \in P$ implies $(v_3, v_2) \in P$, 
contradicting $(v_3, v_2) \in F$. 
Thus $v_0v_3 \in E(\overline{G})$. 
Proposition~\ref{prop:for 3+1} implies that 
the vertices $v_0, v_1, v_2, v_3$ induce $K_{1, 3}$ in $\overline{G}$. 
Since $(v_3, v_2) \in F$, 
Proposition~\ref{prop:orientation of 3+1} implies 
$(v_3, v_0), (v_3, v_1) \in F$. 
\end{proof}

% Definition of alternating cycles. 
A sequence of vertices $(a_0, b_0, a_1, b_1, \ldots, a_{k-1}, b_{k-1})$ 
with $k \geq 2$ is 
an \emph{alternating cycle of length $2k$} %of $P \cup F$ 
if $(a_i, b_i) \in P$ and $(b_i, a_{i+1}) \in F$ 
for any $i = 0, 1, \ldots, k-1$ (indices are modulo $k$). 
See Fig.~\ref{figs:alternating-cycles} for example. 

\begin{figure*}[t]
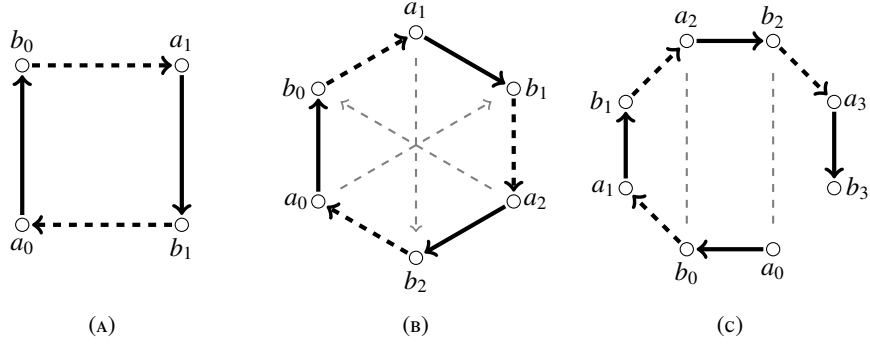

  \centering
  \subcaptionbox{\label{figs:alternating-cycles:4-cycle}}{\input{figs/alternating-cycles/4-cycle}}
  \subcaptionbox{\label{figs:alternating-cycles:6-cycle}}{\input{figs/alternating-cycles/6-cycle}}
  \subcaptionbox{\label{figs:alternating-cycles:8-cycle}}{\input{figs/alternating-cycles/8-cycle}}
  \caption{
    (a) An alternating cycle of length 4. 
    (b) An alternating cycle of length 6. 
    (c) An alternating cycle of length greater than 6. 
    Arrows and dashed arrows denote the same type of edges 
    as in Fig.~\ref{figs:obstructions}. 
    Dashed lines denote edges of $\overline{G}$. 
  }
  \label{figs:alternating-cycles}
\end{figure*}

\begin{Proposition}\label{prop:alternating 4-cycle}
Let $F$ be an orientation of $E_o$ such that 
$P + F$ contains no forbidden configurations for $\mathbf{2+1}$. 
The orientation $P + F$ contains no alternating cycles of length 4. 
\end{Proposition}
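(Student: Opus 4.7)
The plan is a proof by contradiction: assume there is an alternating cycle $(a_0, b_0, a_1, b_1)$ of length $4$, so $(a_0, b_0), (a_1, b_1) \in P$ and $(b_0, a_1), (b_1, a_0) \in F$. I want to extract from this a configuration where Proposition~\ref{prop:for 2+2} applies, and then use the 2-CNF-style interactions between $P$ and $F$ on the resulting $C_4$ to force a forbidden $\mathbf{2+1}$.

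First I would apply Proposition~\ref{prop:for 2+2} with $x=a_0$, $y=b_0$, $z=a_1$, $w=b_1$: since $(a_0,b_0),(a_1,b_1) \in P$ and $a_0 b_1, b_0 a_1 \in E(\overline{G})$ (the latter two because they carry $F$-orientations), the proposition yields $a_0 a_1, b_0 b_1 \in E(\overline{G})$, so the four vertices induce a $C_4$ in $\overline{G}$. In particular $a_0 a_1 \in E_o$, and hence $F$ orients this edge in one of two ways.

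The key step is the case split on the orientation of $a_0 a_1$. If $(a_1, a_0) \in F$, then $(a_0, b_0) \in P$ together with $(b_0, a_1), (a_1, a_0) \in F$ gives the triple $(a_0, b_0, a_1)$, which is a forbidden configuration for $\mathbf{2+1}$, a contradiction. If $(a_0, a_1) \in F$, then $(a_1, b_1) \in P$ together with $(b_1, a_0), (a_0, a_1) \in F$ gives the triple $(a_1, b_1, a_0)$, again a forbidden configuration for $\mathbf{2+1}$, a contradiction.

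There is not really a main obstacle here; the whole argument rests on (i) noticing that an alternating $4$-cycle automatically produces an induced $C_4$ in $\overline{G}$ (via Proposition~\ref{prop:for 2+2}), and (ii) that the ``diagonal'' $a_0 a_1$ of this $C_4$ has no good orientation because each orientation closes a $\mathbf{2+1}$-triangle using one of the two $P$-edges of the alternating cycle. The only subtlety to double-check is that $a_0 a_1 \in E_o$ (so that $F$ actually orients it), which is immediate from the $C_4$ just identified.
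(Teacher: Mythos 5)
Your proof is correct and is essentially identical to the paper's: both apply Proposition~\ref{prop:for 2+2} to conclude $a_0a_1 \in E_o$, and then note that each of the two possible $F$-orientations of $a_0a_1$ closes the exact same $\mathbf{2+1}$-triangle ($(a_0,b_0,a_1)$ or $(a_1,b_1,a_0)$) identified in the paper.
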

\begin{proof}
Suppose that $P + F$ contains an alternating cycle 
$(a_0, b_0, a_1, b_1)$ of length 4 
(Fig.~\ref{figs:alternating-cycles}\subref{figs:alternating-cycles:4-cycle}). 
Proposition~\ref{prop:for 2+2} implies that 
$(a_0, a_1, b_1, b_0)$ is $C_4$, and hence $a_0a_1 \in E_o$. 
If $(a_0, a_1) \in F$ then $(a_1, b_1, a_0)$ is 
a forbidden configuration for $\mathbf{2+1}$, a contradiction. 
If $(a_1, a_0) \in F$ then $(a_0, b_0, a_1)$ is 
a forbidden configuration for $\mathbf{2+1}$, a contradiction. 
\end{proof}

Now, we prove Lemma~\ref{lemma:recognition:3}. 
\begin{proof}[Proof of Lemma~\ref{lemma:recognition:3}]
We have from Propositions~\ref{prop:for regular obstruction} 
and~\ref{prop:for skewed obstruction} that 
if $P + F$ contains an obstruction, then $P + F$ contains a cycle. 
Now, we suppose that $P + F$ contains a cycle 
but no obstructions. 

\begin{claim}\label{claim:3-1}
If $P + F$ contains an alternating cycle of length 6, 
then $P + F$ contains a regular obstruction. 
\end{claim}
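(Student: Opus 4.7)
The plan is to exhibit a regular obstruction on the very six vertices of the given alternating $6$-cycle, reusing its three $P$-edges but cycling them in a different order. Let $(a_0, b_0, a_1, b_1, a_2, b_2)$ be an alternating $6$-cycle of $P + F$, so $(a_i, b_i) \in P$ and $(b_i, a_{i+1}) \in F$ for $i \in \{0, 1, 2\}$ (indices modulo $3$). I would show that $(v_0, v_1, v_2, v_3, v_4, v_5) = (a_0, b_0, a_2, b_2, a_1, b_1)$ is a regular obstruction. The three $P$-edges $(v_0, v_1), (v_2, v_3), (v_4, v_5)$ are immediate, so the task reduces to establishing the three $F$-edges $(a_0, b_1), (a_1, b_2), (a_2, b_0) \in F$; by the three-fold cyclic symmetry of the hypothesis, it suffices to treat one of them, say $(a_0, b_1) \in F$.

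The first step is to show $a_0 b_1 \in E(\overline{G})$. The relation $(b_1, a_0) \in P$ cannot hold, since together with $(a_0, b_0) \in P$ and $(a_1, b_1) \in P$ it would yield $(a_1, b_0) \in P$ by two applications of transitivity, contradicting $b_0 a_1 \in E(\overline{G})$. The relation $(a_0, b_1) \in P$ cannot hold either: combined with $(a_2, b_2) \in P$ and $(b_1, a_2), (b_2, a_0) \in F$, the quadruple $(a_0, b_1, a_2, b_2)$ would form an alternating $4$-cycle of $P + F$, contradicting Proposition~\ref{prop:alternating 4-cycle}. I expect this second sub-case to be the main obstacle, because it is the only point where one must appeal to more than transitivity, and it is precisely where the hypothesis that $P + F$ contains no forbidden $\mathbf{2+1}$ is used (through Proposition~\ref{prop:alternating 4-cycle}).

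With incomparability in hand, I would apply Proposition~\ref{prop:for 2+2} to $(a_0, b_0), (a_1, b_1) \in P$ together with the non-edges $a_0 b_1, b_0 a_1 \in E(\overline{G})$ to conclude that $(a_0, a_1, b_0, b_1)$ is a $C_4$ of $\overline{G}$, so in particular $a_0 b_1 \in E_o$. Proposition~\ref{prop:orientation of 2+2} offers two possible orientations for this $C_4$, and the known $F$-edge $(b_0, a_1)$ selects the one that simultaneously forces $(a_0, b_1) \in F$. Applying the same argument cyclically yields $(a_1, b_2), (a_2, b_0) \in F$, and plugging these into the definition of a regular obstruction then completes the proof.
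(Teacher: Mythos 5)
Your proof is correct and takes essentially the same approach as the paper: rule out both $P$-directions of the pair $a_0 b_1$ (one by transitivity, the other via Proposition~\ref{prop:alternating 4-cycle}), apply Propositions~\ref{prop:for 2+2} and~\ref{prop:orientation of 2+2} to force $(a_0, b_1) \in F$, and cycle to obtain the regular obstruction $(a_0, b_0, a_2, b_2, a_1, b_1)$. The paper's only extra step is an explicit preliminary check that the six vertices are pairwise distinct (in particular $a_0 \neq b_1$), which is implicit in your transitivity argument but worth stating before concluding $a_0 b_1 \in E(\overline{G})$.
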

\begin{proof}[Proof of Claim~\ref{claim:3-1}]
Let $(a_0, b_0, a_1, b_1, a_2, b_3)$ be an alternating cycle of length 6 
with $(a_i, b_i) \in P$ and $(b_i, a_{i+1}) \in F$ 
for any $i = 0, 1, 2$ (indices are modulo 3). 
See Fig.~\ref{figs:alternating-cycles}\subref{figs:alternating-cycles:6-cycle}. 
We first claim that the vertices are distinct. 
We have $a_i \neq b_i$ and $b_i \neq a_{i+1}$ 
for any $i = 0, 1, 2$ (indices are modulo 3) 
since the graphs $G$ and $\overline{G}$ have no loops. 
We have $a_i \neq a_{i+1}$ for any $i = 0, 1, 2$ 
since $(a_i, b_i) \in P$ and $(b_i, a_{i+1}) \in F$. 
We also have $b_i \neq b_{i+1}$ for any $i = 0, 1, 2$ 
since $(b_i, a_{i+1}) \in F$ and $(a_{i+1}, b_{i+1}) \in P$. 
If $a_0 = b_1$ then $(a_1, b_1), (a_0, b_0) \in P$ implies $(a_1, b_0) \in P$, 
contradicting $(b_0, a_1) \in F$. Thus $a_0 \neq b_1$. 
Similarly, we have $a_1 \neq b_2$ and $a_2 \neq b_0$. 
Therefore, all vertices are distinct. 
\par
If $(a_1, b_2) \in P$ then $(a_1, b_2, a_0, b_0)$ is 
an alternating cycle of length 4, 
contradicting Proposition~\ref{prop:alternating 4-cycle}. 
If $(b_2, a_1) \in P$ then $(a_2, b_2), (a_1, b_1) \in P$ implies $(a_2, b_1) \in P$, 
contradicting $(b_1, a_2) \in F$. Thus $a_1b_2 \in E(\overline{G})$. 
Proposition~\ref{prop:for 2+2} implies that 
$(a_1, a_2, b_1, b_2)$ is $C_4$ in $\overline{G}$. 
Since $(b_1, a_2) \in F$, 
Proposition~\ref{prop:orientation of 2+2} implies $(a_1, b_2) \in F$. 
By similar arguments, we have $(a_0, b_1), (a_2, b_0) \in F$. 
Now, $(a_0, b_0, a_2, b_2, a_1, b_1)$ is a regular obstruction. 
\end{proof}

\begin{claim}\label{claim:3-2}
If $P + F$ contains an alternating cycle of length greater than 6, 
then $P + F$ contains an alternating cycle of length 6. 
\end{claim}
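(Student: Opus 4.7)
The plan is to reduce the cycle length by 2 at a time (or drop directly to 6) via a case analysis on the pair $a_0, b_2$, then iterate until length 6 is reached. Fix an alternating cycle $(a_0, b_0, \ldots, a_{k-1}, b_{k-1})$ of length $2k \geq 8$.

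First, I would rule out $(b_2, a_0) \in P$: transitivity through $(a_2, b_2), (a_0, b_0) \in P$ would force $(a_2, b_0) \in P$, and then $(a_2, b_0, a_1, b_1)$ would be an alternating cycle of length 4, contradicting Proposition~\ref{prop:alternating 4-cycle}.

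Next, if $(a_0, b_2) \in P$, then the sequence $(a_0, b_2, a_3, b_3, \ldots, a_{k-1}, b_{k-1})$ is an alternating cycle of length $2(k-2)$: the alternation $P,F,P,F,\ldots$ is preserved because $(a_0,b_2) \in P$ replaces the $P$-$F$-$P$-$F$-$P$ segment $a_0\to b_0\to a_1\to b_1\to a_2\to b_2$ while the surrounding $F$-edges are untouched. When $k = 4$ this new cycle has length 4, again contradicting Proposition~\ref{prop:alternating 4-cycle}; when $k \geq 5$ I have a strictly shorter alternating cycle of length at least 6 and induct.

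Otherwise $a_0 b_2 \in E(\overline{G})$. Completely analogous transitivity arguments rule out the remaining potential $P$-relations between $\{a_0, b_0\}$ and $\{a_2, b_2\}$, so $a_2 b_0, a_0 a_2, b_0 b_2 \in E(\overline{G})$ as well. Applying Proposition~\ref{prop:for 2+2} to $(a_0, b_0), (a_2, b_2) \in P$ with $a_0 b_2, a_2 b_0 \in E(\overline{G})$ shows that $(a_0, a_2, b_0, b_2)$ is a $C_4$ of $\overline{G}$, so all its edges lie in $E_o$. Proposition~\ref{prop:orientation of 2+2} then forces exactly one of two orientations: either $(b_2, a_0) \in F$, in which case $(a_0, b_0, a_1, b_1, a_2, b_2)$ is already an alternating 6-cycle, or $(b_0, a_2) \in F$, in which case $(a_0, b_0, a_2, b_2, a_3, b_3, \ldots, a_{k-1}, b_{k-1})$ is an alternating cycle of length $2(k-1)$---equal to 6 when $k = 4$, and otherwise strictly shorter than $2k$ but still at least 6, so induction on $k$ finishes.

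The main obstacle I foresee is bookkeeping the small case $k = 4$: several of the reductions would produce alternating cycles of length 4 and must therefore be reinterpreted as contradictions rather than as reductions, which forces the two subcases of the $C_4$-orientation step to carry the burden for the base case. Beyond this, the argument is a disciplined case analysis that uses only the absence of forbidden configurations for $\mathbf{2+2}$, $\mathbf{3+1}$, and $\mathbf{2+1}$ in $P+F$, packaged through Propositions~\ref{prop:for 2+2}, \ref{prop:orientation of 2+2}, and~\ref{prop:alternating 4-cycle}.
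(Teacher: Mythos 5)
Your proposal is correct and follows essentially the same approach as the paper: both rule out $P$-relations among $a_0, b_0, a_2, b_2$ using transitivity together with Proposition~\ref{prop:alternating 4-cycle}, then invoke Propositions~\ref{prop:for 2+2} and~\ref{prop:orientation of 2+2} on the $C_4$ $(a_0, a_2, b_0, b_2)$ to obtain either the alternating $6$-cycle $(a_0, b_0, a_1, b_1, a_2, b_2)$ or a strictly shorter alternating cycle of length $2(k-1)$, with your explicit induction on $k$ playing the role of the paper's minimal-length counterexample. The one bookkeeping step you should add before applying Proposition~\ref{prop:for 2+2} is to rule out the equalities $a_0 = b_2$ and $b_0 = a_2$ (the paper does this explicitly), so that the $C_4$ genuinely has four distinct vertices; both equalities fall to the same transitivity arguments you already run, and in fact your careful case split on $k = 4$ is a touch tighter than the paper's blanket appeal to minimality.
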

\begin{proof}[Proof of Claim~\ref{claim:3-2}]
Suppose that $P + F$ contains no alternating cycles of length 6. 
We show that $P + F$ contains no alternating cycles of length greater than 6. 
Suppose to the contrary that 
$P + F$ contains an alternating cycle $(a_0, b_0, a_1, b_1, \ldots, a_{k-1}, b_{k-1})$ 
of length $2k$ with $k \geq 4$ 
(Fig.~\ref{figs:alternating-cycles}\subref{figs:alternating-cycles:8-cycle}). 
Assume without loss of generality that the length of the cycle is minimal. 
\par
If $a_0 = b_2$ then $(a_2, b_2), (a_0, b_0) \in P$ implies $(a_2, b_0) \in P$, 
and hence $(a_2, b_0, a_1, b_1)$ is an alternating cycle of length 4, 
contradicting Proposition~\ref{prop:alternating 4-cycle}. 
If $(a_0, b_2) \in P$ then $(a_0, b_2, a_3, b_3, \ldots, a_{k-1}, b_{k-1})$ 
is an alternating cycle of length $2(k-2)$, 
contradicting the minimality of the length of the cycle. 
If $(b_2, a_0) \in P$ then 
$(a_2, b_2), (a_0, b_0) \in P$ implies $(a_2, b_0) \in P$, 
and hence $(a_2, b_0, a_1, b_1)$ is an alternating cycle of length 4, 
contradicting Proposition~\ref{prop:alternating 4-cycle}. 
Therefore, $a_0b_2 \in E(\overline{G})$. 
\par
If $b_0 = a_2$ then $(a_0, b_0), (a_2, b_2) \in P$ implies $(a_0, b_2) \in P$, 
contradicting $a_0b_2 \in E(\overline{G})$. 
If $(a_2, b_0) \in P$ then 
$(a_2, b_0, a_1, b_1)$ is an alternating cycle of length 4, 
contradicting Proposition~\ref{prop:alternating 4-cycle}. 
If $(b_0, a_2) \in P$ then 
$(a_0, b_0), (a_2, b_2) \in P$ implies $(a_0, b_2) \in P$, 
contradicting $a_0b_2 \in E(\overline{G})$. 
Therefore, $b_0a_2 \in E(\overline{G})$. 
\par
Now, Proposition~\ref{prop:for 2+2} implies that 
$(a_0, a_2, b_0, b_2)$ is $C_4$. 
If $(a_0, b_2), (b_0, a_2) \in F$ then 
$(a_0, b_0, a_2, b_2, a_3, b_3, \ldots, a_{k-1}, b_{k-1})$ 
is an alternating cycle of length $2(k-1)$, 
a contradiction. 
If $(b_2, a_0), (a_2, b_0) \in F$ then 
$(a_0, b_0, a_1, b_1, a_2, b_2)$ is an alternating cycle of length 6, 
a contradiction. 
Thus $P + F$ contains no alternating cycles of length greater than 6. 
\end{proof}
Therefore, we can assume that $P + F$ contains no alternating cycles. 

% Definition of critical edge, type-1, type-2, type-3
We now introduce some definitions. 
We call an edge $(x, y) \in F$ \emph{type-1 critical} 
if there is $\mathbf{3+1}$ 
consisting of four vertices $x$, $y$, $z$, $w$ 
with $(y, z), (z, w) \in P$ and $(x, y), (x, z), (x, w) \in F$ 
(Fig.~\ref{figs:lemma3:critical-edges}\subref{figs:lemma3:type1}). 
We call an edge $(x, y) \in F$ \emph{type-2 critical} 
if there is $\mathbf{2+2}$ 
consisting of four vertices $x$, $y$, $z$, $w$ 
with $(w, x), (y, z) \in P$ and $(x, y), (x, z), (w, y), (w, z) \in F$ 
(Fig.~\ref{figs:lemma3:critical-edges}\subref{figs:lemma3:type2}). 
We call an edge $(x, y) \in F$ \emph{type-3 critical} 
if there is $\mathbf{3+1}$ 
consisting of four vertices $x$, $y$, $z$, $w$ 
with $(z, w), (w, x) \in P$ and $(x, y), (w, y), (z, y) \in F$ 
(Fig.~\ref{figs:lemma3:critical-edges}\subref{figs:lemma3:type3}). 

\begin{figure*}[t]
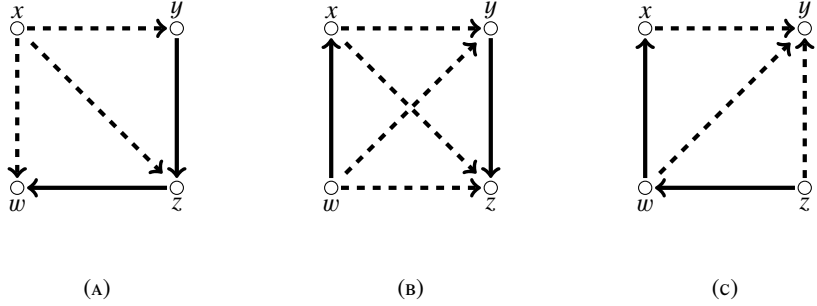

  \centering
  \subcaptionbox{\label{figs:lemma3:type1}}{\input{figs/lemma3/type1}}
  \subcaptionbox{\label{figs:lemma3:type2}}{\input{figs/lemma3/type2}}
  \subcaptionbox{\label{figs:lemma3:type3}}{\input{figs/lemma3/type3}}
  \caption{
    (a) A type-1 critical edge $(x, y)$. 
    (b) A type-2 critical edge $(x, y)$. 
    (c) A type-3 critical edge $(x, y)$. 
    Arrows and dashed arrows denote the same type of edges 
    as in Fig.~\ref{figs:alternating-cycles}. 
  }
  \label{figs:lemma3:critical-edges}
\end{figure*}

For any edge $(u, v) \in F$, 
there is a directed path from $u$ to $v$ 
consisting of a critical edge and edges in $P$. 
For example, in Fig.~\ref{figs:lemma3:critical-edges}\subref{figs:lemma3:type1}, 
there is a path $(x, y, w)$ for $(x, w) \in F$. 
Recall that $P + F$ contains a cycle. 
Therefore, $P + F$ contains a cycle consisting of critical edges and edges in $P$. 

Let $C$ be a shortest cycle consisting of critical edges and edges in $P$. 
If there are three vertices $u, v, w$ consecutive on $C$ with $(u, v), (v, w) \in P$, 
then $(u, w) \in P$. Hence we have 
a shorter cycle consisting of critical edges and edges in $P$, 
contradicting the minimality of length of $C$. 
Thus there are no three vertices $u, v, w$ consecutive on $C$ with $(u, v), (v, w) \in P$. 

Recall that $P + F$ contains no alternating cycles. 
Therefore, there are three vertices $u, v, w$ consecutive on $C$ with $(u, v), (v, w) \in F$. 
If $(u, w) \in P$ or $(w, u) \in P$, then we have 
a shorter cycle consisting of critical edges and edges in $P$, 
a contradiction. 
Thus $uw \in E(\overline{G})$. 
If $(u, w) \in F$ and $(u, w)$ is a critical edge, then we have 
a shorter cycle consisting of critical edges and edges in $P$, a contradiction. 
Thus we have the following. 
\begin{claim}\label{claim:3-3}
If $(u, w) \in F$ then $(u, w)$ is not a critical edge. 
\end{claim}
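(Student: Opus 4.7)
The plan is to carry out a one-line minimality argument that formalizes the shortcut already sketched in the paragraph immediately preceding the claim. I would argue by contradiction: assume that $(u,w)\in F$ and that $(u,w)$ is a critical edge, and produce a cycle of critical edges and edges of $P$ strictly shorter than $C$, contradicting the choice of $C$ as a shortest such cycle.

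Concretely, in the setup for Claim~\ref{claim:3-3}, the vertices $u,v,w$ are three consecutive vertices on $C$ with $(u,v),(v,w)\in F$, and we have already verified $uw\in E(\overline{G})$. Since every edge of $C$ is critical or lies in $P$, the two edges $(u,v)$ and $(v,w)$ must be critical. I would then form $C'$ from $C$ by replacing the length-two subpath $u\to v\to w$ with the single edge $(u,w)$. Every edge of $C'$ outside the modified portion is unchanged and so retains the critical-or-$P$ form; on the modified portion, $(u,w)$ is critical by assumption.

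Before drawing the contradiction, I need to check that $C'$ is genuinely a shorter cycle rather than a degenerate walk. We have $u\ne v$ and $v\ne w$ because $\overline{G}$ has no loops, and $u\ne w$ because $uw\in E(\overline{G})$; so $C'$ is a cycle of length exactly $|C|-1$ consisting entirely of critical edges and edges of $P$. This contradicts the minimality of $C$, completing the proof.

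The argument is essentially a single combinatorial substitution; the only thing to be attentive to is the distinctness of $u,v,w$, which ensures that the shortcut really does reduce the length by one. I do not foresee any subtler obstacle, since the entire content of the claim is that such a shortcut is incompatible with the minimality of $C$.
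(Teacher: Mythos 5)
Your proof is correct and matches the paper's own argument exactly: the claim is proved by the same one-line minimality argument (shortcut $u\to v\to w$ by the critical edge $(u,w)$ to obtain a shorter cycle of critical edges and edges of $P$, contradicting the choice of $C$). Your extra verification that $u$, $v$, $w$ are distinct, so the shortcut genuinely reduces the length, is a sound and useful piece of bookkeeping that the paper leaves implicit.
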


We distinguish nine cases (Fig.~\ref{figs:lemma3}). 
Case 1-1: $(u, v)$ is type-1 critical and $(v, w)$ is type-1 critical; 
Case 1-2: $(u, v)$ is type-1 critical and $(v, w)$ is type-2 critical; 
Case 1-3: $(u, v)$ is type-1 critical and $(v, w)$ is type-3 critical; 
Case 2-1: $(u, v)$ is type-2 critical and $(v, w)$ is type-1 critical; 
Case 2-2: $(u, v)$ is type-2 critical and $(v, w)$ is type-2 critical; 
Case 2-3: $(u, v)$ is type-2 critical and $(v, w)$ is type-3 critical; 
Case 3-1: $(u, v)$ is type-3 critical and $(v, w)$ is type-1 critical; 
Case 3-2: $(u, v)$ is type-3 critical and $(v, w)$ is type-2 critical; 
Case 3-3: $(u, v)$ is type-3 critical and $(v, w)$ is type-3 critical. 

\begin{figure*}[t]
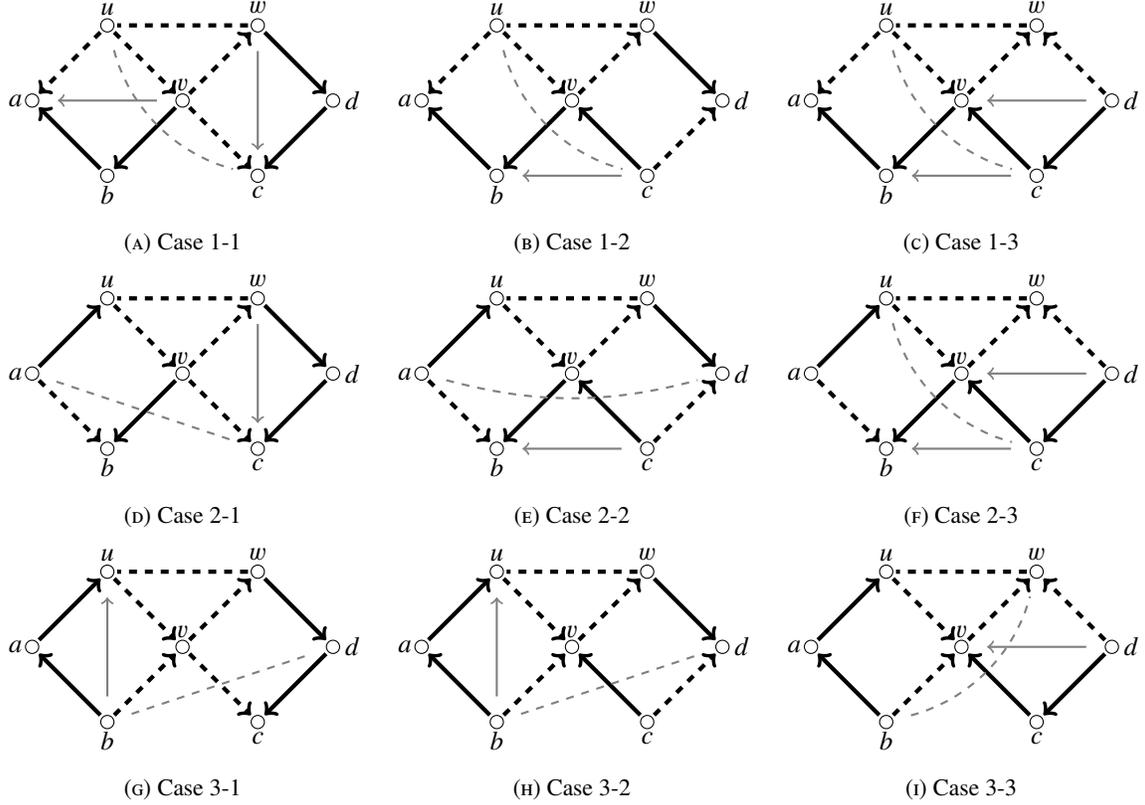

  \centering
  \subcaptionbox{Case 1-1\label{figs:lemma3:case11}}{\input{figs/lemma3/case11}}
  \subcaptionbox{Case 1-2\label{figs:lemma3:case12}}{\input{figs/lemma3/case12}}
  \subcaptionbox{Case 1-3\label{figs:lemma3:case13}}{\input{figs/lemma3/case13}}
\\
  \subcaptionbox{Case 2-1\label{figs:lemma3:case21}}{\input{figs/lemma3/case21}}
  \subcaptionbox{Case 2-2\label{figs:lemma3:case22}}{\input{figs/lemma3/case22}}
  \subcaptionbox{Case 2-3\label{figs:lemma3:case23}}{\input{figs/lemma3/case23}}
\\
  \subcaptionbox{Case 3-1\label{figs:lemma3:case31}}{\input{figs/lemma3/case31}}
  \subcaptionbox{Case 3-2\label{figs:lemma3:case32}}{\input{figs/lemma3/case32}}
  \subcaptionbox{Case 3-3\label{figs:lemma3:case33}}{\input{figs/lemma3/case33}}
  \caption{
    Illustrating the proof of Claims~\ref{claim:3-4} and~\ref{claim:3-5}. 
    Arrows, dashed arrows, and dashed lines denote the same type of edges 
    as in Fig.~\ref{figs:alternating-cycles}. 
  }
  \label{figs:lemma3}
\end{figure*}

\begin{claim}\label{claim:3-4}
In Cases 2-1, 2-2, and 3-2, the orientation $P + F$ contains a regular obstruction. 
In Cases 1-1, 3-1, and 3-3, the orientation $P + F$ contains a skewed obstruction. 
\end{claim}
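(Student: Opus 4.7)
My plan is to handle each of the six cases in turn, using the two critical-edge definitions to introduce four auxiliary vertices $a, b, c, d$ associated respectively with the type of $(u,v)$ and $(v,w)$. Together with $u, v, w$ this gives a seven-vertex configuration on which a subset of $P$-edges and $F$-edges is already forced; the task is to extend this to a determination of every edge among the seven vertices and then read off either a regular or skewed obstruction on six of them.

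The extension proceeds in three waves. First, transitivity of $P$ supplies a few additional $P$-edges (for example, in Case 1-1 the chains $(v,b),(b,a) \in P$ and $(w,d),(d,c) \in P$ force $(v,a),(w,c) \in P$). Second, for every still-undetermined pair $\{x,y\}$ among the seven vertices, I rule out both $(x,y) \in P$ and $(y,x) \in P$: each would either violate $P$-transitivity against an existing $F$-edge, or else combine with already-known edges to yield a forbidden $\mathbf{2+2}$, $\mathbf{3+1}$, or $\mathbf{2+1}$ in $P + F$, contradicting our hypothesis on $F$. This forces $xy \in E(\overline{G})$. Third, I exhibit a $C_4$ or $K_{1,3}$ of $\overline{G}$ containing $xy$ using Propositions~\ref{prop:for 2+2} and~\ref{prop:for 3+1}, which shows $xy \in E_o$, and then Propositions~\ref{prop:orientation of 2+2} and~\ref{prop:orientation of 3+1} (again together with the no-$\mathbf{2+1}$ hypothesis) pin down its orientation in $F$.

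Once every edge is classified, the obstruction is obtained by selecting six of the seven vertices. For the three regular-obstruction cases (2-1, 2-2, 3-2), the $P$-edges supplied by the two critical-edge definitions plus the original path $u \to v \to w$ give three pairwise-disjoint $P$-edges that sit at the positions $v_0v_1, v_2v_3, v_4v_5$ of a hexagon, linked by the $F$-edges $(v_2,v_1), (v_4,v_3), (v_0,v_5)$ produced in waves two and three. For the three skewed-obstruction cases (1-1, 3-1, 3-3), the type-1 or type-3 side contributes a $P$-path of length two rather than a single $P$-edge; this two-step path serves as $v_0 \to v_1 \to v_2$ in the skewed pattern, and the forced orientation of the relevant $C_4$ or $K_{1,3}$ provides the extra $F$-edge $(v_5, v_3)$ needed in the definition.

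The main obstacle is the bookkeeping in the second wave. In each case there are on the order of $\binom{7}{2}$ pairs to classify, and discarding each of the two possible $P$-orientations of a given pair often requires invoking several distinct forbidden-configuration or transitivity arguments with different witnesses. The mechanics are uniform but the witnesses shift from case to case, which is precisely why the claim is accompanied by a nine-panel figure rather than one generic diagram. Once these derivations are tabulated, identifying the obstruction and checking its defining relations is mechanical.
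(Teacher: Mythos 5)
Your overall plan matches the paper's structure at a high level: introduce the auxiliary vertices $a,b,c,d$ from the two critical-edge definitions, push through distinctness and membership in $E(\overline{G})$ via transitivity and the forbidden-configuration hypotheses, invoke Propositions~\ref{prop:for 2+2} and~\ref{prop:for 3+1} to exhibit a $C_4$ or $K_{1,3}$, and finally read off the obstruction. However, there is a real gap at the last step. You claim that Propositions~\ref{prop:orientation of 2+2} and~\ref{prop:orientation of 3+1}, together with the absence of $\mathbf{2+1}$, ``pin down'' the orientation of $uw$ in $F$. They do not. Those propositions only say that every $C_4$ or $K_{1,3}$ is oriented in one of two consistent ways; they give no information about \emph{which} of the two holds, and all three edges $uv,vw,uw$ are in $F$, so no forbidden-configuration argument distinguishes $(u,w)\in F$ from $(w,u)\in F$. (Check: if $(u,w)\in F$ along with $(u,v),(v,w)\in F$, there is no $P$-edge among $u,v,w$, hence no $\mathbf{2+1}$, and no new $\mathbf{2+2}$ or $\mathbf{3+1}$ is created.) The paper's resolution, which your sketch omits, is Claim~\ref{claim:3-3}: if $(u,w)\in F$, the auxiliary vertices would witness that $(u,w)$ is a critical edge of the same type, and then $C$ could be shortened by replacing $(u,v),(v,w)$ with $(u,w)$, contradicting the minimality of $C$. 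This minimality argument is the crux that fixes the orientation $(w,u)\in F$; without it, the obstruction cannot be assembled.

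A smaller issue: you state that you rule out both $P$-orientations of ``every still-undetermined pair'' among the seven vertices and thereby classify all $\binom{7}{2}$ pairs. This overreaches; several pairs (for example $b$ and $d$ in Case~1-1) are simply not determined by the hypotheses, and trying to force a classification of them would fail. The paper only determines the handful of edges that actually appear in the target obstruction, and that is all that is needed. Your remark about why Cases~1-1, 3-1, 3-3 yield skewed rather than regular obstructions is essentially correct, though the cleaner discriminator is the presence or absence of a type-2 critical edge: whenever one of the two consecutive critical edges is type-2, its associated single $P$-edge can serve as the base of a regular obstruction; when both are type-1 or type-3, only $P$-paths of length two are available and a skewed obstruction results.
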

\begin{proof}[Proof of Claim~\ref{claim:3-4}]
We prove the claim for each case. 
\par
\emph{Case~1-1} 
(Fig.~\ref{figs:lemma3}\subref{figs:lemma3:case11}). 
Since $(u, v)$ is type-1 critical, 
there are two other vertices $a, b$ with 
$(v, b), (b, a) \in P$ and $(u, a) \in F$. 
Note that $(v, b), (b, a) \in P$ implies $(v, a) \in P$. 
We have $a \neq w$ from $(v, a) \in P$ and $(v, w) \in F$. 
We also have $b \neq w$ from $(v, b) \in P$ and $(v, w) \in F$. 
Therefore, the five vertices $u, v, w, a, b$ are distinct. 
\par
Since $(v, w)$ is type-1 critical, 
there are two other vertices $c, d$ with 
$(w, d), (d, c) \in P$ and $(v, c) \in F$. 
Note that $(w, d), (d, c) \in P$ implies $(w, c) \in P$. 
We have $c \neq u$ from $(v, c), (u, v) \in F$. 
We also have $c \neq b$ from $(v, c) \in F$ and $(v, b) \in P$. 
Moreover, we have $c \neq a$ from $(v, c) \in F$ and $(v, a) \in P$. 
Therefore, the six vertices $u, v, w, a, b, c$ are distinct. 
We have $d \neq u$ from $(w, d) \in P$ and $uw \in E(\overline{G})$. 
If $d = b$ then $(v, b), (d, c) \in P$ implies $(v, c) \in P$, 
contradicting $(v, c) \in F$. Thus $d \neq b$. 
Similarly, 
if $d = a$ then $(v, a), (d, c) \in P$ implies $(v, c) \in P$, 
contradicting $(v, c) \in F$. Thus $d \neq a$. 
Therefore, the seven vertices $u, v, w, a, b, c, d$ are distinct. 
\par
If $(c, u) \in P$ then $(w, c) \in P$ implies $(w, u) \in P$, 
contradicting $uw \in E(\overline{G})$. 
If $(u, c) \in P$ then $(u, c, v, a)$ is 
a forbidden configuration for $\mathbf{2+2}$, 
a contradiction. Thus $uc \in E(\overline{G})$. 
Proposition~\ref{prop:for 3+1} implies that 
the vertices $w, d, c, u$ induce $K_{1, 3}$ in $\overline{G}$. 
If $(u, w) \in F$ then $(u, w)$ is type-1 critical, 
contradicting Claim~\ref{claim:3-3}. 
Thus $(w, u) \in F$, and then 
$(v, b, a, u, w, c)$ is a skewed obstruction. 
\par
\emph{Case~2-1} 
(Fig.~\ref{figs:lemma3}\subref{figs:lemma3:case21}). 
Since $(u, v)$ is type-2 critical, 
there are two other vertices $a, b$ with 
$(a, u), (v, b) \in P$ and $(a, b) \in F$. 
We have $a \neq w$ from $(a, u) \in P$ and $uw \in E(\overline{G})$. 
We also have $b \neq w$ from $(v, b) \in P$ and $(v, w) \in F$. 
Therefore, the five vertices $u, v, w, a, b$ are distinct. 
\par
Since $(v, w)$ is type-1 critical, 
there are two other vertices $c, d$ with 
$(w, d), (d, c) \in P$ and $(v, c) \in F$. 
Note that $(w, d), (d, c) \in P$ implies $(w, c) \in P$. 
We have $c \neq u$ from $(v, c), (u, v) \in F$. 
We also have $c \neq b$ from $(v, c) \in F$ and $(v, b) \in P$. 
If $c = a$ then $(w, c), (a, u) \in P$ implies $(w, u) \in P$, 
contradicting $uw \in E(\overline{G})$. Thus $c \neq a$. 
Therefore, the six vertices $u, v, w, a, b, c$ are distinct. 
We have $d \neq u$ from $(w, d) \in P$ and $uw \in E(\overline{G})$. 
If $d = b$ then $(v, b), (d, c) \in P$ implies $(v, c) \in P$, 
contradicting $(v, c) \in F$. Thus $d \neq b$. 
If $d = a$ then $(w, d), (a, u) \in P$ implies $(w, u) \in P$, 
contradicting $uw \in E(\overline{G})$. Thus $d \neq a$. 
Therefore, the seven vertices $u, v, w, a, b, c, d$ are distinct. 
\par
If $(c, a) \in P$ then $(w, c), (a, u) \in P$ implies $(w, u) \in P$, 
contradicting $uw \in E(\overline{G})$. 
If $(a, c) \in P$ then $(a, c, v, b)$ is 
a forbidden configuration for $\mathbf{2+2}$, 
a contradiction. Thus $ac \in E(\overline{G})$. 
Proposition~\ref{prop:for 2+2} implies that 
$(a, w, u, c)$ is $C_4$ in $\overline{G}$. 
If $(u, w) \in F$ then $(u, w)$ is type-2 critical, 
contradicting Claim~\ref{claim:3-3}. 
Thus $(w, u) \in F$, and then 
$(a, u, w, c, v, b)$ is a regular obstruction. 
\par
\emph{Case~2-2} 
(Fig.~\ref{figs:lemma3}\subref{figs:lemma3:case22}). 
As in Case~2-1, there are two vertices $a, b$ with 
$(a, u), (v, b) \in P$ and $(a, b) \in F$, and 
the five vertices $u, v, w, a, b$ are distinct. 
Since $(v, w)$ is type-2 critical, 
there are two other vertices $c, d$ with 
$(c, v), (w, d) \in P$ and $(c, d) \in F$. 
We have $c \neq u$ from $(c, v) \in P$ and $(u, v) \in F$. 
We also have $c \neq b$ from $(c, v), (v, b) \in P$. 
Thus we have $(c, b) \in P$ from $(c, v), (v, b) \in P$. 
We have $c \neq a$ from $(c, b) \in P$ and $(a, b) \in F$. 
Therefore, the six vertices $u, v, w, a, b, c$ are distinct. 
We have $d \neq u$ from $(w, d) \in P$ and $uw \in E(\overline{G})$. 
We also have $d \neq b$ from $(c, d) \in F$ and $(c, b) \in P$. 
If $d = a$ then $(w, d), (a, u) \in P$ implies $(w, u) \in P$, 
contradicting $uw \in E(\overline{G})$. Thus $d \neq a$. 
Therefore, the seven vertices $u, v, w, a, b, c, d$ are distinct. 
\par
If $(d, a) \in P$ then $(w, d), (a, u) \in P$ implies $(w, u) \in P$, 
contradicting $uw \in E(\overline{G})$. 
If $(a, d) \in P$ then $(a, d, c, b)$ is 
a forbidden configuration for $\mathbf{2+2}$, 
a contradiction. Thus $ad \in E(\overline{G})$. 
Proposition~\ref{prop:for 2+2} implies that 
$(a, w, u, d)$ is $C_4$ in $\overline{G}$. 
If $(u, w) \in F$ then $(u, w)$ is type-2 critical, 
contradicting Claim~\ref{claim:3-3}. 
Thus $(w, u) \in F$, and then 
$(a, u, w, d, c, b)$ is a regular obstruction. 
\par
\emph{Case~3-1} 
(Fig.~\ref{figs:lemma3}\subref{figs:lemma3:case31}). 
Since $(u, v)$ is type-3 critical, 
there are two other vertices $a, b$ with 
$(b, a), (a, u) \in P$ and $(b, v) \in F$. 
Note that $(b, a), (a, u) \in P$ implies $(b, u) \in P$. 
We have $a \neq w$ from $(a, u) \in P$ and $uw \in E(\overline{G})$. 
We also have $b \neq w$ from $(b, v), (v, w) \in F$. 
Therefore, the five vertices $u, v, w, a, b$ are distinct. 
\par
Since $(v, w)$ is type-1 critical, 
there are two other vertices $c, d$ with 
$(w, d), (d, c) \in P$ and $(v, c) \in F$. 
We have $c \neq u$ from $(v, c), (u, v) \in F$. 
We also have $c \neq b$ from $(v, c), (b, v) \in F$. 
If $c = a$ then $(w, d), (d, c), (a, u) \in P$ implies $(w, u) \in P$, 
contradicting $uw \in E(\overline{G})$. Thus $c \neq a$. 
Therefore, the six vertices $u, v, w, a, b, c$ are distinct. 
We have $d \neq u$ from $(w, d) \in P$ and $uw \in E(\overline{G})$. 
If $d = b$ then $(w, d), (b, u) \in P$ implies $(w, u) \in P$, 
contradicting $uw \in E(\overline{G})$. Thus $d \neq b$. 
Similarly, 
if $d = a$ then $(w, d), (a, u) \in P$ implies $(w, u) \in P$, 
contradicting $uw \in E(\overline{G})$. Thus $d \neq a$. 
Therefore, the seven vertices $u, v, w, a, b, c, d$ are distinct. 
\par
If $(d, b) \in P$ then $(w, d), (b, u) \in P$ implies $(w, u) \in P$, 
contradicting $uw \in E(\overline{G})$. 
If $(b, d) \in P$ then $(b, d, c, v)$ is 
a forbidden configuration for $\mathbf{3+1}$, 
a contradiction. Thus $bd \in E(\overline{G})$. 
Proposition~\ref{prop:for 2+2} implies that 
$(b, w, u, d)$ is $C_4$ in $\overline{G}$. 
If $(u, w) \in F$ then $(u, w)$ is type-2 critical, 
contradicting Claim~\ref{claim:3-3}. 
Thus $(w, u) \in F$, and then 
$(w, d, c, v, b, u)$ is a skewed obstruction. 
\par
\emph{Case~3-2} 
(Fig.~\ref{figs:lemma3}\subref{figs:lemma3:case32}). 
As in Case~3-1, there are two vertices $a, b$ with 
$(b, a), (a, u) \in P$ and $(b, v) \in F$, and 
the five vertices $u, v, w, a, b$ are distinct. 
Note that $(b, a), (a, u) \in P$ implies $(b, u) \in P$. 
Since $(v, w)$ is type-2 critical, 
there are two other vertices $c, d$ with 
$(c, v), (w, d) \in P$ and $(c, d) \in F$. 
We have $c \neq u$ from $(c, v) \in P$ and $(u, v) \in F$. 
We also have $c \neq b$ from $(c, v) \in P$ and $(b, v) \in F$. 
If $c = a$ then $(b, a), (c, v) \in P$ implies $(b, v) \in P$, 
contradicting $(b, v) \in F$. Thus $c \neq a$. 
Therefore, the six vertices $u, v, w, a, b, c$ are distinct. 
We have $d \neq u$ from $(w, d) \in P$ and $uw \in E(\overline{G})$. 
If $d = b$ then $(w, d), (b, u) \in P$ implies $(w, u) \in P$, 
contradicting $uw \in E(\overline{G})$. Thus $d \neq b$. 
Similarly, 
if $d = a$ then $(w, d), (a, u) \in P$ implies $(w, u) \in P$, 
contradicting $uw \in E(\overline{G})$. Thus $d \neq a$. 
Therefore, the seven vertices $u, v, w, a, b, c, d$ are distinct. 
\par
If $(d, b) \in P$ then $(w, d), (b, u) \in P$ implies $(w, u) \in P$, 
contradicting $uw \in E(\overline{G})$. 
If $(b, d) \in P$ then $(b, d, c, v)$ is 
a forbidden configuration for $\mathbf{2+2}$, 
a contradiction. Thus $bd \in E(\overline{G})$. 
Proposition~\ref{prop:for 2+2} implies that 
$(b, w, u, d)$ is $C_4$ in $\overline{G}$. 
If $(u, w) \in F$ then $(u, w)$ is type-2 critical, 
contradicting Claim~\ref{claim:3-3}. 
Thus $(w, u) \in F$, and then 
$(b, u, w, d, c, v)$ is a regular obstruction. 
\par
\emph{Case~3-3} 
(Fig.~\ref{figs:lemma3}\subref{figs:lemma3:case33}). 
As in Case~3-1, there are two vertices $a, b$ with 
$(b, a), (a, u) \in P$ and $(b, v) \in F$, and 
the five vertices $u, v, w, a, b$ are distinct. 
Since $(v, w)$ is type-3 critical, 
there are two other vertices $c, d$ with 
$(d, c), (c, v) \in P$ and $(d, w) \in F$. 
Note that $(d, c), (c, v) \in P$ implies $(d, v) \in P$. 
We have $c \neq u$ from $(c, v) \in P$ and $(u, v) \in F$. 
We also have $c \neq b$ from $(c, v) \in P$ and $(b, v) \in F$. 
If $c = a$ then $(b, a), (c, v) \in P$ implies $(b, v) \in P$, 
contradicting $(b, v) \in F$. Thus $c \neq a$. 
Therefore, the six vertices $u, v, w, a, b, c$ are distinct. 
We have $d \neq u$ from $(d, v) \in P$ and $(u, v) \in F$. 
We also have $d \neq b$ from $(d, v) \in P$ and $(b, v) \in F$. 
If $d = a$ then $(b, a), (d, v) \in P$ implies $(b, v) \in P$, 
contradicting $(b, v) \in F$. Thus $d \neq a$. 
Therefore, the seven vertices $u, v, w, a, b, c, d$ are distinct. 
\par
If $(w, b) \in P$ then $(b, a), (a, u) \in P$ implies $(w, u) \in P$, 
contradicting $uw \in E(\overline{G})$. 
If $(b, w) \in P$ then $(b, w, d, v)$ is 
a forbidden configuration for $\mathbf{2+2}$, 
a contradiction. Thus $bw \in E(\overline{G})$. 
Proposition~\ref{prop:for 3+1} implies that 
the vertices $b, a, u, w$ induce $K_{1, 3}$ in $\overline{G}$. 
If $(u, w) \in F$ then $(u, w)$ is type-3 critical, 
contradicting Claim~\ref{claim:3-3}. 
Thus $(w, u) \in F$, and then 
$(b, a, u, w, d, v)$ is a skewed obstruction. 
\end{proof}

\begin{claim}\label{claim:3-5}
In Cases 1-2, 1-3, and 2-3, there is a sequence of vertices $(u, x, y, w)$ such that 
\begin{enumerate}[\bfseries --]
\item $(u, x), (y, w) \in F$ and $(x, y) \in P$; 
\item $(u, x)$ is a critical edge with the same type of $(u, v)$; 
\item $(y, w)$ is a critical edge with the same type of $(v, w)$. 
\end{enumerate}
\end{claim}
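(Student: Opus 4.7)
The plan is to handle the three cases by a common template. In each case I will name the witnesses of the two critical edges as $a, b$ for $(u, v)$ and $c, d$ for $(v, w)$ according to their respective types, and take $(x, y) = (c, b)$, where $c$ is a $P$-predecessor of $v$ and $b$ is a $P$-successor of $v$ among the witnesses. Specifically, in Case~1-2 the witnesses yield $(v, b), (b, a) \in P$ with $(u, v), (u, b), (u, a) \in F$, and $(c, v), (w, d) \in P$ with $(v, w), (v, d), (c, w), (c, d) \in F$; Cases~1-3 and~2-3 produce analogous witness patterns, with the $P$-chain extended below $v$ as $d \to c \to v$ (from type-3) or above $v$ by $(a, u), (v, b) \in P$ (from type-2). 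In every case, $(c, b) \in P$ is immediate from $(c, v), (v, b) \in P$.

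The first substantial step is the pairwise distinctness of $u, v, w, a, b, c, d$, which follows the same case-analysis style as Claim~\ref{claim:3-4}: each candidate coincidence closes a $P$-chain contradicting a known $F$-edge or a non-edge of $G$. Next I would show $uc, bw \in E(\overline{G})$. For $uc$: the direction $(u, c) \in P$ is killed by $(u, c), (c, v) \in P$ against $(u, v) \in F$, while $(c, u) \in P$ is killed by extending along the known $P$-chains and $F$-edges to reach either a contradictory $P$-edge or a forbidden configuration for $\mathbf{2+2}$ or $\mathbf{3+1}$ against the hypothesis on $P + F$; the argument for $bw$ is symmetric. With these non-edges in place, Propositions~\ref{prop:for 2+2} and~\ref{prop:for 3+1} identify the relevant four-vertex sets as $C_4$ or $K_{1,3}$ in $\overline{G}$, so $uc, bw \in E_o$ and $F$ has oriented them. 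Propositions~\ref{prop:orientation of 2+2} and~\ref{prop:orientation of 3+1} then force $(u, c), (b, w) \in F$, because each wrong direction would itself be critical (of an appropriately matching type) and would close a shorter cycle of critical edges and $P$-edges, contradicting the minimality of $C$ exploited in the style of Claim~\ref{claim:3-3}.

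Certifying the types is then a matter of reading off the witnesses. In Case~1-2, $(u, c)$ is type-1 critical via the $P$-chain $c \to b \to a$ together with $(u, c), (u, b), (u, a) \in F$, and $(b, w)$ is type-2 critical via $(c, b), (w, d) \in P$ together with $(b, w), (b, d), (c, w), (c, d) \in F$; the auxiliary $F$-edge $(b, d)$ is produced by the same style of argument used for $(b, w) \in F$. Case~1-3 delivers a type-3 edge $(b, w)$ with witnesses $d \to c \to b$ in $P$ and $(b, w), (c, w), (d, w) \in F$, while Case~2-3 requires an additional $F$-edge such as $(a, c)$, again produced by the same method. The main obstacle will be this auxiliary $F$-orientation bookkeeping---particularly in Case~2-3---combined with the careful ruling-out of the opposite orientations of $uc$ and $bw$; once the template for showing that a wrong orientation would close a shorter cycle via Claim~\ref{claim:3-3} is fixed, each of the three cases reduces to a routine, if tedious, verification in the style of Claim~\ref{claim:3-4}.
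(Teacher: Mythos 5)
Your plan has a genuine gap: you misidentify where Claim~\ref{claim:3-3} does its work, and this leaves the crucial step of ruling out $(c, u) \in P$ unproved. You say $(c, u) \in P$ ``is killed by extending along the known $P$-chains and $F$-edges to reach either a contradictory $P$-edge or a forbidden configuration,'' but that is not what happens. In Case~1-2, for instance, $(c, u) \in P$ together with $(w, d) \in P$ and $cd, uw \in E(\overline{G})$ gives a $C_4 = (c, w, u, d)$ of $\overline{G}$ via Proposition~\ref{prop:for 2+2}; Proposition~\ref{prop:orientation of 2+2} applied to $(c, d) \in F$ then forces $(u, w) \in F$, and the resulting edges make $(u, w)$ type-2 critical, contradicting Claim~\ref{claim:3-3}. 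This contradiction is available only because $u$ and $w$ both lie on $C$, so a critical edge $(u, w)$ would shorten the cycle; no forbidden configuration and no $P$-contradiction appear. Conversely, you invoke Claim~\ref{claim:3-3} (``each wrong direction would itself be critical \dots and would close a shorter cycle'') to settle the orientations of $uc$ and $bw$ in $F$, but $c$ and $b$ are not vertices of $C$, so a critical $(c, u)$ or $(w, b)$ yields no shortcut of $C$ and Claim~\ref{claim:3-3} is silent there. In fact the orientation of $uc$ is a free consequence of Proposition~\ref{prop:orientation of 3+1} once $uc \in E(\overline{G})$ and $(u, a) \in F$ are in hand; the minimality of $C$ plays no role at that point.

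A secondary point: your choice $(x, y) = (c, b)$ is unnecessarily heavy. The paper simply takes $(x, y) = (c, v)$, keeping $(v, w)$ itself as the second critical edge, so only the single new edge $uc$ must be shown to be in $E(\overline{G})$, oriented in $F$, and certified critical of the right type. Your symmetric template repeats the whole analysis for $bw$ (and, in Case~1-2, for $bd$ as well), roughly doubling the casework without any payoff.
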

\begin{proof}[Proof of Claim~\ref{claim:3-5}]
We prove the claim for each case. 
\par
\emph{Case~1-2} 
(Fig.~\ref{figs:lemma3}\subref{figs:lemma3:case12}). 
As in Case~1-1, there are two vertices $a, b$ with 
$(v, b), (b, a) \in P$ and $(u, a) \in F$, and 
the five vertices $u, v, w, a, b$ are distinct. 
Since $(v, w)$ is type-2 critical, 
there are two other vertices $c, d$ with 
$(c, v), (w, d) \in P$ and $(c, d) \in F$. 
We have $c \neq u$ from $(c, v) \in P$ and $(u, v) \in F$. 
We also have $c \neq b$ from $(c, v), (v, b) \in P$. 
Thus we have $(c, b) \in P$ from $(c, v), (v, b) \in P$. 
We have $c \neq a$ from $(c, b), (b, a) \in P$. 
Therefore, the six vertices $u, v, w, a, b, c$ are distinct. 
We have $d \neq u$ from $(w, d) \in P$ and $uw \in E(\overline{G})$. 
We also have $d \neq b$ from $(c, d) \in F$ and $(c, b) \in P$. 
If $d = a$ then $(c, b), (b, a) \in P$ implies $(c, a) \in P$, 
contradicting $(c, d) \in F$. Thus $d \neq a$. 
Therefore, the seven vertices $u, v, w, a, b, c, d$ are distinct. 
\par
If $(u, c) \in P$ then $(c, v) \in P$ implies $(u, v) \in P$, 
contradicting $(u, v) \in F$. 
Suppose $(c, u) \in P$. 
Proposition~\ref{prop:for 2+2} implies that 
$(c, w, u, d)$ is $C_4$ in $\overline{G}$. 
We have $(u, w) \in F$ from $(c, d) \in F$, 
but $(u, w)$ is type-2 critical, 
contradicting Claim~\ref{claim:3-3}. 
Thus $uc \in E(\overline{G})$. 
Then Proposition~\ref{prop:for 3+1} implies that 
the vertices $c, b, a, u$ induce $K_{1, 3}$ in $\overline{G}$. 
We have $(u, c) \in F$ from $(u, a) \in F$, and 
$(u, c)$ is type-1 critical. 
Therefore, $(u, c, v, w)$ is a desired sequence. 
\par
\emph{Case~1-3} 
(Fig.~\ref{figs:lemma3}\subref{figs:lemma3:case13}). 
As in Case~1-1, there are two vertices $a, b$ with 
$(v, b), (b, a) \in P$ and $(u, a) \in F$, and 
the five vertices $u, v, w, a, b$ are distinct. 
Since $(v, w)$ is type-3 critical, 
there are two other vertices $c, d$ with 
$(d, c), (c, v) \in P$ and $(d, w) \in F$. 
Note that $(d, c), (c, v) \in P$ implies $(d, v) \in P$. 
We have $c \neq u$ from $(c, v) \in P$ and $(u, v) \in F$. 
We also have $c \neq b$ from $(c, v), (v, b) \in P$. 
Thus we have $(c, b) \in P$ from $(c, v), (v, b) \in P$. 
We have $c \neq a$ from $(c, b), (b, a) \in P$. 
Therefore, the six vertices $u, v, w, a, b, c$ are distinct. 
We have $d \neq u$ from $(d, v) \in P$ and $(u, v) \in F$. 
We also have $d \neq b$ from $(d, c), (c, b) \in P$. 
Moreover, we have $d \neq a$ from $(d, c), (c, b), (b, a) \in P$. 
Therefore, the seven vertices $u, v, w, a, b, c, d$ are distinct. 
\par
If $(u, c) \in P$ then $(c, v) \in P$ implies $(u, v) \in P$, 
contradicting $(u, v) \in F$. 
Suppose $(c, u) \in P$. 
Proposition~\ref{prop:for 3+1} implies that 
the vertices $d, c, u, w$ induce $K_{1, 3}$ in $\overline{G}$. 
We have $(u, w) \in F$ from $(d, w) \in F$, 
but $(u, w)$ is type-3 critical, 
contradicting Claim~\ref{claim:3-3}. 
Thus $uc \in E(\overline{G})$. 
Then Proposition~\ref{prop:for 3+1} implies that 
the vertices $c, b, a, u$ induce $K_{1, 3}$ in $\overline{G}$. 
We have $(u, c) \in F$ from $(u, a) \in F$, and 
$(u, c)$ is type-1 critical. 
Therefore, $(u, c, v, w)$ is a desired sequence. 
\par
\emph{Case~2-3} 
(Fig.~\ref{figs:lemma3}\subref{figs:lemma3:case23}). 
As in Case~2-1, there are two vertices $a, b$ with 
$(a, u), (v, b) \in P$ and $(a, b) \in F$, and 
the five vertices $u, v, w, a, b$ are distinct. 
Since $(v, w)$ is type-3 critical, 
there are two other vertices $c, d$ with 
$(d, c), (c, v) \in P$ and $(d, w) \in F$. 
Note that $(d, c), (c, v) \in P$ implies $(d, v) \in P$. 
We have $c \neq u$ from $(c, v) \in P$ and $(u, v) \in F$. 
We also have $c \neq b$ from $(c, v), (v, b) \in P$. 
Thus we have $(c, b) \in P$ from $(c, v), (v, b) \in P$. 
We have $c \neq a$ from $(c, b) \in P$ and $(a, b) \in F$. 
Therefore, the six vertices $u, v, w, a, b, c$ are distinct. 
We have $d \neq u$ from $(d, v) \in P$ and $(u, v) \in F$. 
We also have $d \neq b$ from $(d, c), (c, b) \in P$. 
If $d = a$ then $(d, c), (c, b) \in P$ implies $(d, b) \in P$, 
contradicting $(a, b) \in F$. Thus $d \neq a$. 
Therefore, the seven vertices $u, v, w, a, b, c, d$ are distinct. 
\par
If $(u, c) \in P$ then $(c, v) \in P$ implies $(u, v) \in P$, 
contradicting $(u, v) \in F$. 
Suppose $(c, u) \in P$. 
We have from Proposition~\ref{prop:for 3+1} that 
the vertices $d, c, u, w$ induce $K_{1, 3}$ in $\overline{G}$. 
We have $(u, w) \in F$ from $(d, w) \in F$, 
but $(u, w)$ is type-3 critical, 
contradicting Claim~\ref{claim:3-3}. 
Thus $uc \in E(\overline{G})$. 
Then Proposition~\ref{prop:for 2+2} implies that 
$(a, c, u, b)$ is $C_4$ in $\overline{G}$. 
We have $(u, c) \in F$ from $(a, b) \in F$, and 
$(u, c)$ is type-2 critical. 
Therefore, $(u, c, v, w)$ is a desired sequence. 
\end{proof}

By Claim~\ref{claim:3-4}, we can assume that 
any two critical edges consecutive on $C$ is in Cases~1-2, 1-3, and 2-3. 
However, we can obtain an alternating cycle from $C$ 
by repeated applications of Claim~\ref{claim:3-5}, a contradiction. 
Thus the lemma holds. 
\end{proof}

\begin{Lemma}\label{lemma:recognition:4}
Let $F$ be an orientation of $E_o$ such that 
$P + F$ contains no forbidden configurations for 
$\mathbf{2+2}$, $\mathbf{3+1}$, or $\mathbf{2+1}$. 
We can compute in $O(m \bar{m}^2)$ time % O(n^6) time 
an orientation $F'$ of $E_o$ such that 
$P + F'$ contains no forbidden configurations for 
$\mathbf{2+2}$, $\mathbf{3+1}$, or $\mathbf{2+1}$, and 
$P + F'$ contains no obstructions. 
\end{Lemma}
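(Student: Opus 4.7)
The plan is to maintain an orientation $F$ of $E_o$ satisfying the hypothesis and iteratively remove obstructions from $P+F$; once no obstruction remains, Lemma~\ref{lemma:recognition:3} will certify that $P+F'$ is acyclic while the forbidden-configuration hypotheses are preserved throughout. Each iteration has two phases: detect an obstruction, then repair it by reversing a carefully chosen $F$-edge together with all edges it forces via the constraints of Propositions~\ref{prop:orientation of 2+2} and~\ref{prop:orientation of 3+1}.

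For detection, I would enumerate each $F$-edge as a candidate ``connector'' $(v_0,v_5)$ of a regular or a skewed obstruction and search for the remaining four vertices using the upper and lower sets of the endpoints together with the $\overline{G}$-neighborhoods, much as in the proof of Lemma~\ref{lemma:recognition:2}. The structural constraints in Propositions~\ref{prop:for regular obstruction} and~\ref{prop:for skewed obstruction} restrict candidate vertices enough to check each connector in $O(n)$ time, for a total cost of $O(n\bar{m})$ per round, well within the $O(m\bar{m})$ budget.

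For repair, recall that the hypothesis excludes forbidden configurations for $\mathbf{2+1}$, so Propositions~\ref{prop:orientation of 2+2} and~\ref{prop:orientation of 3+1} allow each $C_4$ and each $K_{1,3}$ of $\overline{G}$ only two admissible orientations under $F$. Linking edges belonging to a common $C_4$ or $K_{1,3}$ defines an implication graph on $E_o$; reversing an $F$-edge and propagating by BFS through this graph simultaneously toggles each $C_4$ and each $K_{1,3}$ between its two admissible orientations. Hence the resulting orientation $F'$ still avoids all three forbidden configurations, and the propagation costs $O(m\bar{m})$ per round. For termination, I would always reverse an $F$-edge of the detected obstruction whose implication class has not been toggled on a previous round, relying on a potential argument to bound the number of rounds by $O(\bar{m})$; combined with the per-round cost this gives the advertised $O(m\bar{m}^2)$ total.

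The hard part will be showing that such a progress-making reversal always exists inside the detected obstruction, and that it does not recreate a new obstruction after implication-class propagation. This requires a case analysis refining those in Claims~\ref{claim:3-4} and~\ref{claim:3-5}: one must verify that the $C_4$ and $K_{1,3}$ implications incident to the connector edge never ``rotate'' the obstruction into a fresh cycle elsewhere, and that the two admissible orientations of an implication class cannot both be obstruction-producing with respect to the ambient $F$. I expect this to follow by tracing the forced $F$-edges from Propositions~\ref{prop:for regular obstruction} and~\ref{prop:for skewed obstruction} through the corresponding $C_4$ and $K_{1,3}$ constraints, exactly as was done in the proof of Lemma~\ref{lemma:recognition:3} when chains of critical edges were reduced to a single obstruction.
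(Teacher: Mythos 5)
Your proposal diverges from the paper's approach in a fundamental way, and there are genuine gaps.  The paper does not work with implication classes or BFS propagation at all: it groups obstructions by their \emph{base} edge $(u,v)\in P$ (not by an $F$-edge connector), and for each base edge it simultaneously reverses the set $A_{(u,v)}$ (resp.\ $B_{(u,v)}$) of all fronts of regular (resp.\ skewed) obstructions with that base.  The heart of the proof is a sequence of claims showing that this batch reversal preserves all three forbidden-configuration conditions and, crucially, creates no new obstructions with any base (Claims~\ref{claim:Lemma4:A:obstruction} and~\ref{claim:Lemma4:B:skewed obstruction}), so that iterating over the $m$ edges of $P$ permanently eliminates all regular and then all skewed obstructions.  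That is exactly the part you label ``the hard part'' and then hand-wave as ``I expect this to follow''; it is not a detail but the substance of the lemma.

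There is also a concrete technical gap in the implication-class idea itself.  The $C_4$ and $K_{1,3}$ constraints of Propositions~\ref{prop:orientation of 2+2} and~\ref{prop:orientation of 3+1} are genuine two-way equivalences, but the $\mathbf{2+1}$ constraints (a $P$-edge $(x,y)$ with $xz,yz\in E_o$ forbids $(y,z),(z,x)\in F$) are one-sided ``not both'' clauses, not equivalences, so they do not live inside your implication graph.  Flipping an implication class can therefore turn a safe pair $(z,y),(x,z)\in F$ into the forbidden $(y,z),(z,x)\in F$.  Since both Propositions~\ref{prop:orientation of 2+2} and~\ref{prop:orientation of 3+1} presuppose absence of $\mathbf{2+1}$ configurations, once that is lost your claim that the toggled orientation still avoids $\mathbf{2+2}$ and $\mathbf{3+1}$ configurations collapses as well.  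Finally, the termination argument is unsubstantiated: you do not show that a detected obstruction always contains an $F$-edge in a not-yet-toggled class, so the $O(\bar m)$ bound on rounds is not established.  The paper sidesteps all of this by keying the iteration to the $P$-edges and proving the batch reversal safe directly.
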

\begin{proof}
% Definition of bases and fronts 
Let $(v_0, v_1, v_2, v_3, v_4, v_5)$ be a regular obstruction. 
%By Proposition~\ref{prop:for regular obstruction}, 
%we have $(v_5, v_3), (v_4, v_2), (v_5, v_2) \in F$. 
We call the edge $(v_0, v_1) \in P$ the \emph{base} of the obstruction 
and the edges $(v_4, v_3), (v_5, v_3), (v_4, v_2), (v_5, v_2) \in F$ 
its \emph{fronts}. 
Similarly, 
let $(v_0, v_1, v_2, v_3, v_4, v_5)$ be a skewed obstruction. 
We call the edge $(v_0, v_1) \in P$ the \emph{base} of the obstruction 
and the edges $(v_4, v_3), (v_5, v_3) \in F$ 
its \emph{fronts}. 
We can observe the following from Propositions~\ref{prop:for regular obstruction} 
and~\ref{prop:for skewed obstruction} (Fig.~\ref{figs:obstructions}). 
\begin{claim}\label{claim:for base and fronts}
If $(u, v) \in P$ is the base of an obstruction and $(x, y) \in F$ is its front, 
then $(u, x), (v, x), (y, u), (y, v) \in F$. 
\end{claim}

Let $F$ be an orientation of $E_o$ such that 
$P + F$ contains no forbidden configurations for 
$\mathbf{2+2}$, $\mathbf{3+1}$, or $\mathbf{2+1}$. 
Then, let $(u, v)$ be an edge in $P$. 
We define that 
\begin{align*}
A_{(u, v)} & = \{(x, y) \in F \colon\ 
\text{there is a regular obstruction such that $(u, v)$ is its base and $(x, y)$ is its front}\}, 
\\
A_{(u, v)}^{-1} & = \{(y, x) \colon\ (x, y) \in A_{(u, v)} \}, \text{ and }
\\
F' & = (F - A_{(u, v)}) + A_{(u, v)}^{-1}.
\end{align*}
In other words, 
$F'$ is an orientation of $E_o$ obtained from $F$ 
by reversing the direction of all edges in $A_{(u, v)}$. 

\begin{figure*}[t]
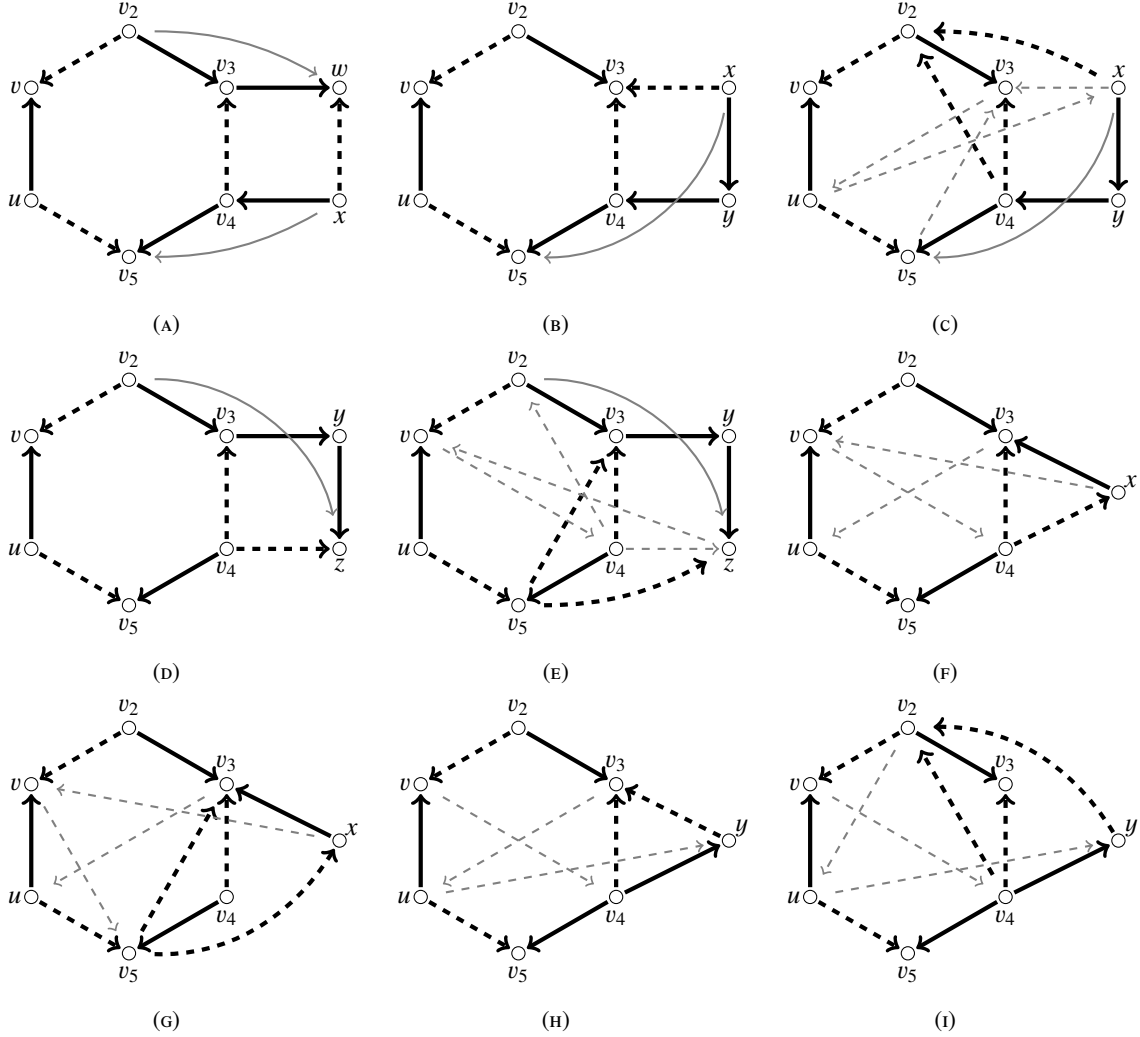

  \centering
  \subcaptionbox{\label{figs:Lemma4:A1}}{\input{figs/lemma4/A1}}
  \subcaptionbox{\label{figs:Lemma4:A2}}{\input{figs/lemma4/A2}}
  \subcaptionbox{\label{figs:Lemma4:A2s}}{\input{figs/lemma4/A2s}}
\\
  \subcaptionbox{\label{figs:Lemma4:A3}}{\input{figs/lemma4/A3}}
  \subcaptionbox{\label{figs:Lemma4:A3s}}{\input{figs/lemma4/A3s}}
  \subcaptionbox{\label{figs:Lemma4:A4}}{\input{figs/lemma4/A4}}
\\
  \subcaptionbox{\label{figs:Lemma4:A4s}}{\input{figs/lemma4/A4s}}
  \subcaptionbox{\label{figs:Lemma4:A5}}{\input{figs/lemma4/A5}}
  \subcaptionbox{\label{figs:Lemma4:A5s}}{\input{figs/lemma4/A5s}}
  \caption{
    Illustrating the proof of Claims~\ref{claim:Lemma4:A:2+2}--\ref{claim:Lemma4:A:2+1}. 
    Arrows and dashed arrows denote the same type of edges 
    as in Fig.~\ref{figs:alternating-cycles}. 
  }
  \label{figs:Lemma4:Af}
\end{figure*}

\begin{claim}\label{claim:Lemma4:A:2+2}
The orientation $P + F'$ contains no forbidden configurations for 
$\mathbf{2+2}$. 
\end{claim}
\begin{proof}[Proof of Claim~\ref{claim:Lemma4:A:2+2}]
Suppose that $P + F'$ contains a forbidden configuration $(x, y, z, w)$ 
for $\mathbf{2+2}$. 
If $(z, y), (x, w) \in A_{(u, v)}^{-1}$ then $(y, z), (w, x) \in F$, 
that is, $(x, y, z, w)$ is an alternating cycle of length 4 in $P + F$, 
contradicting Proposition~\ref{prop:alternating 4-cycle}. 
Thus either $(z, y) \in A_{(u, v)}^{-1}$ or $(x, w) \in A_{(u, v)}^{-1}$. 
Assume without loss of generality 
$(z, y) \in A_{(u, v)}^{-1}$ and $(x, w) \notin A_{(u, v)}^{-1}$. 
Then $(y, z) \in A_{(u, v)} \subseteq F$ and 
$(x, w) \in F$ but $(x, w) \notin A_{(u, v)}$. 
\par
Since $(y, z) \in A_{(u, v)}$, there is a regular obstruction 
$(u, v, v_2, v_3, v_4, v_5)$ of $P + F$ such that 
$(u, v)$ is its base and $(y, z)$ is its front. 
Suppose $(y, z) = (v_4, v_3)$ 
(Fig.~\ref{figs:Lemma4:Af}\subref{figs:Lemma4:A1}). 
We have $(v_2, w) \in P$ from $(v_2, v_3), (v_3, w) \in P$. 
We also have $(x, v_5) \in P$ from $(x, v_4), (v_4, v_5) \in P$. 
Then $(u, v, v_2, w, x, v_5)$ is a regular obstruction, and 
hence $(x, w) \in A_{(u, v)}$, a contradiction. 
Suppose $(y, z) = (v_5, v_3)$. 
We have $(v_2, w) \in P$ from $(v_2, v_3), (v_3, w) \in P$. 
Then $(u, v, v_2, w, x, v_5)$ is a regular obstruction, and 
hence $(x, w) \in A_{(u, v)}$, a contradiction. 
Suppose $(y, z) = (v_4, v_2)$. 
We have $(x, v_5) \in P$ from $(x, v_4), (v_4, v_5) \in P$. 
Then $(u, v, v_2, w, x, v_5)$ is a regular obstruction, and 
hence $(x, w) \in A_{(u, v)}$, a contradiction. 
Suppose $(y, z) = (v_5, v_3)$. 
Then $(u, v, v_2, w, x, v_5)$ is a regular obstruction, and 
hence $(x, w) \in A_{(u, v)}$, a contradiction. 
\end{proof}

\begin{claim}\label{claim:Lemma4:A:3+1}
The orientation $P + F'$ contains no forbidden configurations for 
$\mathbf{3+1}$. 
\end{claim}
\begin{proof}[Proof of Claim~\ref{claim:Lemma4:A:3+1}]
Suppose that $P + F'$ contains a forbidden configuration $(x, y, z, w)$ 
for $\mathbf{3+1}$. Note that $(x, y), (y, z) \in P$ implies $(x, z) \in P$. 
If $(w, z), (x, w) \in A_{(u, v)}^{-1}$ then $(z, w), (w, x) \in F$, 
that is, $(x, z, w)$ is a forbidden configuration for $\mathbf{2+1}$ in $P + F$, 
a contradiction. 
Thus either $(w, z) \in A_{(u, v)}^{-1}$ or $(x, w) \in A_{(u, v)}^{-1}$. 
\par
We first suppose 
$(w, z) \in A_{(u, v)}^{-1}$ and $(x, w) \notin A_{(u, v)}^{-1}$. 
We have $(z, w) \in A_{(u, v)} \subseteq F$ and 
$(x, w) \in F$ but $(x, w) \notin A_{(u, v)}$. 
Since $(z, w) \in A_{(u, v)}$, there is a regular obstruction 
$(u, v, v_2, v_3, v_4, v_5)$ of $P + F$ such that 
$(u, v)$ is its base and $(z, w)$ is its front. 
Suppose $(z, w) = (v_4, v_3)$ 
(Fig.~\ref{figs:Lemma4:Af}\subref{figs:Lemma4:A2}). 
We have $(x, v_5) \in P$ from $(x, y), (y, v_4), (v_4, v_5) \in P$. 
Then $(u, v, v_2, v_3, x, v_5)$ is a regular obstruction, and 
hence $(x, v_3) \in A_{(u, v)}$, a contradiction. 
Suppose $(z, w) = (v_5, v_3)$. 
We have $(x, v_5) \in P$ from $(x, y), (y, v_5) \in P$. 
Then $(u, v, v_2, v_3, x, v_5)$ is a regular obstruction, and 
hence $(x, v_3) \in A_{(u, v)}$, a contradiction. 
\par
Suppose $(z, w) = (v_4, v_2)$ 
(Fig.~\ref{figs:Lemma4:Af}\subref{figs:Lemma4:A2s}). 
We have $(x, v_5) \in P$ from $(x, y), (y, v_4), (v_4, v_5) \in P$. 
We also have $(v_5, v_3), (v_3, u) \in F$ from Proposition~\ref{prop:for regular obstruction}. 
If $(u, x) \in P$ then $(x, v_5) \in P$ implies $(u, v_5) \in P$, 
contradicting $(u, v_5) \in F$. 
If $(x, u) \in P$ then $(x, u, v, v_2)$ is 
a forbidden configuration for $\mathbf{3+1}$ in $P + F$, a contradiction. 
Thus $ux \in E(\overline{G})$. 
Proposition~\ref{prop:for 3+1} implies that 
the vertices $x, y, v_5, u$ induce $K_{1, 3}$ in $\overline{G}$. 
Then we have $(u, x) \in F$ from $(u, v_5) \in F$. 
If $(x, v_3) \in P$ then $(x, v_3, u)$ is a forbidden configuration 
for $\mathbf{2+1}$ in $P + F$, a contradiction. 
If $(v_3, x) \in P$ then $(v_2, v_3) \in P$ implies $(v_2, x) \in P$, 
contradicting $(x, v_2) \in F$. 
Thus $v_3x \in E(\overline{G})$. 
Proposition~\ref{prop:for 3+1} implies that 
the vertices $x, y, v_5, v_3$ induce $K_{1, 3}$ in $\overline{G}$, 
and we have $(x, v_3) \in F$ from $(v_5, v_3) \in F$. 
Then $(u, v, v_2, v_3, x, v_5)$ is a regular obstruction, and 
hence $(x, v_2) \in A_{(u, v)}$, a contradiction.
By similar arguments, we have a contradiction 
when we suppose $(z, w) = (v_5, v_2)$. 
\par
We next suppose $(x, w) \in A_{(u, v)}^{-1}$ and $(w, z) \notin A_{(u, v)}^{-1}$. 
We have $(w, x) \in A_{(u, v)} \subseteq F$ and 
$(w, z) \in F$ but $(w, z) \notin A_{(u, v)}$. 
Since $(w, x) \in A_{(u, v)}$, there is a regular obstruction 
$(u, v, v_2, v_3, v_4, v_5)$ of $P + F$ such that 
$(u, v)$ is its base and $(w, x)$ is its front. 
Suppose $(w, x) = (v_4, v_3)$ 
(Fig.~\ref{figs:Lemma4:Af}\subref{figs:Lemma4:A3}). 
We have $(v_2, z) \in P$ from $(v_2, v_3), (v_3, y), (y, z) \in P$. 
Then $(u, v, v_2, z, v_4, v_5)$ is a regular obstruction, and 
hence $(v_4, z) \in A_{(u, v)}$, a contradiction. 
Suppose $(w, x) = (v_4, v_2)$. 
We have $(v_2, z) \in P$ from $(v_2, y), (y, z) \in P$. 
Then $(u, v, v_2, z, v_4, v_5)$ is a regular obstruction, and 
hence $(v_4, z) \in A_{(u, v)}$, a contradiction. 
\par
Suppose $(w, x) = (v_5, v_3)$ 
(Fig.~\ref{figs:Lemma4:Af}\subref{figs:Lemma4:A3s}). 
We have $(v_2, z) \in P$ from $(v_2, v_3), (v_3, y), (y, z) \in P$. 
We also have $(v, v_4), (v_4, v_2) \in F$ from Proposition~\ref{prop:for regular obstruction}. 
If $(z, v) \in P$ then $(v_2, z) \in P$ implies $(v_2, v) \in P$, 
contradicting $(v_2, v) \in F$. 
If $(v, z) \in P$ then $(u, v, z, v_5)$ is 
a forbidden configuration for $\mathbf{3+1}$ in $P + F$, a contradiction. 
Thus $vz \in E(\overline{G})$. 
Proposition~\ref{prop:for 3+1} implies that 
the vertices $v_2, y, z, v$ induce $K_{1, 3}$ in $\overline{G}$. 
Then we have $(z, v) \in F$ from $(v_2, v) \in F$. 
If $(v_4, z) \in P$ then $(v_4, z, v)$ is a forbidden configuration 
for $\mathbf{2+1}$ in $P + F$, a contradiction. 
If $(z, v_4) \in P$ then $(v_4, v_5) \in P$ implies $(z, v_5) \in P$, 
contradicting $(v_5, z) \in F$. 
Thus $v_4z \in E(\overline{G})$. 
Proposition~\ref{prop:for 3+1} implies that 
the vertices $v_2, y, z, v_4$ induce $K_{1, 3}$ in $\overline{G}$, 
and we have $(v_4, z) \in F$ from $(v_4, v_2) \in F$. 
Then $(u, v, v_2, z, v_4, v_5)$ is a regular obstruction, and 
hence $(v_5, z) \in A_{(u, v)}$, a contradiction.
By similar arguments, we have a contradiction 
when we suppose $(z, w) = (v_5, v_2)$. 
\end{proof}

\begin{claim}\label{claim:Lemma4:A:2+1}
The orientation $P + F'$ contains no forbidden configurations for 
$\mathbf{2+1}$. 
\end{claim}
\begin{proof}[Proof of Claim~\ref{claim:Lemma4:A:2+1}]
Suppose that $P + F'$ contains a forbidden configuration $(x, y, z)$ 
for $\mathbf{2+1}$. 
If $(y, z), (z, x) \in A_{(u, v)}^{-1}$ then 
we have from Claim~\ref{claim:for base and fronts} that 
$(y, z) \in A_{(u, v)}^{-1}$ implies $(u, z) \in F$ and 
$(z, x) \in A_{(u, v)}^{-1}$ implies $(z, u) \in F$, 
a contradiction. 
Thus either $(y, z) \in A_{(u, v)}^{-1}$ or $(z, x) \in A_{(u, v)}^{-1}$. 
\par
We first suppose $(y, z) \in A_{(u, v)}^{-1}$ and $(z, x) \notin A_{(u, v)}^{-1}$. 
We have $(z, y) \in A_{(u, v)} \subseteq F$ and 
$(z, x) \in F$ but $(z, x) \notin A_{(u, v)}$. 
Since $(z, y) \in A_{(u, v)}$, there is a regular obstruction 
$(u, v, v_2, v_3, v_4, v_5)$ of $P + F$ such that 
$(u, v)$ is its base and $(z, y)$ is its front. 
Suppose $(z, y) = (v_4, v_3)$ 
(Fig.~\ref{figs:Lemma4:Af}\subref{figs:Lemma4:A4}). 
We have $(v_3, u), (v, v_4) \in F$ 
from Proposition~\ref{prop:for regular obstruction}. 
If $(v, x) \in P$ then $(u, v), (x, v_3) \in P$ implies $(u, v_3) \in P$, 
contradicting $(v_3, u) \in F$. 
If $(x, v) \in P$ then $(x, v, v_4)$ is 
a forbidden configuration for $\mathbf{2+1}$ in $P + F$, a contradiction. 
Thus $vx \in E(\overline{G})$. 
Proposition~\ref{prop:for 2+2} implies that 
$(u, x, v, v_3)$ is $C_4$ in $\overline{G}$, and 
we have $(x, v) \in F$ from $(v_3, u) \in F$. 
Then $(u, v, x, v_3, v_4, v_5)$ is a regular obstruction, and 
hence $(v_4, x) \in A_{(u, v)}$, a contradiction.
When we suppose $(z, w) = (v_4, v_2)$, 
we have $(x, v_3) \in P$ from $(x, v_2), (v_2, v_3) \in P$. 
Thus by similar arguments, we have a contradiction. 
\par
Suppose $(z, y) = (v_5, v_3)$ 
(Fig.~\ref{figs:Lemma4:Af}\subref{figs:Lemma4:A4s}). 
We have $(v_3, u), (v, v_5) \in F$ 
from Proposition~\ref{prop:for regular obstruction}. 
If $(v, x) \in P$ then $(u, v), (x, v_3) \in P$ implies $(u, v_3) \in P$, 
contradicting $(v_3, u) \in F$. 
If $(x, v) \in P$ then $(x, v, v_5)$ is 
a forbidden configuration for $\mathbf{2+1}$ in $P + F$, a contradiction. 
Thus $vx \in E(\overline{G})$. 
Proposition~\ref{prop:for 2+2} implies that 
$(u, x, v, v_3)$ is $C_4$ in $\overline{G}$, and 
we have $(x, v) \in F$ from $(v_3, u) \in F$. 
Then $(u, v, x, v_3, v_4, v_5)$ is a regular obstruction, and 
hence $(v_5, x) \in A_{(u, v)}$, a contradiction. 
When we suppose $(z, y) = (v_5, v_2)$, 
we have $(x, v_3) \in P$ from $(x, v_2), (v_2, v_3) \in P$. 
Thus by similar arguments, we have a contradiction. 
\par
We next suppose $(z, x) \in A_{(u, v)}^{-1}$ and $(y, z) \notin A_{(u, v)}^{-1}$. 
We have $(x, z) \in A_{(u, v)} \subseteq F$ and 
$(y, z) \in F$ but $(y, z) \notin A_{(u, v)}$. 
Since $(x, z) \in A_{(u, v)}$, there is a regular obstruction 
$(u, v, v_2, v_3, v_4, v_5)$ of $P + F$ such that 
$(u, v)$ is its base and $(x, z)$ is its front. 
Suppose $(x, z) = (v_4, v_3)$ 
(Fig.~\ref{figs:Lemma4:Af}\subref{figs:Lemma4:A5}). 
We have $(v_3, u), (v, v_4) \in F$ 
from Proposition~\ref{prop:for regular obstruction}. 
If $(y, u) \in P$ then $(v_4, y), (u, v) \in P$ implies $(v_4, v) \in P$, 
contradicting $(v, v_4) \in F$. 
If $(u, y) \in P$ then $(u, y, v_3)$ is 
a forbidden configuration for $\mathbf{2+1}$ in $P + F$, a contradiction. 
Thus $uy \in E(\overline{G})$. 
Proposition~\ref{prop:for 2+2} implies that 
$(u, v_4, v, y)$ is $C_4$ in $\overline{G}$, and 
we have $(u, y) \in F$ from $(v, v_4) \in F$. 
Then $(u, v, v_2, v_3, v_4, y)$ is a regular obstruction, and 
hence $(y, v_2) \in A_{(u, v)}$, a contradiction. 
When we suppose $(x, z) = (v_5, v_3)$, 
we have $(v_4, y) \in P$ from $(v_4, v_5), (v_5, y) \in P$. 
Thus by similar arguments, we have a contradiction. 
\par
Suppose $(x, z) = (v_4, v_2)$ 
(Fig.~\ref{figs:Lemma4:Af}\subref{figs:Lemma4:A5s}). 
We have $(v_2, u), (v, v_4) \in F$ 
from Proposition~\ref{prop:for regular obstruction}. 
If $(y, u) \in P$ then $(v_4, y), (u, v) \in P$ implies $(v_4, v) \in P$, 
contradicting $(v, v_4) \in F$. 
If $(u, y) \in P$ then $(u, y, v_2)$ is 
a forbidden configuration for $\mathbf{2+1}$ in $P + F$, a contradiction. 
Thus $uy \in E(\overline{G})$. 
Proposition~\ref{prop:for 2+2} implies that 
$(u, v_4, v, y)$ is $C_4$ in $\overline{G}$, and 
we have $(u, y) \in F$ from $(v, v_4) \in F$. 
Then $(u, v, v_2, v_3, v_4, y)$ is a regular obstruction, and 
hence $(y, v_2) \in A_{(u, v)}$, a contradiction. 
When we suppose $(x, z) = (v_5, v_2)$, 
we have $(v_4, y) \in P$ from $(v_4, v_5), (v_5, y) \in P$. 
Thus by similar arguments, we have a contradiction. 
\end{proof}

\begin{figure*}[t]
  \centering
  \input{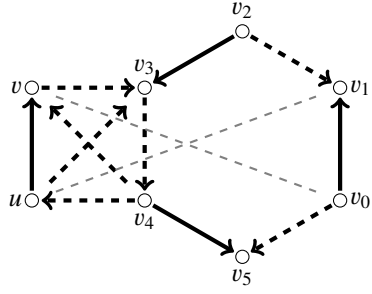}
  \caption{
    Illustrating the proof of Claim~\ref{claim:Lemma4:A:obstruction}. 
    Arrows, dashed arrows, and dashed lines denote the same type of edges 
    as in Fig.~\ref{figs:alternating-cycles}. 
  }
  \label{figs:Lemma4:As}
\end{figure*}

\begin{claim}\label{claim:Lemma4:A:obstruction}
No edge in $A_{(u, v)}^{-1}$ is an edge of a regular obstruction of $P + F'$. 
That is, reversing the direction of the edges in $A_{(u, v)}$ generates no new regular obstructions. 
\end{claim}
\begin{proof}[Proof of Claim~\ref{claim:Lemma4:A:obstruction}]
Suppose that $P + F'$ contains a regular obstruction 
$(v_0, v_1, v_2, v_3, v_4, v_5)$. 
Assume without loss of generality $(v_4, v_3) \in A_{(u, v)}^{-1}$. 
Suppose $(v_0, v_5) \in A_{(u, v)}^{-1}$. 
We have from Claim~\ref{claim:for base and fronts} that 
$(v_4, v_3) \in A_{(u, v)}^{-1}$ implies $(v_4, v) \in F$ and 
$(v_0, v_5) \in A_{(u, v)}^{-1}$ implies $(u, v_5) \in F$. 
Thus $(u, v, v_4, v_5)$ is 
a forbidden configuration for $\mathbf{2+2}$ in $P + F$, a contradiction. 
Thus $(v_0, v_5) \notin A_{(u, v)}^{-1}$. 
Similarly, we have $(v_2, v_1) \notin A_{(u, v)}^{-1}$. 
\par
Now, $(v_3, v_4) \in A_{(u, v)} \subseteq F$ and 
$(v_0, v_5), (v_2, v_1) \in F$ but $(v_0, v_5), (v_2, v_1) \notin A_{(u, v)}$. 
Claim~\ref{claim:for base and fronts} implies 
$(u, v_3), (v, v_3), (v_4, u), (v_4, v) \in F$ (Fig.~\ref{figs:Lemma4:As}). 
If $(v_0, v) \in P$ then $(v_0, v, v_4, v_5)$ is 
a forbidden configuration for $\mathbf{2+2}$ in $P + F$, a contradiction. 
If $(v, v_0) \in P$ then $(v_0, v_1) \in P$ implies $(v, v_1) \in P$, 
but then $(v, v_1, v_2, v_3)$ is 
a forbidden configuration for $\mathbf{2+2}$ in $P + F$, a contradiction. 
Thus $vv_0 \in E(\overline{G})$. 
If $(v_1, u) \in P$ then $(v_0, v_1), (u, v) \in P$ implies $(v_0, v) \in P$, 
contradicting $vv_0 \in E(\overline{G})$. 
If $(u, v_1) \in P$ then $(u, v_1, v_2, v_3)$ is 
a forbidden configuration for $\mathbf{2+2}$ in $P + F$, a contradiction. 
Thus $uv_1 \in E(\overline{G})$. 
Proposition~\ref{prop:for 2+2} implies that 
$(u, v_0, v, v_1)$ is $C_4$ in $\overline{G}$. 
If $(v, v_0), (u, v_1) \in F$ then 
$(u, v, v_4, v_5, v_0, v_1)$ is a regular obstruction, and 
hence $(v_0, v_5) \in A_{(u, v)}$, a contradiction. 
If $(v_0, v), (v_1, u) \in F$ then 
$(u, v, v_0, v_1, v_2, v_3)$ is a regular obstruction, and 
hence $(v_2, v_1) \in A_{(u, v)}$, a contradiction. 
\end{proof}

The following is trivial from the definition of $A_{(u, v)}$. 
\begin{claim}\label{claim:Lemma4:A:remove}
The orientation $P + F'$ contains no regular obstructions with base $(u, v)$. 
\end{claim}

Claims~\ref{claim:Lemma4:A:2+2}--\ref{claim:Lemma4:A:remove} ensure that 
by repeating the procedure for each edge of $P$, 
we can obtain an orientation $F''$ of $E_o$ such that 
$P + F''$ contains no forbidden configurations for 
$\mathbf{2+2}$, $\mathbf{3+1}$, or $\mathbf{2+1}$, and 
$P + F''$ contains no regular obstructions.

Now, let $F$ be an orientation of $E_o$ such that 
$P + F$ contains no forbidden configurations for 
$\mathbf{2+2}$, $\mathbf{3+1}$, or $\mathbf{2+1}$, 
and $P + F$ contains no regular obstructions. 
Then, let $(u, v)$ be an edge in $P$. 
We define that 
\begin{align*}
B_{(u, v)} & = \{(x, y) \in F \colon\ 
\text{there is a skewed obstruction such that $(u, v)$ is its base and $(x, y)$ is its front}\}, 
\\
B_{(u, v)}^{-1} & = \{(y, x) \colon\ (x, y) \in B_{(u, v)} \}, \text{ and }
\\
F' & = (F - B_{(u, v)}) + B_{(u, v)}^{-1}.
\end{align*}
In other words, 
$F'$ is an orientation of $E_o$ obtained from $F$ 
by reversing the direction of all edges in $B_{(u, v)}$. 

\begin{figure*}[t]
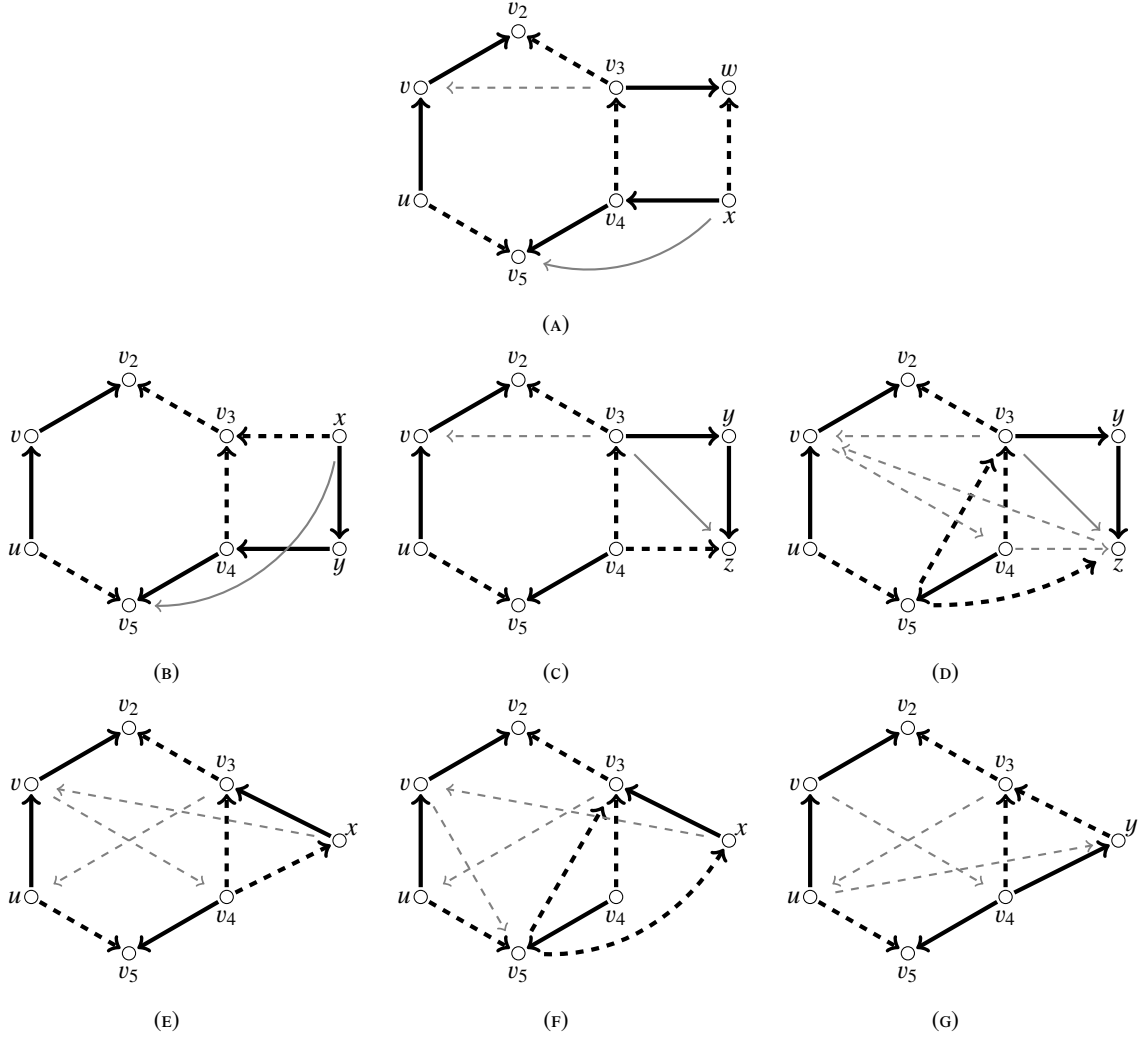

  \centering
  \subcaptionbox{\label{figs:Lemma4:B1}}{\input{figs/lemma4/B1}}
\\
  \subcaptionbox{\label{figs:Lemma4:B2}}{\input{figs/lemma4/B2}}
  \subcaptionbox{\label{figs:Lemma4:B3}}{\input{figs/lemma4/B3}}
  \subcaptionbox{\label{figs:Lemma4:B3s}}{\input{figs/lemma4/B3s}}
\\
  \subcaptionbox{\label{figs:Lemma4:B4}}{\input{figs/lemma4/B4}}
  \subcaptionbox{\label{figs:Lemma4:B4s}}{\input{figs/lemma4/B4s}}
  \subcaptionbox{\label{figs:Lemma4:B5}}{\input{figs/lemma4/B5}}
  \caption{
    Illustrating the proof of Claims~\ref{claim:Lemma4:B:2+2}--\ref{claim:Lemma4:B:2+1}. 
    Arrows and dashed arrows denote the same type of edges 
    as in Fig.~\ref{figs:alternating-cycles}. 
  }
  \label{figs:Lemma4:Bf}
\end{figure*}

\begin{claim}\label{claim:Lemma4:B:2+2}
The orientation $P + F'$ contains no forbidden configurations for 
$\mathbf{2+2}$. 
\end{claim}
\begin{proof}[Proof of Claim~\ref{claim:Lemma4:B:2+2}]
Suppose that $P + F'$ contains a forbidden configuration $(x, y, z, w)$ 
for $\mathbf{2+2}$. 
If $(z, y), (x, w) \in B_{(u, v)}^{-1}$ then $(y, z), (w, x) \in F$, 
that is, $(x, y, z, w)$ is an alternating cycle of length 4 in $P + F$, 
contradicting Proposition~\ref{prop:alternating 4-cycle}. 
Thus either $(z, y) \in B_{(u, v)}^{-1}$ or $(x, w) \in B_{(u, v)}^{-1}$. 
Assume without loss of generality 
$(z, y) \in B_{(u, v)}^{-1}$ and $(x, w) \notin B_{(u, v)}^{-1}$. 
Then $(y, z) \in B_{(u, v)} \subseteq F$ and 
$(x, w) \in F$ but $(x, w) \notin B_{(u, v)}$. 
\par
Since $(y, z) \in B_{(u, v)}$, there is a skewed obstruction 
$(u, v, v_2, v_3, v_4, v_5)$ of $P + F$ such that 
$(u, v)$ is its base and $(y, z)$ is its front. 
Suppose $(y, z) = (v_4, v_3)$ 
(Fig.~\ref{figs:Lemma4:Bf}\subref{figs:Lemma4:B1}). 
We have $(x, v_5) \in P$ from $(x, v_4), (v_4, v_5) \in P$. 
We also have $(v_3, v) \in F$ 
from Proposition~\ref{prop:for skewed obstruction}. 
Then $(u, v, v_3, w, x, v_5)$ is a regular obstruction of $P + F$, 
a contradiction. 
By similar arguments, we have a contradiction 
when we suppose $(y, z) = (v_5, v_3)$. 
\end{proof}

\begin{claim}\label{claim:Lemma4:B:3+1}
The orientation $P + F'$ contains no forbidden configurations for 
$\mathbf{3+1}$. 
\end{claim}
\begin{proof}[Proof of Claim~\ref{claim:Lemma4:B:3+1}]
Suppose that $P + F'$ contains a forbidden configuration $(x, y, z, w)$ 
for $\mathbf{3+1}$. Note that $(x, y), (y, z) \in P$ implies $(x, z) \in P$. 
If $(w, z), (x, w) \in B_{(u, v)}^{-1}$ then $(z, w), (w, x) \in F$, 
that is, $(x, z, w)$ is a forbidden configuration for $\mathbf{2+1}$ in $P + F$, 
a contradiction. 
Thus either $(w, z) \in B_{(u, v)}^{-1}$ or $(x, w) \in B_{(u, v)}^{-1}$. 
\par
We first suppose $(w, z) \in B_{(u, v)}^{-1}$ and $(x, w) \notin B_{(u, v)}^{-1}$. 
We have $(z, w) \in B_{(u, v)} \subseteq F$ and 
$(x, w) \in F$ but $(x, w) \notin B_{(u, v)}$. 
Since $(z, w) \in B_{(u, v)}$, there is a skewed obstruction 
$(u, v, v_2, v_3, v_4, v_5)$ of $P + F$ such that 
$(u, v)$ is its base and $(z, w)$ is its front. 
Suppose $(z, w) = (v_4, v_3)$ 
(Fig.~\ref{figs:Lemma4:Bf}\subref{figs:Lemma4:B2}). 
We have $(x, v_5) \in P$ from $(x, y), (y, v_4), (v_4, v_5) \in P$. 
Then $(u, v, v_2, v_3, x, v_5)$ is a skewed obstruction, and 
hence $(x, v_3) \in B_{(u, v)}$, a contradiction. 
By similar arguments, we have a contradiction 
when we suppose $(y, z) = (v_5, v_3)$. 
\par
We next suppose $(x, w) \in B_{(u, v)}^{-1}$ and $(w, z) \notin B_{(u, v)}^{-1}$. 
We have $(w, x) \in B_{(u, v)} \subseteq F$ and 
$(w, z) \in F$ but $(w, z) \notin B_{(u, v)}$. 
Since $(w, x) \in B_{(u, v)}$, there is a skewed obstruction 
$(u, v, v_2, v_3, v_4, v_5)$ of $P + F$ such that 
$(u, v)$ is its base and $(w, x)$ is its front. 
Suppose $(w, x) = (v_4, v_3)$ 
(Fig.~\ref{figs:Lemma4:Bf}\subref{figs:Lemma4:B3}). 
We have $(v_3, v) \in F$ 
from Proposition~\ref{prop:for skewed obstruction}. 
Then $(u, v, v_3, z, v_4, v_5)$ is a regular obstruction of $P + F$, 
a contradiction. 
\par
Suppose $(w, x) = (v_5, v_3)$ 
(Fig.~\ref{figs:Lemma4:Bf}\subref{figs:Lemma4:B3s}). 
We have $(v_3, v), (v, v_4) \in F$ 
from Proposition~\ref{prop:for skewed obstruction}. 
If $(z, v) \in P$ then $(v_3, z) \in P$ implies $(v_3, v) \in P$, 
contradicting $(v_3, v) \in F$. 
If $(v, z) \in P$ then $(u, v, z, v_5)$ is 
a forbidden configuration for $\mathbf{3+1}$ in $P + F$, a contradiction. 
Thus $vz \in E(\overline{G})$. 
Proposition~\ref{prop:for 2+2} implies that 
$(v, v_3, v_2, z)$ is $C_4$ in $\overline{G}$. 
Then we have $(z, v) \in F$ from $(v_3, v_2) \in F$. 
If $(v_4, z) \in P$ then $(v_4, z, v)$ is a forbidden configuration 
for $\mathbf{2+1}$ in $P + F$, a contradiction. 
If $(z, v_4) \in P$ then $(v_4, v_5) \in P$ implies $(z, v_5) \in P$, 
contradicting $(v_5, z) \in F$. 
Thus $v_4z \in E(\overline{G})$. 
Proposition~\ref{prop:for 3+1} implies that 
the vertices $v_3, y, z, v_4$ induce $K_{1, 3}$ in $\overline{G}$, 
and we have $(v_4, z) \in F$ from $(v_4, v_3) \in F$. 
Then $(u, v, v_3, z, v_4, v_5)$ is a regular obstruction of $P + F$, 
a contradiction.
\end{proof}

\begin{claim}\label{claim:Lemma4:B:2+1}
The orientation $P + F'$ contains no forbidden configurations for 
$\mathbf{2+1}$. 
\end{claim}
\begin{proof}[Proof of Claim~\ref{claim:Lemma4:B:2+1}]
Suppose that $P + F'$ contains a forbidden configuration $(x, y, z)$ 
for $\mathbf{2+1}$. 
If $(y, z), (z, x) \in B_{(u, v)}^{-1}$ then 
we have from Claim~\ref{claim:for base and fronts} that 
$(y, z) \in B_{(u, v)}^{-1}$ implies $(u, z) \in F$ and 
$(z, x) \in B_{(u, v)}^{-1}$ implies $(z, u) \in F$, 
a contradiction. 
Thus either $(y, z) \in B_{(u, v)}^{-1}$ or $(z, x) \in B_{(u, v)}^{-1}$. 
\par
We first suppose $(y, z) \in B_{(u, v)}^{-1}$ and $(z, x) \notin B_{(u, v)}^{-1}$. 
We have $(z, y) \in B_{(u, v)} \subseteq F$ and 
$(z, x) \in F$ but $(z, x) \notin B_{(u, v)}$. 
Since $(z, y) \in B_{(u, v)}$, there is a skewed obstruction 
$(u, v, v_2, v_3, v_4, v_5)$ of $P + F$ such that 
$(u, v)$ is its base and $(z, y)$ is its front. 
Suppose $(z, y) = (v_4, v_3)$ 
(Fig.~\ref{figs:Lemma4:Bf}\subref{figs:Lemma4:B4}). 
We have $(v_3, u), (v, v_4) \in F$ 
from Proposition~\ref{prop:for skewed obstruction}. 
If $(v, x) \in P$ then $(u, v), (x, v_3) \in P$ implies $(u, v_3) \in P$, 
contradicting $(v_3, u) \in F$. 
If $(x, v) \in P$ then $(x, v, v_4)$ is 
a forbidden configuration for $\mathbf{2+1}$ in $P + F$, a contradiction. 
Thus $vx \in E(\overline{G})$. 
Proposition~\ref{prop:for 2+2} implies that 
$(u, x, v, v_3)$ is $C_4$ in $\overline{G}$, and 
we have $(x, v) \in F$ from $(v_3, u) \in F$. 
Then $(u, v, x, v_3, v_4, v_5)$ is a regular obstruction of $P + F$, 
a contradiction.
\par
Suppose $(z, y) = (v_5, v_3)$ 
(Fig.~\ref{figs:Lemma4:Bf}\subref{figs:Lemma4:B4s}). 
We have $(v_3, u), (v, v_5) \in F$ 
from Proposition~\ref{prop:for skewed obstruction}. 
If $(v, x) \in P$ then $(u, v), (x, v_3) \in P$ implies $(u, v_3) \in P$, 
contradicting $(v_3, u) \in F$. 
If $(x, v) \in P$ then $(x, v, v_5)$ is 
a forbidden configuration for $\mathbf{2+1}$ in $P + F$, a contradiction. 
Thus $vx \in E(\overline{G})$. 
Proposition~\ref{prop:for 2+2} implies that 
$(u, x, v, v_3)$ is $C_4$ in $\overline{G}$, and 
we have $(x, v) \in F$ from $(v_3, u) \in F$. 
Then $(u, v, x, v_3, v_4, v_5)$ is a regular obstruction of $P + F$, 
a contradiction.
\par
We next suppose $(z, x) \in B_{(u, v)}^{-1}$ and $(y, z) \notin B_{(u, v)}^{-1}$. 
We have $(x, z) \in B_{(u, v)} \subseteq F$ and 
$(y, z) \in F$ but $(y, z) \notin B_{(u, v)}$. 
Since $(x, z) \in B_{(u, v)}$, there is a skewed obstruction 
$(u, v, v_2, v_3, v_4, v_5)$ of $P + F$ such that 
$(u, v)$ is its base and $(x, z)$ is its front. 
Suppose $(x, z) = (v_4, v_3)$ 
(Fig.~\ref{figs:Lemma4:Bf}\subref{figs:Lemma4:B5}). 
We have $(v_3, u), (v, v_4) \in F$ 
from Proposition~\ref{prop:for skewed obstruction}. 
If $(y, u) \in P$ then $(v_4, y), (u, v) \in P$ implies $(v_4, v) \in P$, 
contradicting $(v, v_4) \in F$. 
If $(u, y) \in P$ then $(u, y, v_3)$ is 
a forbidden configuration for $\mathbf{2+1}$ in $P + F$, a contradiction. 
Thus $uy \in E(\overline{G})$. 
Proposition~\ref{prop:for 2+2} implies that 
$(u, v_4, v, y)$ is $C_4$ in $\overline{G}$, and 
we have $(u, y) \in F$ from $(v, v_4) \in F$. 
Then $(u, v, v_2, v_3, v_4, y)$ is a skewed obstruction, and 
hence $(y, v_3) \in B_{(u, v)}$, a contradiction. 
When we suppose $(x, z) = (v_5, v_3)$, 
we have $(v_4, y) \in P$ from $(v_4, v_5), (v_5, y) \in P$. 
Thus by similar arguments, we have a contradiction. 
\end{proof}

\begin{claim}\label{claim:Lemma4:B:regular obstruction}
The orientation $P + F'$ contains no regular obstructions. 
\end{claim}
\begin{proof}[Proof of Claim~\ref{claim:Lemma4:B:regular obstruction}]
As in the proof of Claim~\ref{claim:Lemma4:A:obstruction}, 
%(see also Fig.~\ref{figs:Lemma4:As}), 
if $P + F'$ contains a regular obstruction, then $P + F$ also contains a regular obstruction, 
a contradiction. 
\end{proof}

\begin{figure*}[t]
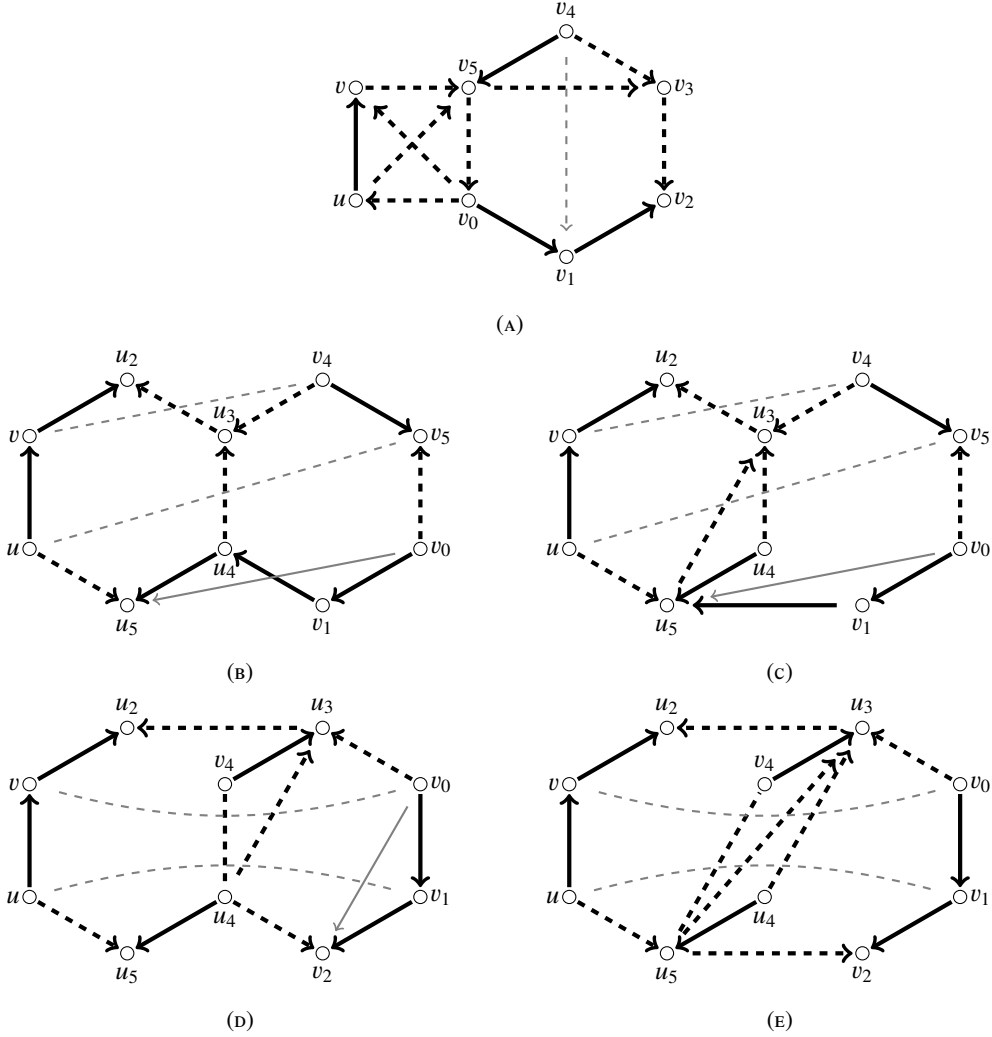

  \centering
  \subcaptionbox{\label{figs:Lemma4:B6}}{\input{figs/lemma4/B6}}
\\
  \subcaptionbox{\label{figs:Lemma4:B7}}{\input{figs/lemma4/B7}}
  \subcaptionbox{\label{figs:Lemma4:B7s}}{\input{figs/lemma4/B7s}}
\\
  \subcaptionbox{\label{figs:Lemma4:B8}}{\input{figs/lemma4/B8}}
  \subcaptionbox{\label{figs:Lemma4:B8s}}{\input{figs/lemma4/B8s}}
  \caption{
    Illustrating the proof of Claim~\ref{claim:Lemma4:B:skewed obstruction}. 
    Arrows, dashed arrows, and dashed lines denote the same type of edges 
    as in Fig.~\ref{figs:alternating-cycles}. 
  }
  \label{figs:Lemma4:Bs}
\end{figure*}

\begin{claim}\label{claim:Lemma4:B:skewed obstruction}
If $P + F$ contains no skewed obstructions with base $(v_0, v_1)$, 
then $P + F'$ contains no skewed obstructions with base $(v_0, v_1)$. 
\end{claim}
\begin{proof}[Proof of Claim~\ref{claim:Lemma4:B:skewed obstruction}]
Suppose that $P + F'$ contains a skewed obstruction 
$(v_0, v_2, v_2, v_3, v_4, v_5)$ with base $(v_0, v_1)$ 
whereas $P + F$ does not contain this obstruction. 
\par
Suppose $(v_0, v_5) \in B_{(u, v)}^{-1}$. 
From Claim~\ref{claim:for base and fronts}, 
we have $(u, v_5), (v, v_5), (v_0, u), (v_0, v) \in F$. 
If $(v_4, v_3) \in B_{(u, v)}^{-1}$ then 
from Claim~\ref{claim:for base and fronts}, 
we have $(v_4, v) \in F$, and hence $(u, v, v_4, v_5)$ is 
a forbidden configuration for $\mathbf{2+2}$ in $P + F$, a contradiction. 
If $(v_5, v_3) \in B_{(u, v)}^{-1}$ then 
from Claim~\ref{claim:for base and fronts}, 
we have $(v_5, u), (v_5, v) \in F$, 
contradicting $(u, v_5), (v, v_5) \in F$. 
If $(v_3, v_2) \in B_{(u, v)}^{-1}$ then 
from Claim~\ref{claim:for base and fronts}, 
we have $(u, v_2) \in F$, and hence $(u, v, v_0, v_2)$ is 
a forbidden configuration for $\mathbf{2+2}$ in $P + F$, a contradiction. 
Therefore, $(v_4, v_3), (v_5, v_3), (v_3, v_2) \notin B_{(u, v)}^{-1}$. 
\par
Now, $(v_5, v_0) \in B_{(u, v)} \subseteq F$ and 
$(v_4, v_3), (v_5, v_3), (v_3, v_2) \in F$ but 
$(v_4, v_3), (v_5, v_3), (v_3, v_2) \notin B_{(u, v)}$. 
From Claim~\ref{claim:for base and fronts}, we have 
$(u, v_5), (v, v_5), (v_0, u), (v_0, v) \in F$ 
(Fig.~\ref{figs:Lemma4:Bs}\subref{figs:Lemma4:B6}). 
If $(v_4, v_1) \in P$ then $(v_4, v_1, v_2, v_3)$ is 
a forbidden configuration for $\mathbf{3+1}$ in $P + F$, a contradiction. 
If $(v_1, v_4) \in P$ then 
$(v_0, v_1), (v_4, v_5) \in P$ implies $(v_0, v_5) \in P$, 
contradicting $(v_5, v_0) \in F$. 
Thus $v_4v_1 \in E(\overline{G})$. 
Proposition~\ref{prop:for 2+2} implies that 
$(v_4, v_0, v_5, v_1)$ is $C_4$ in $\overline{G}$, and 
we have $(v_4, v_1) \in F$ from $(v_5, v_0) \in F$. 
Then $(u, v, v_0, v_1, v_4, v_5)$ is a regular obstruction of $P + F$, 
a contradiction. 
\par
Suppose $(v_3, v_2) \in B_{(u, v)}^{-1}$. 
From Claim~\ref{claim:for base and fronts}, 
we have $(u, v_2), (v, v_2), (v_3, u), (v_3, v) \in F$. 
If $(v_4, v_3) \in B_{(u, v)}^{-1}$ or $(v_5, v_3) \in B_{(u, v)}^{-1}$, then 
from Claim~\ref{claim:for base and fronts}, 
we have $(u, v_3), (v, v_3) \in F$, 
contradicting $(v_3, u), (v_3, v) \in F$. 
If $(v_0, v_5) \in B_{(u, v)}^{-1}$ then 
from Claim~\ref{claim:for base and fronts}, 
we have $(v_0, v) \in F$, and thus $(u, v, v_0, v_2)$ is 
a forbidden configuration for $\mathbf{2+2}$ in $P + F$, a contradiction. 
Therefore, $(v_4, v_3), (v_5, v_3), (v_0, v_5) \notin B_{(u, v)}^{-1}$. 
\par
Now, $(v_2, v_3) \in B_{(u, v)} \subseteq F$ and 
$(v_4, v_3), (v_5, v_3), (v_0, v_5) \in F$ but 
$(v_4, v_3), (v_5, v_3), (v_0, v_5) \notin B_{(u, v)}$. 
Since $(v_2, v_3) \in B_{(u, v)}$, there is a skewed obstruction 
$(u, v, u_2, u_3, u_4, u_5)$ of $P + F$ such that 
$(u, v)$ is its base and $(v_2, v_3)$ is its front. 
Suppose $(v_2, v_3) = (u_4, u_3)$ 
(Fig.~\ref{figs:Lemma4:Bs}\subref{figs:Lemma4:B7}). 
We have $(v_0, u_5) \in P$ from $(v_0, v_1), (v_1, u_4), (u_4, u_5) \in P$. 
If $(u, v_5) \in P$ then $(u, v_5, v_0, u_5)$ is 
a forbidden configuration for $\mathbf{2+2}$ in $P + F$, a contradiction. 
If $(v_4, v) \in P$ then $(v_4, v, u_2, u_3)$ is 
a forbidden configuration for $\mathbf{3+1}$ in $P + F$, a contradiction. 
If $(v_5, u) \in P$ then $(v_4, v_5), (u, v) \in P$ implies $(v_4, v) \in P$, 
a contradiction. 
If $(v, v_4) \in P$ then $(u, v), (v_4, v_5) \in P$ implies $(u, v_5) \in P$, 
a contradiction. 
Thus $uv_5, vv_4 \in E(\overline{G})$. 
Proposition~\ref{prop:for 2+2} implies that 
$(u, v_4, v, v_5)$ is $C_4$ in $\overline{G}$. 
If $(u, v_5), (v, v_4) \in F$ then 
$(u, v, u_2, u_3, v_4, v_5)$ is a skewed obstruction, and 
hence $(v_4, u_3) \in B_{(u, v)}$, a contradiction. 
If $(v_5, u), (v_4, v) \in F$ then 
$(u, v, v_4, v_5, v_0, u_5)$ is a regular obstruction of $P + F$, 
a contradiction. 
When we suppose $(v_2, v_3) = (u_5, u_3)$ 
(Fig.~\ref{figs:Lemma4:Bs}\subref{figs:Lemma4:B7s}), 
we have $(v_0, u_5) \in P$ from $(v_0, v_1), (v_1, u_5) \in P$. 
Thus by similar arguments, we have a contradiction. 
\par
Suppose $(v_5, v_3) \in B_{(u, v)}^{-1}$. 
From Claim~\ref{claim:for base and fronts}, we have 
$(u, v_3), (v, v_3), (v_5, u), (v_5, v) \in F$. 
If $(v_3, v_2) \in B_{(u, v)}^{-1}$ then 
from Claim~\ref{claim:for base and fronts}, 
we have $(v_3, u), (v_3, v) \in F$, 
contradicting $(u, v_3), (v, v_3) \in F$. 
If $(v_0, v_5) \in B_{(u, v)}^{-1}$ then 
from Claim~\ref{claim:for base and fronts}, 
we have $(u, v_5), (v, v_5) \in F$, 
contradicting $(v_5, u), (v_5, v) \in F$. 
Therefore, $(v_3, v_2), (v_0, v_5) \notin B_{(u, v)}^{-1}$. 
\par
Now, $(v_3, v_5) \in B_{(u, v)} \subseteq F$ and 
$(v_3, v_2), (v_0, v_5) \in F$ but 
$(v_3, v_2), (v_0, v_5) \notin B_{(u, v)}$. 
Note that it does not matter whether 
$(v_4, v_3) \in B_{(u, v)}^{-1}$ or $(v_4, v_3) \notin B_{(u, v)}^{-1}$. 
Since $(v_3, v_5) \in B_{(u, v)}$, there is a skewed obstruction 
$(u, v, u_2, u_3, u_4, u_5)$ of $P + F$ such that 
$(u, v)$ is its base and $(v_3, v_5)$ is its front. 
Suppose $(v_3, v_5) = (u_4, u_3)$ 
(Fig.~\ref{figs:Lemma4:Bs}\subref{figs:Lemma4:B8}). 
If $(u, v_1) \in P$ then $(v_1, v_2) \in P$ implies $(u, v_2) \in P$. 
Thus $(u, v_2, u_4, u_5)$ is 
a forbidden configuration for $\mathbf{2+2}$ in $P + F$, a contradiction. 
If $(v_0, v) \in P$ then $(v_0, v, u_2, u_3)$ is 
a forbidden configuration for $\mathbf{3+1}$ in $P + F$, a contradiction. 
If $(v_1, u) \in P$ then $(v_0, v_1), (u, v) \in P$ implies $(v_0, v) \in P$, 
a contradiction. 
If $(v, v_0) \in P$ then $(u, v), (v_0, v_1) \in P$ implies $(u, v_1) \in P$, 
a contradiction. 
Thus $uv_1, vv_0 \in E(\overline{G})$. 
Proposition~\ref{prop:for 2+2} implies that 
$(u, v_0, v, v_1)$ is $C_4$ in $\overline{G}$. 
If $(u, v_1), (v, v_0) \in F$ then 
$(u, v, u_2, u_3, v_0, v_1)$ is a skewed obstruction, and 
hence $(v_0, u_3) \in B_{(u, v)}$, a contradiction. 
If $(v_1, u), (v_0, v) \in F$ then 
$(u, v, v_0, v_2, u_4, u_5)$ is a regular obstruction of $P + F$, 
a contradiction. 
\par
Suppose $(v_3, v_5) = (u_5, u_3)$ 
(Fig.~\ref{figs:Lemma4:Bs}\subref{figs:Lemma4:B8s}). 
If $(u, v_1) \in P$ then $(u, v_1, v_2, u_5)$ is 
a forbidden configuration for $\mathbf{3+1}$ in $P + F$, a contradiction. 
If $(v_0, v) \in P$ then $(v_0, v, u_2, u_3)$ is 
a forbidden configuration for $\mathbf{3+1}$ in $P + F$, a contradiction. 
If $(v_1, u) \in P$ then $(v_0, v_1), (u, v) \in P$ implies $(v_0, v) \in P$, 
a contradiction. 
If $(v, v_0) \in P$ then $(u, v), (v_0, v_1) \in P$ implies $(u, v_1) \in P$, 
a contradiction. 
Thus $uv_1, vv_0 \in E(\overline{G})$. 
Proposition~\ref{prop:for 2+2} implies that 
$(u, v_0, v, v_1)$ is $C_4$ in $\overline{G}$. 
If $(u, v_1), (v, v_0) \in F$ then 
$(u, v, u_2, u_3, v_0, v_1)$ is a skewed obstruction, and 
hence $(v_0, u_3) \in B_{(u, v)}$, a contradiction. 
If $(v_1, u), (v_0, v) \in F$ then 
$(v_0, v_1, v_2, u_5, u, v)$ is a skewed obstruction 
with base $(v_0, v_1)$, contradicting the assumption that 
$P + F$ contains no skewed obstructions with base $(v_0, v_1)$. 
\par
Suppose $(v_4, v_3) \in B_{(u, v)}^{-1}$. 
If $(v_5, v_3) \notin B_{(u, v)}^{-1}$ then 
$(v_5, v_3), (v_3, v_4) \in F$, that is, 
$(v_4, v_5, v_3)$ is 
a forbidden configuration for $\mathbf{2+1}$ in $P + F$, a contradiction. 
Thus we have $(v_5, v_3) \in B_{(u, v)}^{-1}$, a contradiction. 
\end{proof}

The following is trivial from the definition of $B_{(u, v)}$. 
\begin{claim}\label{claim:Lemma4:B:remove}
The orientation $P + F'$ contains no skewed obstructions with base $(u, v)$. 
\end{claim}

Claims~\ref{claim:Lemma4:B:2+2}--\ref{claim:Lemma4:B:remove} ensure that 
by repeating the procedure for each edge of $P$, 
we can obtain an orientation $F''$ of $E_o$ such that 
$P + F''$ contains no forbidden configurations for 
$\mathbf{2+2}$, $\mathbf{3+1}$, or $\mathbf{2+1}$, and 
$P + F''$ contains no obstructions. 
Therefore, we can compute the orientation by Algorithm~\ref{algo:lemma4}.

\IncMargin{1em}
\begin{algorithm}[t]
\caption{Algorithm for Lemma~\ref{lemma:recognition:4}}\label{algo:lemma4}
\LinesNumbered
\KwIn{An orientation $F$ of $E_o$ such that $P + F$ contains no forbidden configurations. }
\KwOut{An orientation $F'$ of $E_o$ such that $P + F'$ contains no forbidden configurations nor no obstructions. }
\BlankLine

\ForEach(\tcc*[f]{remove regular obstructions}){$(u, v) \in P$}{
  $A_{(u, v)} \gets \emptyset$\;
  \ForEach{pair $(v_4, v_3), (v_5, v_2) \in F$}{
    \If{$(v_2, v_3), (v_4, v_5) \in P$ and $(v_2, v), (u, v_5) \in F$}{
      $A_{(u, v)} \gets A_{(u, v)} + \{(v_4, v_3), (v_5, v_3), (v_4, v_2), (v_5, v_2)\}$\;
    }
  }
  $F \gets (F - A_{(u, v)}) + A_{(u, v)}^{-1}$\;
}
\ForEach(\tcc*[f]{remove skewed obstructions}){$(u, v) \in P$}{
  $B_{(u, v)} \gets \emptyset$\;
  \ForEach{pair $(v_4, v_3), (v_5, v_2) \in F$}{
    \If{$(v, v_2), (v_4, v_5) \in P$ and $(v_3, v_2), (v_5, v_3), (u, v_5) \in F$}{
      $B_{(u, v)} \gets B_{(u, v)} + \{(v_4, v_3), (v_5, v_3)\}$\;
    }
  }
  $F \gets (F - B_{(u, v)}) + B_{(u, v)}^{-1}$\;
}
\KwRet{$F$}\;
\end{algorithm}
\DecMargin{1em}

Each iteration on lines 3--7 takes $O(1)$ time for each pair of edges of $F$. 
Hence $A_{(u, v)}$ can be constructed in $O(\bar{m}^2)$ time 
and $|A_{(u, v)}| = O(\bar{m}^2)$. 
Therefore, we can remove all regular obstructions in $O(m \bar{m}^2)$ time 
(lines 1--9). 
Similarly, we can remove all skewed obstructions in $O(m \bar{m}^2)$ time 
(lines 10--18). 
Hence Algorithm~\ref{algo:lemma4} takes $O(m \bar{m}^2)$ time. 
\end{proof}

Now from the lemmas above, 
we have the main theorem and the following characterization. 

\begin{Theorem}\label{theorem:recognition:structural result}
A partial order $P$ on a set $V$ is a linear-semiorder 
if and only if there is an orientation $F$ of $E_o$ 
(or, equivalently, an orientation $F$ of 
the incomparability graph of $P$) such that 
$P + F$ contains no forbidden configurations for 
$\mathbf{2+2}$, $\mathbf{3+1}$, or $\mathbf{2+1}$. 
\end{Theorem}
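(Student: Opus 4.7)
The plan is to derive this statement directly from Lemmas~\ref{lemma:recognition:1}, \ref{lemma:recognition:3}, and~\ref{lemma:recognition:4}, which together contain all of the combinatorial substance; what remains is mainly to package them and verify the equivalence of the two orientation formulations.

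For necessity, I would start with a linear-semiorder $P = L \cap S$ and take $F = L - P$ as an orientation of the entire incomparability graph $\overline{G}$. Then $P + F$ is exactly the linear order $L$, which is acyclic and hence contains no forbidden configuration for $\mathbf{2+1}$ (such a configuration would be a directed $3$-cycle). Lemma~\ref{lemma:rightarrow} shows that $L$ fulfills the $\mathbf{2+2}$ and $\mathbf{3+1}$ rules, which is precisely the statement that $P + F$ has no forbidden configuration of those types. Restricting $F$ to $E_o$ then yields the $E_o$ formulation: by Propositions~\ref{prop:for 2+2} and~\ref{prop:for 3+1}, every forbidden configuration for $\mathbf{2+2}$ or $\mathbf{3+1}$ uses only edges lying in $E_o$, while any $\mathbf{2+1}$ configuration surviving after restriction would, a fortiori, have been present in $F$.

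For sufficiency, I would begin with an orientation $F$ of $E_o$ (first restricting to $E_o$ if it is initially given on all of $\overline{G}$) that contains no forbidden configuration for $\mathbf{2+2}$, $\mathbf{3+1}$, or $\mathbf{2+1}$. Invoking Lemma~\ref{lemma:recognition:4} produces an orientation $F'$ of $E_o$ that preserves the three forbidden-configuration properties and additionally has no regular or skewed obstruction. Lemma~\ref{lemma:recognition:3} then forces $P + F'$ to be acyclic, whereupon Lemma~\ref{lemma:recognition:1} concludes that $P$ is a linear-semiorder.

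The hard part is essentially non-existent at this stage: every intricate case analysis has already been carried out inside Lemmas~\ref{lemma:recognition:3} and~\ref{lemma:recognition:4}. The only subtlety is the parenthetical equivalence between the $E_o$ and $\overline{G}$ formulations, which is handled by restriction in the necessity step and, in the sufficiency step, by observing that once $P + F'$ is acyclic one can take any linear extension $L$ of it and use $L - P$ to recover a full orientation of $\overline{G}$ retaining all the desired properties, exactly as in the argument at the end of the proof of Lemma~\ref{lemma:recognition:1}.
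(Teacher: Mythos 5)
Your proof is correct and follows essentially the same route as the paper: the sufficiency chain Lemma~\ref{lemma:recognition:4} $\to$ Lemma~\ref{lemma:recognition:3} $\to$ Lemma~\ref{lemma:recognition:1} is identical, and your necessity argument simply unpacks the paper's citation of Proposition~\ref{prop:consequence of characterization} one level deeper into Lemma~\ref{lemma:rightarrow}, arriving at the same facts. Your remarks on passing between the $E_o$ and $\overline{G}$ formulations are also correct and match the reasoning used at the end of the proof of Lemma~\ref{lemma:recognition:1}.
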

\begin{proof}
Let $\overline{G}$ be the incomparability graph of $P$. 
If $P$ is a linear-semiorder, then 
Proposition~\ref{prop:consequence of characterization} implies that 
there is an orientation $F$ of $\overline{G}$ 
such that $P + F$ is acyclic and 
contains no forbidden configurations for $\mathbf{2+2}$ or $\mathbf{3+1}$. 
Since $P + F$ is acyclic, 
it contains no forbidden configurations for $\mathbf{2+1}$. 
\par
Suppose there is an orientation $F$ of $E_o$ such that 
$P + F$ contains no forbidden configurations for 
$\mathbf{2+2}$, $\mathbf{3+1}$, or $\mathbf{2+1}$. 
From Lemma~\ref{lemma:recognition:4}, we have 
another orientation $F'$ of $E_o$ such that 
$P + F'$ contains no forbidden configurations for 
$\mathbf{2+2}$, $\mathbf{3+1}$, or $\mathbf{2+1}$, and 
$P + F'$ contains no obstructions. 
Since $P + F'$ contains no obstructions, 
Lemma~\ref{lemma:recognition:3} implies that $P + F'$ is acyclic. 
By Lemma~\ref{lemma:recognition:1}, $P$ is a linear-semiorder. 
\end{proof}

\begin{proof}[Proof of Theorem~\ref{theorem:recognition:main}]
The first claim can be obtained from Theorem~\ref{theorem:recognition:structural result} 
and Lemma~\ref{lemma:recognition:2}. 
The second claim can be obtained from 
Lemmas~\ref{lemma:recognition:4},~\ref{lemma:recognition:3},~\ref{lemma:recognition:2} and~\ref{lemma:recognition:1}. 
\end{proof}

\section{Comparability invariance}\label{sec:compa}
A property of partial orders is called 
a \emph{comparability invariant} if either all orders 
with the same comparability graph 
have that property or none has that property. 
It is known that being a linear-interval order 
is a comparability invariant~\cite{COS08-ENDM}. 
In this section, we show the following. 
\begin{Theorem}\label{comparability invariant}
Being a linear-semiorder is a comparability invariant. 
\end{Theorem}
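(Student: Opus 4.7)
The plan is to invoke the classical reduction of Gallai (with streamlined versions due to Kelly and to Dreesen--Poguntke--Winkler) that a property of partial orders is a comparability invariant if and only if it is preserved under reversal on every module (autonomous set). It will therefore suffice to fix a linear-semiorder $P$ on a set $V$ and a module $M \subseteq V$ of $P$, and to show that the order $P'$ obtained from $P$ by replacing $P|_M$ with its dual (keeping every relation between $M$ and $V\setminus M$, and every relation outside $M$, unchanged) is also a linear-semiorder.

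I will work through Theorem~\ref{theorem:recognition:structural result}: fix an orientation $F$ of the incomparability graph $\overline{G}$ (which is common to $P$ and $P'$) such that $P + F$ contains no forbidden configuration for $\mathbf{2+2}$, $\mathbf{3+1}$, or $\mathbf{2+1}$, and then define a companion orientation $F'$ of $\overline{G}$ to witness the same property for $P'$. The first candidate to try is obtained from $F$ by flipping every $\overline{G}$-edge with both endpoints in $M$ and leaving every other edge unchanged. The goal is to verify that $P' + F'$ contains no forbidden configuration of any of the three kinds, by a case analysis on the number of vertices of a hypothetical configuration that lie in $M$. The extreme cases are routine: when all four (or three) vertices lie in $M$, the configuration reverses to one of the same type in $P + F$; when none lies in $M$, it is present verbatim in $P + F$; and when exactly one lies in $M$, every edge of the configuration is either outside $M$ or a cross edge and so is unchanged from $F$ to $F'$, giving again the same configuration in $P + F$.

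The hard part will be the mixed cases in which two or three vertices of the configuration lie in $M$. The general strategy is to use the module property of $M$ to introduce a derived $P$-relation whenever a pair $u,v$ appears in the configuration with $u\in M$ and $v\notin M$, and then either obtain an immediate contradiction (because some $F'$-edge of the configuration is inconsistent with the derived $P$-relation) or assemble a forbidden configuration in $P + F$ by substituting another representative of $M$ at the $M$-positions; in certain sub-cases this assembly must additionally use the absence of forbidden $\mathbf{2+1}$ configurations in $P + F$ to pin down the orientation in $F$ of the ``derived'' $\overline{G}$-edges produced by the module property, in the same spirit as in the proofs of Lemmas~\ref{lemma:recognition:1}--\ref{lemma:recognition:4}. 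Small examples show that if the naive $F'$ just described fails on a particular sub-case, the construction must be refined by additionally flipping the $F$-orientation on certain between-$M$-and-$V\setminus M$ edges chosen to respect a linear extension of $P + F$; this complicates the book-keeping but does not alter the overall scheme, and the verification still reduces in each sub-case to a short inference from the module property and the absence of forbidden configurations in $P + F$.
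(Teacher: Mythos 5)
Your reduction to Gallai's module-reversal criterion is exactly the paper's starting point (Theorem~\ref{theorem:Gallai67} and Lemma~\ref{lemma:comparability invariant}), but from there you take a genuinely different route: you work combinatorially via the orientation characterization (Theorem~\ref{theorem:recognition:structural result}), proposing to transform an orientation $F$ of $\overline{G}$ witnessing the property for $P$ into one witnessing it for $P \mid M$. The paper instead argues geometrically with a triangle representation: from the module $A$ it passes to the non-isolated part $A^*$, then to the set $B$ of all elements whose triangles lie inside the convex region $C(A^*)$ spanned by the triangles of $A^*$; it shows $B$ is autonomous and that flipping the sub-representation of $B$ inside $C(A^*)$ yields a triangle representation of $P \mid B$; and it handles the mismatch between $B$ and $A^*$ by iterating the construction on a strictly smaller autonomous set (Lemma~\ref{lemma:9} gives $|A_2^*|<|A^*|$) and closing with a minimal-counterexample argument.

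Your proposal, however, has a genuine gap precisely at the step it depends on. You correctly single out the hard cases --- forbidden configurations straddling $M$ and $V\setminus M$ --- but you only sketch that case analysis, and, more seriously, you concede that the naive $F'$ (flip only $\overline{G}$-edges inside $M$) fails on some sub-cases and must be ``refined by additionally flipping the $F$-orientation on certain between-$M$-and-$V\setminus M$ edges''. That refinement is never specified: which cross edges, by what rule, and why does reorienting them not reintroduce forbidden configurations whose vertices all lie outside $M$ (which the naive $F'$ preserved verbatim)? Once cross edges are flipped, even the ``exactly one vertex in $M$'' case is no longer automatic, since those configurations use cross edges too. Until the modified $F'$ is written down and each straddling case for $\mathbf{2+2}$, $\mathbf{3+1}$, and $\mathbf{2+1}$ is actually verified, this is a plan rather than a proof; the difficulty you ran into is real, and it is plausibly the reason the paper switches to the geometric representation argument for this theorem instead of reusing the orientation machinery from Section~\ref{sec:recog}.
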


From Theorems~\ref{comparability invariant} 
and~\ref{theorem:recognition:main}, we have the following. 
\begin{Corollary}
Comparability graphs of linear-semiorders 
can be recognized in $O(n\bar{m})$ time, 
where $n$ and $\bar{m}$ are 
the number of vertices and non-edges 
of the given graph, respectively. 
Incomparability graphs of linear-semiorders 
can be recognized in $O(nm)$ time, 
where $m$ is the number of edges of the given graph. 
\end{Corollary}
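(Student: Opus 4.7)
The plan is to derive both time bounds by combining Theorem~\ref{comparability invariant}, Proposition~\ref{prop:obtain transitive orientation}, and Theorem~\ref{theorem:recognition:main}, essentially ``pipelining'' the known subroutines. The comparability invariance is what makes the pipeline legitimate: any single transitive orientation of the input suffices, rather than having to consider all of them.

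For the first bound, suppose $G$ has $n$ vertices and $\bar m$ non-edges. First I would apply Proposition~\ref{prop:obtain transitive orientation} to decide in $O(n\bar m)$ time whether $G$ is a comparability graph and, if so, to produce a transitive orientation $P$ of $G$. If $G$ is not a comparability graph, the algorithm rejects; otherwise, $P$ is a partial order whose comparability graph is $G$, its number of incomparable pairs equals the number of non-edges of $G$, namely $\bar m$. I would then feed $P$ to the recognition algorithm of Theorem~\ref{theorem:recognition:main}, which decides in $O(n\bar m)$ time whether $P$ is a linear-semiorder. Accepting exactly when this returns ``yes'' is correct by Theorem~\ref{comparability invariant}: if $G$ is the comparability graph of some linear-semiorder, then every partial order sharing $G$ as its comparability graph, in particular $P$, is itself a linear-semiorder, and the converse is trivial. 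The total running time is $O(n\bar m)$.

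For the second bound, suppose the input $G$ has $n$ vertices and $m$ edges, and we want to decide whether $G$ is the incomparability graph of a linear-semiorder. This is equivalent to asking whether $\overline G$ is the comparability graph of a linear-semiorder, so I would form $\overline G$ (in $O(n^2)$ time, which is absorbed) and invoke the first bound on $\overline G$. The number of non-edges of $\overline G$ is exactly $m$, so the first bound gives running time $O(n\cdot m)$.

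I do not expect any genuine obstacle; the only point that needs a line of justification is the appeal to comparability invariance, which ensures that the specific transitive orientation produced by Proposition~\ref{prop:obtain transitive orientation} is the ``right'' witness to test, so that a single call to the recognition algorithm of Theorem~\ref{theorem:recognition:main} correctly classifies $G$ without enumerating alternative orientations.
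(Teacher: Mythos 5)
Your proposal is correct and matches the paper's own proof: both pipeline Proposition~\ref{prop:obtain transitive orientation} (recognize comparability and obtain a transitive orientation in $O(n\bar{m})$ time) with the recognition algorithm of Theorem~\ref{theorem:recognition:main}, invoke Theorem~\ref{comparability invariant} to justify testing a single orientation, and handle the incomparability case by complementing in $O(n^2)$ time.
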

\begin{proof}
We can recognize comparability graphs of linear-semiorders 
by the following algorithm. 
Let $G$ be the given graph. 
First, obtain a transitive orientation $P$ of $G$. 
If $G$ does not have a transitive orientation, 
$G$ is not a comparability graph. 
Then, test whether $P$ is a linear-semiorder 
by Algorithms~\ref{algo:E_o}--\ref{algo:lemma4}. 
Theorem~\ref{comparability invariant} ensures that 
every transitive orientation of $G$ is a linear-semiorder 
if $G$ is the comparability graph of a linear-semiorder. 
Thus the algorithm is correct. 
Both steps can be done in $O(n\bar{m})$ time 
as shown in Proposition~\ref{prop:obtain transitive orientation} 
and Theorem~\ref{theorem:recognition:main}. 
Thus the algorithm takes $O(n\bar{m})$ time. 
\par
Since the complement of a graph can be obtained in $O(n^2)$ time, 
incomparability graphs of linear-semiorders 
can be recognized in $O(nm)$ time. 
\end{proof}
\par
% Proof technique
To prove Theorem~\ref{comparability invariant}, 
we use the proof technique developed in~\cite{FM98-Order}. 
Let $P$ be a partial order on a set $V$. 
A non-empty subset $A \subseteq V$ is \emph{autonomous} in $P$ 
if for any element $v \in V - A$, whenever $v \prec a$, 
$v \parallel a$, or $v \succ a$ holds in $P$ 
for some element $a \in A$, then the same holds 
for all elements $a \in A$. 
Let $P'$ be a partial order with the same comparability graph as $P$. 
The order $P'$ is said to be obtained from $P$ by a \emph{reversal} 
if there is an autonomous set $A$ of $P$ such that 
\begin{enumerate}[\bfseries --]
\item if not both of $u$ and $v$ are in $A$, then $u \prec v$ in $P'$ if and only if $u \prec v$ in $P$, and 
\item if both of $u$ and $v$ are in $A$, then $u \prec v$ in $P'$ if and only if $u \succ v$ in $P$. 
\end{enumerate}
We denote by $P \mid A$ the order obtained from $P$ 
by reversing $A$. 

The following theorem provides a simple scheme to show 
the comparability invariance. 
\begin{Theorem}[\cite{Gallai67}]\label{theorem:Gallai67}
Two orders $P$ and $P'$ have the same comparability graph 
if and only if there is a finite sequence of orders 
$P_0, P_1, \ldots, P_k$ such that $P_0 = P$, $P_k = P'$, and 
$P_i$ is obtained from $P_{i-1}$ by a reversal 
for each $i = 1, 2, \ldots, k$. 
\end{Theorem}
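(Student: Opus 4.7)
The plan is to prove the two directions separately.

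The \emph{if} direction follows by direct verification. If $P' = P \mid A$ for an autonomous set $A$, then a pair $u,v$ of vertices is comparable in $P$ iff it is in $P'$: pairs with both endpoints outside $A$ are unaffected, pairs with both endpoints inside $A$ have their direction flipped but their comparability preserved, and for a pair with one endpoint in $A$ and one outside, the autonomy of $A$ guarantees a uniform relation type (comparable or incomparable) across $A$ that is untouched by the reversal.

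For the \emph{only if} direction, I would induct on $|V|$ using the modular decomposition of the comparability graph $G$. If $G$ admits a nontrivial module $M$ with $1 < |M| < |V|$, the argument splits into two inductive steps: first, connect the induced orders $P[M]$ and $P'[M]$ by a sequence of reversals inside $M$ (every module of $G[M]$ is also a module of $G$), aligning $P$ with $P'$ on $M$; then pass to the quotient order obtained by contracting $M$ to a single representative and invoke induction there, each quotient-reversal lifting canonically to a reversal of a module of $G$. If instead $G$ has no nontrivial module, then $G$ is \emph{prime}, and a classical lemma states that a prime comparability graph admits exactly two transitive orientations, namely some $P$ and its full reverse $P \mid V$. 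Hence either $P' = P$ (empty sequence) or $P' = P \mid V$ (a single reversal of the trivially autonomous full set $V$), completing the step.

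The main obstacle is the prime case. The standard proof uses the \emph{implication classes} of $G$: define a relation $\Gamma$ on directed edges by $(a,b) \Gamma (a',b')$ if $a = a'$ and $bb' \notin E(G)$, or if $b = b'$ and $aa' \notin E(G)$, and take its transitive-symmetric closure. In any transitive orientation, $\Gamma$-related edges must be oriented consistently, since otherwise a triangle on $\{a,b,b'\}$ or $\{a,a',b\}$ would fail transitivity against the relevant non-edge. One then checks that the vertex span of a proper implication class forms a module of $G$; primality then forces this span to be trivial or all of $V$, so all directed edges of $G$ lie in a single implication class together with its reverse. This pins down exactly two transitive orientations and finishes the inductive step.
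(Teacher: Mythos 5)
The paper offers no proof of this statement (it is imported from Gallai and used as a black box), so I can only measure your sketch against the classical arguments, whose general shape — the easy direction by direct verification, the converse by induction along the modular structure with the prime case settled by implication classes — you do follow. The genuine gap is in the reduction step for the non-prime case: you conflate being a \emph{module} of the comparability graph $G$ with being \emph{autonomous} in a given order, and a reversal (as defined in this paper) requires autonomy in the current order $P_i$, not merely being a module of $G$. Concretely, let $G = K_3$ on $\{a, v, b\}$, take the nontrivial module $M = \{a, b\}$, and the transitive orientations $P\colon b \prec v \prec a$ and $P'\colon a \prec v \prec b$. Your first step would align $P[M]$ with $P'[M]$ by reversals inside $M$, i.e.\ reverse $A = M$; but $M$ is not autonomous in $P$ (here $b \prec v$ while $a \succ v$), and reversing only $\{a, b\}$ does not even yield a partial order (it creates the cycle $v \prec a \prec b \prec v$). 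The same issue undermines the lifting of quotient reversals: a set autonomous in the quotient order lifts to an autonomous set of the full order only when $M$ itself is autonomous in that order, which is precisely the point at stake. (There is also a bookkeeping slip: lifted reversals that contain $M$ flip the interior of $M$, so aligning $M$ first and then performing the quotient reversals can undo the alignment; one must align $M$ last or append a final reversal of $M$.)

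The gap is fixable, but the fix is where the real content of Gallai's theorem lies: $M$ must be chosen as a child of the root of the modular decomposition (a connected component of $G$ or of $\overline{G}$, or a child of a prime root), and one must prove that such a module is autonomous in \emph{every} transitive orientation of $G$. For a series root this is an elementary but necessary argument: if $a \prec v \prec a'$ with $a, a'$ in one co-component and $v$ outside it, then following a path of non-edges from $a$ to $a'$ inside the co-component and using transitivity through $v$ forces a comparability between two incomparable elements. For a prime quotient the uniformity of the edges between two children again needs the $\Gamma$-relation (the same triangle argument you sketch). Moreover, in the series case the two linear orders that $P$ and $P'$ induce on the co-components must be linked by reversing unions of two \emph{consecutive} co-components (adjacent transpositions), since an arbitrary union of co-components is a module of $G$ but is autonomous in the order only when consecutive. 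Your prime case is the standard implication-class argument and is acceptable as a sketch (the classical statement is that a prime comparability graph has a single implication class up to reversal, hence exactly two transitive orientations), but the theorem cannot be reduced to it through an arbitrary nontrivial module as you propose.
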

Therefore, to prove Theorem~\ref{comparability invariant}, 
it suffices to show the following. 
\begin{Lemma}\label{lemma:comparability invariant}
If an order $P$ on a set $V$ is a linear-semiorder and 
a subset $A \subseteq V$ is autonomous in $P$, 
then $P \mid A$ is a linear-semiorder. 
\end{Lemma}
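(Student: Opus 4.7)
The plan is to apply Theorem~\ref{theorem:characterization}: starting from a linear extension $L$ of $P$ that fulfills the $\mathbf{2+2}$ and $\mathbf{3+1}$ rules, I will construct a linear extension $L'$ of $P \mid A$ with the same two properties. By autonomy, partition $V \setminus A$ into the three groups $V^- = \{v : v \prec A \text{ in } P\}$, $V^+ = \{v : v \succ A \text{ in } P\}$, and $V^{\parallel} = \{v : v \parallel A \text{ in } P\}$; transitivity immediately forbids, e.g., a $V^{\parallel}$-element $\prec$-preceding a $V^-$-element in $P$. Define $L^{*}$ by concatenating the blocks $V^-$, $A$, $V^{\parallel}$, $V^+$, each in the $L$-order within. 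A short case check using autonomy and transitivity shows that $L^{*}$ is a linear extension of $P$ in which $A$ is contiguous. Then let $L'$ be obtained from $L^{*}$ by reversing the order inside the block $A$; because $A$ is contiguous in $L^{*}$, $L'$ is a linear order, and one sees directly that $L'$ extends $P \mid A$ (pairs outside $A$ inherit $L^{*}$, and the reversal inside $A$ matches the reversal that defines $P \mid A$).

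The core work is showing that $L'$ fulfills both rules with respect to $P \mid A$. Suppose for contradiction that four elements $x, y, z, w$ form a forbidden configuration in $L'$ for either rule. I would case-analyze by which of $x, y, z, w$ lie in $A$ (sixteen cases a priori, but many collapse). In each case I would translate the $P \mid A$-relations and the $L'$-relations back into $P$- and $L$-relations using two ingredients: (i) invert on pairs inside $A$, since $P \mid A$ reverses $P$ there and $L'$ reverses $L^{*}$ there; and (ii) replace the group-based ordering of $L^{*}$ across the blocks $V^-, A, V^{\parallel}, V^+$ with the corresponding $L$-relations, noting that $L$ and $L^{*}$ agree on pairs within one block. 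After a suitable relabeling of the four elements, this yields a configuration in $P$ to which either the $\mathbf{2+2}$ rule or the $\mathbf{3+1}$ rule for $L$ applies, producing a contradiction. With both rules verified for $L'$, Theorem~\ref{theorem:characterization} concludes that $P \mid A$ is a linear-semiorder.

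The main obstacle is the case analysis for $L'$: reversing an autonomous block can turn a $\mathbf{2+2}$-instance in $P \mid A$ into a $\mathbf{3+1}$-instance in $P$ and vice versa, so each subcase must be matched with the correct rule for $L$ rather than the naively expected one. Pairs between $A$ and $V^{\parallel}$ are the most delicate, since autonomy does not constrain their $L$-order and the definition of $L^{*}$ fixes it canonically (placing $A$ before $V^{\parallel}$); the translation in each subcase must honor this canonical choice. Autonomy is used throughout to collapse subcases: once the group of an element of $V \setminus A$ is known, its $P$-relation with every element of $A$ is determined, so many \emph{a priori} configurations are excluded at once, keeping the case analysis manageable.
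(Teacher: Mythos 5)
Your strategy --- reduce to the linear-extension characterization (Theorem~\ref{theorem:characterization}), build a linear extension $L'$ of $P \mid A$, and transport any forbidden configuration of $(P\mid A, L')$ back to a forbidden configuration of $(P, L)$ --- has a gap that defeats the construction before the case analysis begins. The block extension $L^*$ is indeed a linear extension of $P$ with $A$ contiguous, but it is \emph{not} order-isomorphic to $L$ on $V \setminus A$: a $V^{\parallel}$-element that lies to the left of some $V^-$-elements in $L$ gets moved to the right of all of $V^-$ and $A$ in $L^*$. That move can create a $\mathbf{3+1}$ forbidden configuration that $L$ did not have, and the matching configuration in $(P\mid A, L')$ then has no forbidden counterpart in $(P, L)$ to supply the contradiction.

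A minimal example: take $V = \{x, a_1, a_2, z, w\}$, let $P$ be the chain $x \prec a_1 \prec a_2 \prec z$ together with $w$ incomparable to every other element, and let $A = \{a_1, a_2\}$, which is autonomous. Then $L$ with $w \prec x \prec a_1 \prec a_2 \prec z$ is a linear extension of $P$ fulfilling both rules (every $\mathbf{3+1}$ of $P$ has $w$ as the isolated element, and $w$ is first). Your blocks are $V^- = \{x\}$, $V^{\parallel} = \{w\}$, $V^+ = \{z\}$, so $L^*$ is $x \prec a_1 \prec a_2 \prec w \prec z$ and $L'$ is $x \prec a_2 \prec a_1 \prec w \prec z$. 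Now $(x, a_2, z, w)$ is a $\mathbf{3+1}$ of $P\mid A$ with $x \prec w \prec z$ in $L'$, a forbidden configuration. Translating back, you find $w \prec x$ in $L$, and in fact $(P, L)$ contains no forbidden configuration at all; there is nothing to contradict, and no relabeling of the four elements can create one. The culprit is the pair $\{x, w\}$: both lie outside $A$, so ingredient (ii) treats their $L'$-relation as inherited from $L$, but $L^*$ has already flipped it because $w \in V^{\parallel}$ crossed the $V^-$ boundary. The lemma itself is of course fine here (the extension $w \prec x \prec a_2 \prec a_1 \prec z$ works for $P \mid A$); only your witness $L'$ is wrong.

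The underlying difficulty is that a rule-fulfilling linear extension of $P$ with $A$ contiguous cannot in general be obtained by the canonical block placement $V^-, A, V^{\parallel}, V^+$; the $A$-block must be threaded into $V^{\parallel}$ at a position depending on $L$, and it is not clear that a suitable position always exists. The paper sidesteps this entirely by working with triangle representations rather than linear extensions. It fixes a representation of $P$, lets $C(A^*)$ be the convex region spanned by the triangles of the non-isolated part $A^*$ of $A$, and flips the representation of the possibly strictly larger autonomous set $B$ of all triangles contained in $C(A^*)$; this flip directly yields a representation of $P\mid B$ (Lemma~\ref{lemma:6}). The extra flipping is then undone by a minimality argument: $A_1 = B - A^*$ is autonomous in $P\mid B$ with $(P\mid B)\mid A_1 = P\mid A$ (Lemma~\ref{lemma:8}), and after one more round $A_2^* \subsetneq A^*$ (Lemma~\ref{lemma:9}), so a minimal counterexample cannot exist. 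If you want a linear-extension proof, you would first need a lemma producing a rule-fulfilling linear extension in which $A$ is contiguous, with the $A$-block positioned by the structure of $L$ rather than by the coarse block partition.
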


\begin{proof}[Proof of Theorem~\ref{comparability invariant}]
Let $P$ and $P'$ be two orders with the same comparability graph. 
It suffices to show that 
$P$ is a linear-semiorder if and only if $P'$ is a linear-semiorder. 
From Theorem~\ref{theorem:Gallai67}, 
there is a finite sequence of orders 
$P = P_0, P_1, \ldots, P_k = P'$ such that 
$P_i$ is obtained from $P_{i-1}$ by a reversal 
for each $i = 1, 2, \ldots, k$. 
Suppose that $P = P_0$ is a linear-semiorder. 
From Lemma~\ref{lemma:comparability invariant}, 
$P_1$ is also a linear-semiorder. 
Continuing in this way, % for each $i = 2, 3, \ldots, k$, 
we have that $P_k = P'$ is a linear-semiorder. 
A symmetric argument would show that the converse also holds. 
\end{proof}

In the rest of this section, we prove Lemma~\ref{lemma:comparability invariant}. 
% Triangle representation
Recall that $L_1$ and $L_2$ are two horizontal lines 
in the $xy$-plane with $L_1$ above $L_2$. 
As a representation of a linear-semiorder, 
we use a set of triangles between $L_1$ and $L_2$ as follows. 

Let $P$ be a linear-semiorder on a set $V$, and 
let $L$ and $S$ be a linear order and a semiorder on $V$, 
respectively, with $L \cap S = P$. 
Let $\{p(v) \colon\ v \in V\}$ be a set of points on $L_1$ 
such that $p(u) < p(v)$ if and only if $u \prec v$ in $L$ 
for any two elements $u , v \in V$. 
Note that all the points are distinct. 
Let $\{I(v) \colon\ v \in V\}$ be a unit interval representation 
of $S$ on $L_2$. 
We assume that no two intervals share a common endpoint. 
Let $I(v) = [l(v), r(v)]$ for each element $v \in V$. 
Since each interval has unit length, $l(v) + 1 = r(v)$. 

Let $T(v)$ be the triangle spanned by $p(v)$ and $I(v)$. 
A triangle $T(u)$ \emph{lies completely to the left of} $T(v)$, 
and we write $T(u) \ll T(v)$, 
if $p(u) < p(v)$ on $L_1$ and $I(u) \ll I(v)$ on $L_2$. 
We have that $u \prec v$ in $P$ if and only if $T(u) \ll T(v)$ 
for any two elements $u, v \in V$; hence 
we call the set $\{T(v) \colon\ v \in V\}$ 
a \emph{triangle representation} of $P$. 

% Lemmas
Now, we fix a pair of 
a linear-semiorder $P$ and an autonomous set $A$ of $P$. 
We also fix a triangle representation $\{T(v) \colon\ v \in V\}$ of $P$. 

% Isolated elements
An element $a \in A$ is \emph{isolated} in $A$ if 
$a \parallel a'$ in $P$ for any element $a' \in A - \{a\}$. 
Let $A^*$ be the subset of $A$ obtained by removing 
all isolated elements of $A$. 
We can observe the following, and then 
we assume without loss of generality $A^* \neq \emptyset$. 
\begin{Lemma}\label{lemma:2}
The set $A^*$ is autonomous in $P$, and $P \mid A^* = P \mid A$. 
\end{Lemma}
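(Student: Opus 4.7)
The plan is to prove the two assertions separately, exploiting the observation that isolated elements of $A$ contribute no \emph{internal} comparabilities to $A$ and hence are inert under reversal.

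First I would verify that $A^*$ is autonomous by case analysis on $v \in V - A^*$. If $v \in V - A$, then because $A$ is autonomous every element of $A$ bears the same relation to $v$, so in particular every element of $A^* \subseteq A$ does. The remaining case is $v \in A - A^*$, i.e.\ $v$ is isolated in $A$; by definition $v \parallel a'$ in $P$ for every $a' \in A - \{v\}$, and since $A^* \subseteq A - \{v\}$ this gives $v \parallel a$ for every $a \in A^*$. Thus in both cases the relation of $v$ to elements of $A^*$ is uniform, proving that $A^*$ is autonomous.

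For the equality $P \mid A^* = P \mid A$, the key observation is that the reversal operation changes $P$ only on pairs that are both inside the reversed set and comparable in $P$. So it suffices to show that the set of pairs
\[
\{(u, v) \in A \times A \colon\ u \prec v \text{ in } P\}
\]
coincides with
\[
\{(u, v) \in A^* \times A^* \colon\ u \prec v \text{ in } P\}.
\]
The inclusion $\supseteq$ is immediate from $A^* \subseteq A$. For $\subseteq$, suppose $u, v \in A$ with $u \prec v$ in $P$; then $u$ is comparable to $v \in A - \{u\}$, so $u$ is not isolated and lies in $A^*$, and similarly $v \in A^*$. Hence the reversed pair-sets coincide, which yields $P \mid A^* = P \mid A$.

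Neither step presents a real obstacle; the only thing to be careful about is the case $v \in A - A^*$ in the autonomy argument, where one has to notice that the definition of ``isolated'' provides exactly the incomparability with all remaining elements of $A$, and in particular with all of $A^*$.
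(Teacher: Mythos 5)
The paper states this lemma as a bare observation ("We can observe the following") without providing a proof, so there is no authored argument to compare against. Your proof is correct and is the natural one: the autonomy of $A^*$ is handled by splitting $v \in V - A^*$ into the cases $v \in V - A$ (inherit uniformity from the autonomy of $A$) and $v \in A - A^*$ (isolation gives uniform incomparability), and the equality $P \mid A^* = P \mid A$ follows because a reversal only alters comparable pairs internal to the reversed set, and any pair $u \prec v$ in $P$ with $u, v \in A$ forces both endpoints to be non-isolated, hence in $A^*$. The only side remark worth making is that the paper's definition of autonomous demands non-emptiness, so the first assertion tacitly requires $A^* \neq \emptyset$; the paper itself imposes this assumption in the surrounding text, and you are working within that convention, so this is not a gap in your argument.
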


% Convex region associated with a set of elements
We define that 
$l_1 = \min_{v \in A^*} p(v)$ and 
$r_1 = \max_{v \in A^*} p(v)$, similarly, 
$l_2 = \min_{v \in A^*} l(v)$ and 
$r_2 = \max_{v \in A^*} r(v)$. 
Let $C(A^*)$ be the trapezoid spanned by 
the interval $[l_1, r_1]$ on $L_1$ and 
the interval $[l_2, r_2]$ on $L_2$, while 
$l_1$ is the upper-left  corner of $C(A^*)$ and 
$r_1$ is the upper-right corner of $C(A^*)$, similarly, 
$l_2$ is the lower-left  corner of $C(A^*)$ and 
$r_2$ is the lower-right corner of $C(A^*)$. 
Note that $C(A^*)$ can be viewed as the convex region 
spanned by the triangles $T(v)$ with $v \in A^*$. 
In addition, let $a_l^1$ and $a_r^1$ denote the elements of $A^*$ with 
$p(a_l^1) = l_1$ and 
$p(a_r^1) = r_1$, respectively. 
Similarly, 
let $a_l^2$ and $a_r^2$ denote the elements of $A^*$ with 
$l(a_l^2) = l_2$ and 
$r(a_r^2) = r_2$, respectively. 

Let $B$ be the set of elements $v \in V$ with $T(v) \subseteq C(A^*)$. 
Obviously, $B \supseteq A^*$. 
The convex region $C(B)$ spanned by the triangles $T(v)$ with $v \in B$ 
is equal to $C(A^*)$. 
\begin{Lemma}\label{lemma:3}
Let $v \in V - B$. If $T(v) \cap C(A^*) \neq \emptyset$ 
then $T(v)$ intersects with every triangle 
spanned by a point $p \in [l_1, r_1]$ on $L_1$ and 
a unit-length interval $I \subseteq [l_2, r_2]$ on $L_2$. 
\end{Lemma}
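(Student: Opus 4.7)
The plan is to reduce, via the autonomy of $A^*$, to the case where $v$ is incomparable in $P$ to every element of $A^*$, and then to translate the required intersection property into two numerical conditions on $p(v)$ and $l(v)$ whose negations I will refute using the hypothesis $v \in V - B$ together with the non-isolatedness of elements of $A^*$.

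For the reduction, I will use that $A^* \subseteq B$ (so $v \in V - B$ lies outside $A^*$) and that $A^*$ is autonomous in $P$ by Lemma~\ref{lemma:2}, so $v$ bears a uniform relation in $P$ to all of $A^*$. If $v \prec a$ for every $a \in A^*$, then $T(v) \ll T(a)$ forces $p(v) < l_1$ and $r(v) < l_2$, so $T(v)$ lies strictly to the upper-left of the left edge of $C(A^*)$ and $T(v) \cap C(A^*) = \emptyset$, contradicting the hypothesis; the case $v \succ a$ for all $a$ is symmetric. Hence I may assume $v \parallel a$ in $P$ for every $a \in A^*$. Next, using that two such triangles fail to meet iff one is strictly to the left of the other in the $\ll$ sense, I will note that some inner triangle is strictly to the left of $T(v)$ iff $l_1 < p(v)$ and $l_2 + 1 < l(v)$, while some inner triangle is strictly to the right iff $r_1 > p(v)$ and $r_2 - 2 > l(v)$; refuting both conjunctions is equivalent to the conclusion.

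The main obstacle is refuting the left-separation conjunction; the right case follows by a symmetric argument with $a_r^1, a_r^2$ in place of $a_l^1, a_l^2$. Suppose for contradiction $l_1 < p(v)$ and $l_2 + 1 < l(v)$. From $v \parallel a_l^1$ combined with $a_l^1 \prec v$ in $L$, I get $r(a_l^1) \geq l(v)$; from $v \parallel a_l^2$ combined with $a_l^2 \prec v$ in $S$ (which uses $r(a_l^2) = l_2 + 1 < l(v)$), I get $p(v) < p(a_l^2) \leq r_1$. Now I invoke $v \in V - B$: since $p(v) > l_1$ and $l(v) > l_2$, $T(v) \not\subseteq C(A^*)$ forces either $p(v) > r_1$ or $r(v) > r_2$. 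The first directly contradicts $p(v) < p(a_l^2) \leq r_1$. In the second, every $a \in A^*$ satisfies $l(a) \leq r_2 - 1 < l(v)$, whereas non-isolatedness of $a_l^1$ in $A^*$ (its $p$-minimality rules out any $a' \in A^*$ with $a' \prec a_l^1$) yields some $a'' \in A^*$ with $a_l^1 \prec a''$ in $P$, so $l(a'') > r(a_l^1) \geq l(v)$, a contradiction.
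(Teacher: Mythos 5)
Your proof is correct, and it uses the same basic ingredients as the paper's argument (autonomy of $A^*$, non-isolatedness, the unit-length constraint, and the fact that $v\notin B$ forces $T(v)\not\subseteq C(A^*)$), but it organizes them differently. The paper cases directly on \emph{how} $T(v)$ sticks out of $C(A^*)$ (one of $p(v)<l_1$, $r_1<p(v)$, $l(v)<l_2$, $r_2<r(v)$) and shows in each case that $T(v)$ must then reach across the trapezoid on the opposite line, using autonomy to pass from one extremal element of $A^*$ to the other and non-isolatedness to force the reach. You instead negate the conclusion: you case on whether some inner triangle could be $\ll$-left or $\ll$-right of $T(v)$, reduce these to pairs of coordinate inequalities, and refute each pair by combining the derived positions of $a_l^1$ and $a_l^2$ with the two possible ways $T(v)\not\subseteq C(A^*)$ can fail. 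Your up-front reduction to the case $v\parallel a$ for all $a\in A^*$ is a nice streamlining that the paper handles locally inside each case instead. Both arguments are equally rigorous; yours trades the paper's four explicit geometric cases for two algebraic refutations plus a reduction, which is arguably cleaner, while the paper's version makes the geometric picture of the crossing triangle a bit more visible.
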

\begin{proof}
Since $v \notin B$, we have $T(v) \not\subseteq C(A^*)$, 
and hence $p(v) \notin [l_1, r_1]$ or $I(v) \not\subseteq [l_2, r_2]$. 
\par
Suppose $p(v) < l_1$. 
Since $T(v) \cap C(A^*) \neq \emptyset$, we have $l_2 < r(v)$. 
Thus $T(v) \cap T(a_l^2) \neq \emptyset$, 
and hence $v \parallel a_l^2$ in $P$. 
Since $A^*$ is autonomous, $v \parallel a_r^2$ in $P$, 
and hence $T(v) \cap T(a_r^2) \neq \emptyset$. 
Thus $l(a_r^2) < r(v)$. 
Therefore, $p(v) < l_1$ and $l(a_r^2) = r_2 - 1 < r(v)$, 
and the claim holds. 
A similar argument would show that 
the claim holds when $r_1 < p(v)$. 
\par
Suppose $l(v) < l_2$. 
If $l_2 < r(v)$ then $T(v) \cap T(a_l^2) \neq \emptyset$. 
If $r(v) < l_2$ then $T(v) \cap C(A^*) \neq \emptyset$ implies 
$l_1 < p(v)$, and hence $T(v) \cap T(a_l^1) \neq \emptyset$. 
Thus in both cases, $v \parallel a$ in $P$ for some element $a \in A^*$. 
Since $A^*$ is autonomous, $v \parallel a_r^1$ in $P$. 
Since $a_r^1$ is not isolated, 
there is an element $a' \in A^*$ with $T(a') \ll T(a_r^1)$. 
Since $l(v) < l_2 \leq l(a')$ and each interval has unit length, 
$r(v) < r(a')$, and hence $I(v) \ll I(a_r^1)$. 
It follows $p(a_r^1) < p(v)$. 
Therefore, $r_1 < p(v)$ and $l(v) < l_2$, 
and the claim holds. 
A similar argument would show that 
the claim holds when $r_2 < r(v)$. 
\end{proof}

From Lemma~\ref{lemma:3}, we have the following. 
\begin{Lemma}\label{lemma:4}
For any element $v \in V - B$, one of the following holds: 
\begin{enumerate}[\bfseries --]
\item $T(v)$ lies completely to the left of $C(A^*)$. 
\item $T(v)$ intersects with every triangle contained in $C(A^*)$. 
\item $T(v)$ lies completely to the right of $C(A^*)$. 
\end{enumerate}
\end{Lemma}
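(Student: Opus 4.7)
The plan is to split on whether $T(v)$ meets $C(A^*)$. If $T(v) \cap C(A^*) \neq \emptyset$, then the middle alternative follows at once from Lemma~\ref{lemma:3}: every triangle $T \subseteq C(A^*)$ is spanned by a point in $[l_1, r_1]$ on $L_1$ and a unit interval in $[l_2, r_2]$ on $L_2$, so Lemma~\ref{lemma:3} says $T(v)$ meets every such $T$.

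Assume now $T(v) \cap C(A^*) = \emptyset$. The next step is to locate $p(v)$ and $I(v)$ relative to $C(A^*)$. Since the top edge $[l_1, r_1]$ of $C(A^*)$ lies on $L_1$ and $p(v) \in T(v) \cap L_1$, disjointness forces $p(v) \notin [l_1, r_1]$; likewise, since the base $[l_2, r_2]$ of $C(A^*)$ lies on $L_2$ and $I(v) \subseteq T(v) \cap L_2$, we get $I(v) \cap [l_2, r_2] = \emptyset$. Thus $p(v) < l_1$ or $p(v) > r_1$, and $r(v) < l_2$ or $l(v) > r_2$.

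Of the four resulting sign combinations, two are \emph{aligned} (apex and base on the same side of $C(A^*)$) and two are \emph{diagonal}. The aligned cases give the first and third alternatives directly: if $p(v) < l_1$ and $r(v) < l_2$, then every triangle $T \subseteq C(A^*)$ has apex at some $p \geq l_1 > p(v)$ and left endpoint at some $l \geq l_2 > r(v)$, so $T(v) \ll T$; the case $p(v) > r_1$ and $l(v) > r_2$ is symmetric.

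The main step, and the only real obstacle, is ruling out the two diagonal configurations. Consider $p(v) < l_1$ together with $l(v) > r_2$. The left side of $T(v)$ is the straight segment from $p(v)$ on $L_1$ down to $l(v)$ on $L_2$, and the right side of $C(A^*)$ is the straight segment from $r_1$ on $L_1$ down to $r_2$ on $L_2$. At the level of $L_1$ the left side of $T(v)$ sits at $p(v) < l_1 \leq r_1$, strictly to the left of the right side of $C(A^*)$; at the level of $L_2$ it sits at $l(v) > r_2$, strictly to the right. By the intermediate value theorem the two segments cross between $L_1$ and $L_2$, and any such crossing point lies in $T(v) \cap C(A^*)$, contradicting disjointness. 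The remaining diagonal case $p(v) > r_1$ with $r(v) < l_2$ is handled symmetrically, using the right side of $T(v)$ and the left side of $C(A^*)$.
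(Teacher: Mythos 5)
Your proof is correct, and the overall structure (use Lemma~\ref{lemma:3} when $T(v)$ meets $C(A^*)$; use the geometry of the strip between $L_1$ and $L_2$ otherwise) is exactly what the paper has in mind --- the paper gives no explicit proof of this lemma, stating only that it follows from Lemma~\ref{lemma:3}, so you are filling in details the author treats as immediate. Your handling of the disjoint case is clean: locating $p(v)$ and $I(v)$ outside $[l_1,r_1]$ and $[l_2,r_2]$ respectively, observing the aligned combinations give the left/right alternatives, and killing the diagonal combinations via the intermediate value theorem on the bounding segments is a tidy, self-contained way to make rigorous the fact that two disjoint convex regions both spanning the full strip from $L_1$ to $L_2$ must be linearly ordered left-to-right.
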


We also have the following from Lemma~\ref{lemma:4}
\begin{Lemma}\label{lemma:5}
The set $B$ is autonomous in $P$. 
\end{Lemma}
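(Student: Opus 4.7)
The plan is to derive the autonomy of $B$ directly from the trichotomy stated in Lemma~\ref{lemma:4}. Recall that $B$ consists precisely of those elements $v \in V$ whose triangle $T(v)$ lies inside the convex region $C(A^*)$, and that $C(A^*)$ is spanned by the horizontal interval $[l_1,r_1]$ on $L_1$ and the horizontal interval $[l_2,r_2]$ on $L_2$. To verify autonomy, I must show that each $v \in V - B$ bears the same relation ($\prec$, $\parallel$, or $\succ$) to every element of $B$.

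First, I would fix an arbitrary $v \in V - B$ and an arbitrary $b \in B$. Since $T(b) \subseteq C(A^*)$, the apex of $T(b)$ satisfies $p(b) \in [l_1,r_1]$ and the base satisfies $I(b) \subseteq [l_2,r_2]$, so $T(b)$ is exactly the sort of triangle described in Lemma~\ref{lemma:3}.

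Next, I would invoke Lemma~\ref{lemma:4}, which places $v$ in one of three cases. If $T(v)$ lies completely to the left of $C(A^*)$, then $p(v) < l_1 \leq p(b)$ and $r(v) < l_2 \leq l(b)$, so $T(v) \ll T(b)$ and hence $v \prec b$ in $P$. If $T(v)$ lies completely to the right of $C(A^*)$, a symmetric argument gives $v \succ b$ in $P$. In the middle case, $T(v)$ intersects every triangle contained in $C(A^*)$; in particular $T(v) \cap T(b) \neq \emptyset$, giving $v \parallel b$ in $P$.

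In each case, which case $v$ falls into is a property of $v$ alone and does not depend on the choice of $b \in B$, so $v$ relates uniformly to every member of $B$. Since $v \in V - B$ was arbitrary, this is exactly the definition of $B$ being autonomous. There is no real obstacle here: the substance of the work has already been done in Lemma~\ref{lemma:3} and Lemma~\ref{lemma:4}; the proof of Lemma~\ref{lemma:5} is simply a matter of pushing the geometric containment $T(b) \subseteq C(A^*)$ through the three cases.
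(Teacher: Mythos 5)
Your proof is correct and takes essentially the same approach as the paper: both deduce autonomy directly from the trichotomy of Lemma~\ref{lemma:4}, the only difference being cosmetic organization (you case-split on which branch of the trichotomy $v$ falls into, while the paper case-splits on the relation $v$ bears to some fixed $b\in B$ and then argues it must propagate to all of $B$).
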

\begin{proof}
Let $v \in V - B$. 
Suppose that $v \parallel b$ in $P$ for some element $b \in B$. 
Then $T(v) \cap T(b) \neq \emptyset$ implies 
$T(v) \cap C(A^*) \neq \emptyset$, and hence 
$T(v) \cap T(b) \neq \emptyset$ for all elements $b \in B$. 
Thus $v \parallel b$ in $P$ for all elements $b \in B$. 
%\par
If $v \prec b$ in $P$ for some element $b \in B$, 
then $T(v)$ lies completely to the left of $C(A^*)$, 
and hence $v \prec b$ in $P$ for all elements $b \in B$. 
If $v \succ b$ in $P$ for some element $b \in B$, 
then $T(v)$ lies completely to the right of $C(A^*)$, 
and hence $v \succ b$ in $P$ for all elements $b \in B$. 
\end{proof}

Let $P_1 = P \mid B$. 
We define $\{T'(v) \colon\ v \in V\}$ 
as the set of triangles obtained from 
$\{T(v) \colon\ v \in V\}$ 
by \emph{flipping the representation of $B$ relative to $C(B)$}, 
that is, $T'(v) = T(v)$ for any element $v \in V - B$ 
whereas for each element $v \in B$, 
the triangle $T'(v)$ is the triangle spanned by the point 
$l_1 + r_1 - p(v)$ on $L_1$ and the interval 
$[l_2 + r_2 - r(v), l_2 + r_2 - l(v)]$ on $L_2$. 
Note that the convex region $C'(B)$ 
spanned by the triangles $T'(v)$ with $v \in B$ 
is equal to $C(B)$. 
\begin{Lemma}\label{lemma:6}
The order $P_1$ is a linear-semiorder with 
a representation $\{T'(v) \colon\ v \in V\}$. 
\end{Lemma}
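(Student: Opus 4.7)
The plan is to verify directly that $\{T'(v) \colon v \in V\}$ is a triangle representation of $P_1 = P \mid B$; that is, for every pair $u, v \in V$ I will check that $u \prec v$ in $P_1$ if and only if $T'(u) \ll T'(v)$. The case analysis is driven by the membership of $u$ and $v$ in $B$.

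If both $u, v \in V - B$, then $T'(u) = T(u)$ and $T'(v) = T(v)$, and since $B$ is autonomous, the reversal leaves the relation between $u$ and $v$ unchanged, so the equivalence reduces to the fact that $\{T(v)\}$ represents $P$. If both $u, v \in B$, then $u \prec v$ in $P_1$ is equivalent to $u \succ v$ in $P$, i.e., $T(v) \ll T(u)$. The map that sends a point $x$ on $L_1$ to $l_1 + r_1 - x$ and a point $x$ on $L_2$ to $l_2 + r_2 - x$ extends to an affine map of the plane that carries $C(A^*)$ to itself, reverses horizontal order on both lines, and preserves non-empty intersection of triangles; this directly yields $T(v) \ll T(u) \Leftrightarrow T'(u) \ll T'(v)$ and the matching equivalence for the incomparable case.

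The main case is the mixed one, say $u \in V - B$ and $v \in B$ (the symmetric case is analogous). Here $T'(u) = T(u)$ is unchanged while $T'(v) \subseteq C(A^*)$ is a reflected triangle, so I need to show that $T(u)$ relates identically to $T(v)$ and to $T'(v)$. By Lemma~\ref{lemma:5}, $B$ is autonomous in $P$, so the relation between $u$ and $v$ agrees with the relation between $u$ and \emph{every} element of $B$; moreover, since $\{u,v\} \not\subseteq B$, this common relation is preserved by the reversal. Lemma~\ref{lemma:4} then gives a trichotomy: $T(u)$ lies completely to the left of $C(A^*)$, intersects every triangle contained in $C(A^*)$, or lies completely to the right of $C(A^*)$, and these three cases correspond respectively to $u \prec$, $u \parallel$, and $u \succ$ every element of $B$. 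Because $T'(v) \subseteq C(A^*)$, the first and third cases immediately yield $T(u) \ll T'(v)$ and $T'(v) \ll T(u)$, while for the middle case Lemma~\ref{lemma:3} guarantees $T(u) \cap T'(v) \neq \emptyset$, so neither $T'(u) \ll T'(v)$ nor $T'(v) \ll T'(u)$ holds.

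The principal obstacle is this mixed case: one must be sure that although $T'(v)$ may sit in a different position within $C(A^*)$ than $T(v)$, the geometric confinement provided by Lemmas~\ref{lemma:3} and~\ref{lemma:4} is uniform over \emph{all} triangles supported by a point in $[l_1, r_1]$ and a unit interval in $[l_2, r_2]$, so the relation between $T(u)$ and $T(v)$ transfers intact to the relation between $T(u)$ and $T'(v)$. Once this is established, a routine check---that the reflected points on $L_1$ remain distinct, that the reflected intervals on $L_2$ remain unit length, and that a negligible perturbation can be applied if necessary to keep endpoints in general position---completes the argument.
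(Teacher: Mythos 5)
Your proof is correct and follows essentially the same approach as the paper: you split by membership of $u$ and $v$ in $B$, reduce the case of both in $B$ to the order-reversal of the reflection, and invoke Lemma~\ref{lemma:4} for the mixed case via the observation that $T'(v)\subseteq C(A^*)$. Your treatment of the mixed case is spelled out in more detail than the paper's one-line citation of Lemma~\ref{lemma:4}, but the underlying argument is the same.
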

\begin{proof}
Let $P'$ be the partial order on $V$ defined by 
$\{T'(v) \colon\ v \in V\}$, 
and let $u$ and $v$ be two elements of $V$. 
If neither $u$ nor $v$ is in $B$, then 
their relation in $P'$ remains the same as in $P$. 
If either $u$ or $v$ is in $B$, then 
Lemma~\ref{lemma:4} ensures that 
their relation in $P'$ remains the same as in $P$. 
If both $u$ and $v$ are in $B$, then 
$T'(u) \ll T'(v) \iff T(u) \gg T(v)$; 
hence $u \prec v$ in $P'$ 
if and only if $u \succ v$ in $P$. 
Thus $P' = P_1$. 
\end{proof}

Let $A_1 = B - A^*$. 
If $A_1 = \emptyset$ then $B = A^*$, and hence 
$P \mid A = P \mid A^* = P \mid B = P_1$ 
is a linear-semiorder. 
Thus we assume without loss of generality $A_1 \neq \emptyset$. 
\begin{Lemma}\label{lemma:7}
All elements of $A_1$ are incomparable to 
all elements of $A^*$ in $P$. 
\end{Lemma}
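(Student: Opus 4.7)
The plan is to exploit the autonomy of $A^*$ (established in Lemma~\ref{lemma:2}) together with the geometric constraint $T(a_1) \subseteq C(A^*)$ that comes from $a_1 \in B$. Since $A^*$ is autonomous and $a_1 \in A_1 = B - A^*$ lies in $V - A^*$, exactly one of three alternatives holds: $a_1 \prec a$ for every $a \in A^*$, $a_1 \succ a$ for every $a \in A^*$, or $a_1 \parallel a$ for every $a \in A^*$. So it suffices to rule out the first two alternatives, and the third is precisely the conclusion of the lemma.

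First I would note that, by definition of $B$, the containment $T(a_1) \subseteq C(A^*)$ forces the apex $p(a_1)$ to lie in the horizontal interval $[l_1, r_1]$ on $L_1$. If $a_1 \prec a$ held for every $a \in A^*$, then in particular $a_1 \prec a_l^1$ in $P$, so $T(a_1) \ll T(a_l^1)$, which forces $p(a_1) < p(a_l^1) = l_1$ — a direct contradiction. By the symmetric argument using $a_r^1$, the assumption that $a_1 \succ a$ for every $a \in A^*$ yields $p(a_1) > r_1$, again contradicting $p(a_1) \in [l_1, r_1]$. The only surviving alternative is therefore $a_1 \parallel a$ for every $a \in A^*$, which is exactly what is claimed.

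There is no serious obstacle here; the lemma is essentially a bookkeeping consequence of the definitions of $A^*$, $B$, and $C(A^*)$ combined with the autonomy of $A^*$. The only mildly subtle point worth flagging is that $p(a_1) \neq l_1$ and $p(a_1) \neq r_1$, because $a_1 \notin A^*$ and the points on $L_1$ are pairwise distinct; this gives strict inequalities $l_1 < p(a_1) < r_1$, but even the non-strict inequalities already suffice to derive the contradictions above. Note also that this argument uses only the horizontal (linear-order) coordinate of the representation, so the semiorder data on $L_2$ plays no role — the $L_2$ side of the representation will instead enter the subsequent lemmas that replace $A_1$ inside $C(A^*)$.
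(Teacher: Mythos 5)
Your proof is correct and takes essentially the same approach as the paper: use the autonomy of $A^*$ to transfer a hypothetical comparability to the corner element $a_l^1$ (resp.\ $a_r^1$) and then contradict the containment $T(a_1) \subseteq C(A^*)$ coming from $a_1 \in B$. The paper's version also invokes $a_l^2$ on $L_2$ before reaching the geometric contradiction, but as you observe the $L_1$ coordinate alone already suffices, so your argument is a slightly trimmed form of the same reasoning.
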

\begin{proof}
Suppose there is an element $a \in A_1$ 
with $a \prec a'$ in $P$ for some element $a' \in A^*$. 
Since $A^*$ is autonomous, 
$a \prec a_l^1$ and $a \prec a_l^2$ in $P$, 
contradicting $T(a) \cap C(A^*) \neq \emptyset$. 
Similarly, there are no elements $a \in A_1$ 
with $a' \prec a$ in $P$ for some element $a' \in A^*$. 
\end{proof}

\begin{Lemma}\label{lemma:8}
The set $A_1$ is autonomous in $P_1$, and 
$P_1 \mid A_1 = P \mid A$. 
\end{Lemma}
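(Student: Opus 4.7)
The plan is to verify both claims by a short case analysis, leveraging Lemmas~\ref{lemma:5} and~\ref{lemma:7}. For the autonomy of $A_1$ in $P_1$, note that $V - A_1 = (V - B) \cup A^{*}$. If $v \in V - B$, then Lemma~\ref{lemma:5} says $B$ is autonomous in $P$, so $v$ bears a uniform relation (all $\prec$, all $\parallel$, or all $\succ$) to every element of $B$ in $P$, hence to every element of $A_1 \subseteq B$. Because $P_1 = P \mid B$ alters only pairs both inside $B$, these cross-relations are preserved, giving uniformity in $P_1$. If instead $v \in A^{*}$, Lemma~\ref{lemma:7} yields $v \parallel a$ in $P$ for every $a \in A_1$, and reversing an autonomous set never turns an incomparability into a comparability, so $v \parallel a$ in $P_1$ as well. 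In either case the required uniformity holds.

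For the second claim, I would first invoke Lemma~\ref{lemma:2} to reduce to proving $P_1 \mid A_1 = P \mid A^{*}$, and then run through a case analysis on where the two endpoints of a pair $\{u,v\}$ lie among $V - B$, $A^{*}$, and $A_1$. The key bookkeeping is that $P_1$ differs from $P$ exactly on pairs inside $B = A^{*} \cup A_1$ (where the orientation is reversed), and $P_1 \mid A_1$ differs from $P_1$ exactly on pairs inside $A_1$ (reversed again).

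On pairs inside $A^{*}$: reversed once (by forming $P_1$), matching $P \mid A^{*}$. On pairs inside $A_1$: reversed twice, so back to $P$, which agrees with $P \mid A^{*}$ since $A_1 \cap A^{*} = \emptyset$. On mixed pairs with one endpoint in $A^{*}$ and one in $A_1$: reversed once in $P_1$ and not touched by $\mid A_1$, but Lemma~\ref{lemma:7} says such a pair is incomparable in $P$, so reversal is a no-op and the pair remains incomparable, matching $P \mid A^{*}$. Finally, on pairs with at least one endpoint in $V - B$: neither operation changes anything, and $P \mid A^{*}$ also leaves them fixed. This exhausts the cases.

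There is no real obstacle here: the heavy geometric lifting was done in Lemmas~\ref{lemma:3}--\ref{lemma:7}, which together force $B$ to be autonomous and $A_1$ to sit entirely in the ``incomparable layer'' relative to $A^{*}$. The only point requiring a moment of care is the mixed-pair case, where one must observe that the incomparability guaranteed by Lemma~\ref{lemma:7} is exactly what makes the single intermediate reversal invisible.
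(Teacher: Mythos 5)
Your proof is correct and follows essentially the same route as the paper: autonomy of $A_1$ in $P_1$ is derived from Lemma~\ref{lemma:5} (for $v\in V-B$) and Lemma~\ref{lemma:7} (for $v\in A^*$), and the identity $P_1\mid A_1 = P\mid A$ is established by tracking how each pair is or is not reversed, with Lemma~\ref{lemma:7} making the mixed $A^*$–$A_1$ reversal a no-op. The paper phrases the second part more compactly as the reversal-composition identity $P\mid B = (P\mid A^*)\mid A_1$, which your case analysis amounts to unpacking.
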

\begin{proof}
Let $v \in V - A_1$. 
If $v \in V - B$ then, since $B$ is autonomous in $P_1$, 
whenever $v \prec a$, $v \parallel a$, or $v \succ a$ holds in $P_1$ 
for some element $a \in A_1 \subseteq B$, 
the same holds for all elements $a \in A_1$. 
If $v \in B - A_1$ then $v \in A^*$, 
and we have from Lemma~\ref{lemma:7} that 
$v \parallel a$ in $P_1$ for all elements $a \in A_1$. 
Thus $A_1$ is autonomous in $P_1$. 
\par
Lemma~\ref{lemma:7} implies $P \mid B = (P \mid A^*) \mid A_1$. 
Thus $(P \mid B) \mid A_1 = P \mid A^*$. 
Since $P_1 = P \mid B$ and $P \mid A^* = P \mid A$, 
we have $P_1 \mid A_1 = P \mid A$. 
\end{proof}

We repeat the process by replacing $P$ and $A$ with $P_1$ and $A_1$, respectively. 
Let $A_1^*$ be the subset of $A_1$ obtained by removing all 
isolated elements of $A_1$. 
From Lemma~\ref{lemma:2}, 
the set $A_1^*$ is autonomous in $P_1$ 
and $P_1 \mid A_1^* = P_1 \mid A_1$. 
We assume without loss of generality 
$A_1^* \neq \emptyset$. 
We define $C'(A_1^*)$ as the convex region spanned by 
the triangles $T'(v)$ with $v \in A_1^*$. 
Let $B_1$ be the set of elements $v \in V$ with $T'(v) \subseteq C'(A_1^*)$. 
From Lemma~\ref{lemma:5}, the set $B_1$ is autonomous in $P_1$, 
and let $P_2 = P_1 \mid B_1$. 
From Lemma~\ref{lemma:6}, the order $P_2$ is a linear-semiorder. 
Let $A_2 = B_1 - A_1^*$, and we assume without loss of generality 
$A_2 \neq \emptyset$. 
From Lemma~\ref{lemma:8}, the set $A_2$ is autonomous in $P_2$, 
and $P_2 \mid A_2 = P_1 \mid A_1$. 

Let $A_2^*$ be the subset of $A_2$ obtained by removing all 
isolated elements of $A_2$. 
From Lemma~\ref{lemma:2}, 
the set $A_2^*$ is autonomous in $P_2$, 
and $P_2 \mid A_2^* = P_2 \mid A_2$. 
\begin{Lemma}\label{lemma:9}
The set $A_2^*$ is a proper subset of $A^*$. 
\end{Lemma}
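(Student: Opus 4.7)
The plan is to split the proof into two stages: first establish the containment $A_2^* \subseteq A^*$, and then exhibit an explicit element of $A^* \setminus A_2^*$ to force strict inclusion.

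For the containment, I begin from the identity $A_2 = B_1 - A_1^*$ and first observe that $B_1 \subseteq B$: any $v \in V - B$ satisfies $T'(v) = T(v) \not\subseteq C(A^*)$, while $C'(A_1^*) \subseteq C'(B) = C(B) = C(A^*)$, so such a $v$ cannot lie in $B_1$. Hence $A_2 \subseteq B - A_1^* = A^* \cup (A_1 - A_1^*)$. The next step is to rule out the second piece by showing that any $v \in A_1 - A_1^*$ is already isolated in $A_2$. By definition, $v$ being isolated in $A_1$ means $v \parallel v'$ in $P_1$ for every $v' \in A_1 - \{v\}$; the reversal producing $P_1$ preserves incomparability, so the same relations hold in $P$, and Lemma~\ref{lemma:7} then yields $v \parallel a$ in $P$ for every $a \in A^*$. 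Consequently $v$ is incomparable in $P$ to every element of $B - \{v\}$; since the reversals producing $P_1$ and $P_2$ only invert directions inside autonomous subsets of $B$ and never create new comparabilities, $v$ is still incomparable in $P_2$ to every element of $A_2 - \{v\} \subseteq B$. So $v$ is isolated in $A_2$ and $v \notin A_2^*$, giving $A_2^* \subseteq A^*$.

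For strict properness, I will take the extremal element $a_l^1 \in A^*$ (defined by $p(a_l^1) = l_1$) as a witness, and since $A_2^* \subseteq A_2 \subseteq B_1$ it suffices to show $a_l^1 \notin B_1$. The flipped triangle $T'(a_l^1)$ has its apex on $L_1$ at $l_1 + r_1 - l_1 = r_1$. On the other hand, each $v \in A_1^* \subseteq A_1 = B - A^*$ lies in $B$, so $T(v) \subseteq C(A^*)$ forces $p(v) \in [l_1, r_1]$; combined with the distinctness of the points on $L_1$ and the uniqueness of $a_l^1 \in A^*$ as the vertex realizing $p(\cdot) = l_1$, this gives the strict inequality $p(v) > l_1$. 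Therefore $\max_{v \in A_1^*} p'(v) = l_1 + r_1 - \min_{v \in A_1^*} p(v) < r_1$, so the top edge of $C'(A_1^*)$ misses $r_1$, the apex of $T'(a_l^1)$ sits outside $C'(A_1^*)$, and $T'(a_l^1) \not\subseteq C'(A_1^*)$.

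The main technical care will lie in the containment step, where one has to track which of the three orders $P$, $P_1$, or $P_2$ witnesses each incomparability and repeatedly invoke the fact that every reversal merely inverts directions inside an autonomous set, so incomparable pairs remain incomparable. Lemma~\ref{lemma:7} is the key ingredient that promotes isolation within $A_1$ to isolation across all of $B$; once that bookkeeping is in place, the extremal witness $a_l^1$ delivers properness by a short geometric calculation.
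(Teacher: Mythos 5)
Your proof is correct and follows essentially the same approach as the paper: showing $B_1 \subseteq B$, using Lemma~\ref{lemma:7} and isolation to exclude $A_1 - A_1^*$ from $A_2^*$, and exhibiting $a_l^1$ as an element of $A^*$ outside $B_1$ by computing $p'(a_l^1) = r_1$. Your version of the witness step is slightly cleaner than the paper's (it argues directly that $\max_{v\in A_1^*} p'(v) < r_1$ rather than extracting two bracketing elements $a, b \in A_1^*$ by contradiction, which also handles the degenerate case $|A_1^*|=1$ more transparently), and your bookkeeping of which order witnesses each incomparability makes the ruling-out of $A_1 - A_1^*$ more explicit; but these are presentational differences, not a different route.
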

\begin{proof}
Since $A_1^* \subseteq B$, 
we have $C'(A_1^*) \subseteq C'(B) = C(B) = C(A^*)$. 
Thus $B_1 \subseteq B$. 
Suppose $a_l^1 \in B_1$. 
Then $T'(a_l^1) \subseteq C'(A_1^*)$. 
Since $a_l^1 \in A^*$, we have $a_l^1 \notin A_1^* = B - A^*$. 
Thus there are two elements $a, b \in A_1^*$ with $p'(a) < p'(a_l^1) < p'(b)$, 
where $p'(v)$ is the point of $T'(v)$ on $L_1$. 
However, $p'(a_l^1) = l_1 + r_1 - p(a_l^1) = r_1$, 
that is, $r_1 < p'(b)$, 
contradicting $C'(A_1^*) \subseteq C(A^*)$. 
Therefore, $a_l^1 \notin B_1$. 
\par
The set $B$ can be partitioned into three sets 
$A^*$, $A_1 - A_1^*$, and $A_1^*$. 
By definition, all elements of $A_1 - A_1^*$ 
are incomparable to all other elements of $A_1$. 
From Lemma~\ref{lemma:7}, all elements of $A_1 - A_1^*$ 
are incomparable to all elements of $A^*$. 
Thus all elements of $A_1 - A_1^*$ 
are incomparable to all other elements of $B$. 
\par
We have $A_2^* \subseteq A_2 = B_1 - A_1^* \subseteq B - A_1^* = A^* \cup (A_1 - A_1^*)$. 
Since all elements of $A_1 - A_1^*$ 
are incomparable to all elements of $A_2 \subseteq B$, 
we have $A_2^* \cap (A_1 - A_1^*) = \emptyset$. 
Thus $A_2^* \subseteq A^*$. 
Since $a_l^1 \in A^*$ but $a_l^1 \notin B_1 \supseteq A_2^*$, 
the inclusion is proper. 
\end{proof}

Now, we prove Lemma~\ref{lemma:comparability invariant}. 
\begin{proof}[Proof of Lemma~\ref{lemma:comparability invariant}] 
Suppose that the lemma does not hold. 
Then there is a pair of a linear-semiorder $P$ 
and an autonomous set $A$ of $P$ such that 
$P \mid A$ is not a linear-semiorder. 
Among such pairs, we choose one with minimal $|A|$. 
Lemma~\ref{lemma:8} implies $P \mid A = P_1 \mid A_1 = P_2 \mid A_2 = P_2 \mid A_2^*$, but 
Lemma~\ref{lemma:9} implies $|A_2^*| < |A^*| \leq |A|$, contradicting the minimality of $A$. 
\end{proof}

\section{Miscellaneous}\label{sec:byproducts}
We finally show two byproducts of Theorem~\ref{theorem:characterization}. 

\subsection{Vertex ordering characterization}
Let $G = (V, E)$ be a graph. 
A \emph{vertex ordering} of $G$ is a linear order on $V$. 
A \emph{vertex ordering characterization} of a graph class $\mathcal{C}$ 
is a characterization of the following type: 
a graph $G$ is in $\mathcal{C}$ if and only if $G$ has 
a vertex ordering fulfilling some properties. 
For example, a graph $G$ is the incomparability graph of an order 
if and only if there is a vertex ordering $L$ of $G$ 
such that for any three vertices $u, v, w$ of $G$ with $u \prec v \prec w$ in $L$, 
if $uw \in E$ then $uv \in E$ or $vw \in E$~\cite{KS93-SIAMDM}. 
Equivalently, a graph is the incomparability graph of an order 
if and only if it has a vertex ordering that contains no suborderings 
in Fig.~\ref{figs:forbidden-patterns}\subref{figs:cpc}. 

Incomparability graphs of linear-interval orders can be characterized 
so that a graph $G$ is the incomparability graph of a linear-interval order 
if and only if there is a vertex ordering of $G$ 
that contains no suborderings 
in Figs.~\ref{figs:forbidden-patterns}\subref{figs:cpc}--\subref{figs:p2}~\cite{Takaoka18-DM}. 
Moreover, a vertex ordering of $G$ that contains no suborderings 
in Figs.~\ref{figs:forbidden-patterns}\subref{figs:cpc}--\subref{figs:p2} 
can be obtained in $O(nm)$ time~\cite{Takaoka20a-DAM}, 
where $n$ and $m$ are the number of vertices and edges of $G$, respectively. 
For linear-semiorders, the following can be obtained from 
%similar results follow 
Theorems~\ref{theorem:characterization} and~\ref{theorem:recognition:main}. 

\begin{Corollary}
A graph $G$ is the incomparability graph of a linear-semiorder 
if and only if there is a vertex ordering of $G$ 
that contains no suborderings 
in Figs.~\ref{figs:forbidden-patterns}\subref{figs:cpc}--\subref{figs:k2}. 
A vertex ordering of a graph $G$ that contains no suborderings 
in Figs.~\ref{figs:forbidden-patterns}\subref{figs:cpc}--\subref{figs:k2}
can be computed in $O(m^2 \bar{m})$ time 
if $G$ is the incomparability graph of a linear-semiorder. 
Here, $m$ and $\bar{m}$ are the number of edges and non-edges 
of $G$, respectively. 
\end{Corollary}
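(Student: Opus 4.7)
The plan is to translate Theorems~\ref{theorem:characterization}, \ref{theorem:recognition:main}, and~\ref{comparability invariant} into the vertex-ordering framework. A vertex ordering of $G$ is nothing but a linear order on $V$, and for any transitive orientation $P$ of the complement $\overline{G}$, such an ordering is a linear extension of $P$ precisely when it avoids the pattern (a) (cpc), by the well-known characterization of incomparability graph orderings of~\cite{KS93-SIAMDM}. Thus the task reduces to showing that, conditional on the ordering being a linear extension of such a $P$, avoiding the remaining patterns (b)--(e) is equivalent to the extension fulfilling the $\mathbf{2+2}$ and $\mathbf{3+1}$ rules.

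Patterns (b) and (c) already capture the $\mathbf{2+2}$ rule via the earlier work~\cite{Takaoka18-DM}, so the only new equivalence to establish is: avoiding patterns (d), (e) is equivalent to fulfilling the $\mathbf{3+1}$ rule. I would prove this by translating a forbidden configuration for $\mathbf{3+1}$---four elements $x, y, z, w$ with $x \prec y \prec z$ in $P$, with $w$ incomparable in $P$ to each of $x, y, z$, and $x \prec w \prec z$ in $L$---into a sub-ordering of the vertex ordering of $G$: the pairs among $\{x, y, z\}$ are non-edges of $G$, and the pairs $xw, yw, zw$ are edges of $G$. The two sub-patterns (d), (e) should correspond to the two possible positions of $w$ relative to $y$ (namely $y \prec w$ or $w \prec y$ in $L$), each yielding a forbidden $\mathbf{3+1}$ configuration. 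The verification in both directions is a direct case analysis on the relative positions of the four vertices.

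With this equivalence in hand, the ``only if'' direction is immediate: if $G$ is the incomparability graph of a linear-semiorder, pick any transitive orientation $P$ of $\overline{G}$; by Theorem~\ref{comparability invariant}, $P$ is itself a linear-semiorder; Theorem~\ref{theorem:characterization} then furnishes a linear extension fulfilling both rules, and this extension, viewed as a vertex ordering of $G$, avoids all five patterns. The ``if'' direction: given a vertex ordering of $G$ avoiding the patterns, avoiding (a) makes it a linear extension of some transitive orientation $P$ of $\overline{G}$; avoiding (b)--(e) then yields both rules; Theorem~\ref{theorem:characterization} concludes that $P$ is a linear-semiorder, so $G$ is the incomparability graph of one.

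For the algorithm and complexity: first compute a transitive orientation $P$ of $\overline{G}$ in $O(nm)$ time by Proposition~\ref{prop:obtain transitive orientation}, then invoke the second half of Theorem~\ref{theorem:recognition:main} to produce a linear extension of $P$ fulfilling the $\mathbf{2+2}$ and $\mathbf{3+1}$ rules in $O(m_P \bar{m}_P^2)$ time. Since edges of $G$ are exactly the incomparable pairs of $P$ and non-edges are the comparable pairs, this translates to $O(\bar{m} \cdot m^2) = O(m^2 \bar{m})$, which dominates and yields the claimed bound. The main obstacle I expect is the case analysis in the pattern equivalence for the $\mathbf{3+1}$ rule: one must check that patterns (d), (e) together faithfully capture all linear-extension orderings of a forbidden $\mathbf{3+1}$ configuration, without over-forbidding legitimate ones.
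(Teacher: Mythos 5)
Your proposal is correct and follows essentially the same route as the paper: translate the five patterns into (transitivity + $\mathbf{2+2}$ forbidden configurations + $\mathbf{3+1}$ forbidden configurations), invoke Theorem~\ref{theorem:characterization} for the equivalence, and run the recognition pipeline (complement, transitive orientation via Proposition~\ref{prop:obtain transitive orientation}, then the linear extension of Theorem~\ref{theorem:recognition:main}) with the roles of $m$ and $\bar m$ swapped for the stated $O(m^2\bar m)$ bound. The one place you are slightly more careful than the paper is the explicit use of Theorem~\ref{comparability invariant} to guarantee that the particular transitive orientation computed in the algorithm is itself a linear-semiorder before handing it to Theorem~\ref{theorem:recognition:main} — the paper relies on this implicitly.
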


\begin{figure}[t]
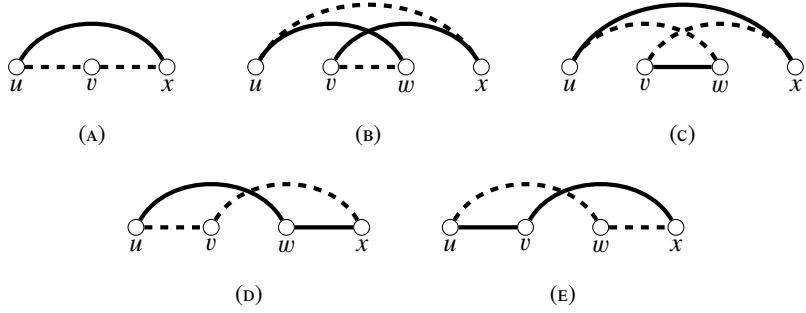

  \centering
  \subcaptionbox{\label{figs:cpc}}{\input{figs/forbidden/cpc.tex}}
  \subcaptionbox{\label{figs:p1}}{\input{figs/forbidden/p1.tex}}
  \subcaptionbox{\label{figs:p2}}{\input{figs/forbidden/p2.tex}}
\\
  \subcaptionbox{\label{figs:k1}}{\input{figs/forbidden/k1.tex}}
  \subcaptionbox{\label{figs:k2}}{\input{figs/forbidden/k2.tex}}
  \caption{
    Forbidden patterns in vertex orderings where $u \prec v \prec w \prec x$. 
    Lines and dashed lines denote edges and non-edges, respectively. 
    Edges that may or may not be present is not drawn. 
    } 
  \label{figs:forbidden-patterns}
\end{figure}

\begin{proof}
Assume there is a linear-semiorder $P$ on a set $V$ 
such that the incomparability graph of $P$ is $G$. 
Then $P$ has a linear extension $L$ fulfilling 
the $\mathbf{2+2}$ and $\mathbf{3+1}$ rules. 
Notice that $L$ can be viewed as a vertex ordering of $G$. 
Since $L$ is a linear extension of a partial order, 
it has no suborderings in Fig.~\ref{figs:forbidden-patterns}\subref{figs:cpc}. 
Since $L$ has no forbidden configurations for $\mathbf{2+2}$, 
it has no suborderings in Figs.~\ref{figs:forbidden-patterns}\subref{figs:p1} and~\ref{figs:forbidden-patterns}\subref{figs:p2}. 
Since $L$ has no forbidden configurations for $\mathbf{3+1}$, 
it has no suborderings in Figs.~\ref{figs:forbidden-patterns}\subref{figs:k1} and~\ref{figs:forbidden-patterns}\subref{figs:k2}. 
\par
Conversely, assume there is a vertex ordering $L$ of $G = (V, E)$ 
with no suborderings 
in Figs.~\ref{figs:forbidden-patterns}\subref{figs:cpc}--\subref{figs:k2}. 
Let $P$ be a binary relation on $V$ such that 
$(u, v) \in P$ if and only if $uv \notin E$ and $u \prec v$ in $L$ 
for any two elements $u, v \in V$. 
Since $L$ has no suborderings in Fig.~\ref{figs:forbidden-patterns}\subref{figs:cpc}, 
the relation $P$ is transitive, and hence a partial order. 
Then $L$ can be viewed as a linear extension of $P$. 
Since $L$ has no suborderings in Figs.~\ref{figs:forbidden-patterns}\subref{figs:p1} and~\ref{figs:forbidden-patterns}\subref{figs:p2}, 
it has no forbidden configurations for $\mathbf{2+2}$. 
Since $L$ has no suborderings in Figs.~\ref{figs:forbidden-patterns}\subref{figs:k1} and~\ref{figs:forbidden-patterns}\subref{figs:k2}, 
it has no forbidden configurations for $\mathbf{3+1}$. 
Therefore, Theorem~\ref{theorem:characterization} implies that 
$P$ is a linear-semiorder. 
Since $G$ is the incomparability graph of $P$, 
we have the first claim. 
\par
Let $G$ be the incomparability graph of a linear-semiorder. 
We can compute the vertex ordering of $G$ 
by the following algorithm. 
First, take a complement $\overline{G}$ of $G$. 
Then, obtain a transitive orientation $P$ of $\overline{G}$. 
Finally, compute a linear extension $L$ of $P$ 
fulfilling the $\mathbf{2+2}$ and $\mathbf{3+1}$ rules. 
As mentioned above, 
$L$ can be viewed as a vertex ordering of $G$ with no suborderings 
in Figs.~\ref{figs:forbidden-patterns}\subref{figs:cpc}--\subref{figs:k2}. 
Thus the algorithm is correct. 
The first step can be done in $O(n^2)$ time. 
The second and third steps 
can be done in $O(n m)$ time and $O(m^2 \bar{m})$ time, respectively, 
as shown in Proposition~\ref{prop:obtain transitive orientation} 
and Theorem~\ref{theorem:recognition:main}. 
Thus the algorithm takes $O(m^2 \bar{m})$ time. 
\end{proof}

\subsection{Linear-semiorders and interval orders}
The class of linear-interval orders contains 
interval orders and orders of dimension 2 
as proper subclasses~\cite{CK87-CN}. 
The following example shows that 
the class of interval orders is not a subclass of linear-semiorders. 
\begin{Example}\label{example}
The interval order $P_I$ in Fig.~\ref{figs:P_I}\subref{figs:P_I:order} 
is not a linear-semiorder. 
\end{Example}
\begin{proof}
Figure~\ref{figs:P_I}\subref{figs:P_I:representation} shows 
an interval representation of $P_I$. 
Suppose that $P_I$ is a linear-semiorder. 
Then Theorem~\ref{theorem:characterization} implies that 
$P_I$ has a linear extension $L$ 
fulfilling the $\mathbf{3+1}$ rule. 
\par
The order $\mathbf{3+1}$ consisting of $b_1, b_2, b_3, a_1$ requires 
that $a_1 \prec b_1$ in $L$ if and only if $a_1 \prec b_3$ in $L$. 
The order $\mathbf{3+1}$ consisting of $a_1, a_2, a_3, b_3$ requires 
that $a_1 \prec b_3$ in $L$ if and only if $a_3 \prec b_3$ in $L$. 
The order $\mathbf{3+1}$ consisting of $b_3, b_4, b_5, a_3$ requires 
that $a_3 \prec b_3$ in $L$ if and only if $a_3 \prec b_5$ in $L$. 
Thus $a_1 \prec b_1$ in $L$ if and only if $a_3 \prec b_5$ in $L$. 
\par
The order $\mathbf{3+1}$ consisting of $b_1, b_2, c_2, a_1$ requires 
that $a_1 \prec b_1$ in $L$ if and only if $a_1 \prec c_2$ in $L$. 
The order $\mathbf{3+1}$ consisting of $a_1, b_4, b_5, c_2$ requires 
that $a_1 \prec c_2$ in $L$ if and only if $b_5 \prec c_2$ in $L$. 
The order $\mathbf{3+1}$ consisting of $c_1, b_4, b_5, c_2$ requires 
that $b_5 \prec c_2$ in $L$ if and only if $c_1 \prec c_2$ in $L$. 
Thus $a_1 \prec b_1$ in $L$ if and only if $c_1 \prec c_2$ in $L$. 
\par
Similarly, 
the order $\mathbf{3+1}$ consisting of $c_1, b_4, b_5, a_3$ requires 
that $a_3 \prec b_5$ in $L$ if and only if $a_3 \prec c_1$ in $L$. 
The order $\mathbf{3+1}$ consisting of $b_1, b_2, a_3, c_1$ requires 
that $a_3 \prec c_1$ in $L$ if and only if $b_1 \prec c_1$ in $L$. 
The order $\mathbf{3+1}$ consisting of $b_1, b_2, c_2, c_1$ requires 
that $b_1 \prec c_1$ in $L$ if and only if $c_2 \prec c_1$ in $L$. 
Thus $a_3 \prec b_5$ in $L$ if and only if $c_2 \prec c_1$ in $L$, 
a contradiction. 
Therefore, $P_I$ has no linear extensions 
fulfilling the $\mathbf{3+1}$ rule, that is, 
$P_I$ is not a linear-semiorder. 
\end{proof}

\begin{figure}[t]
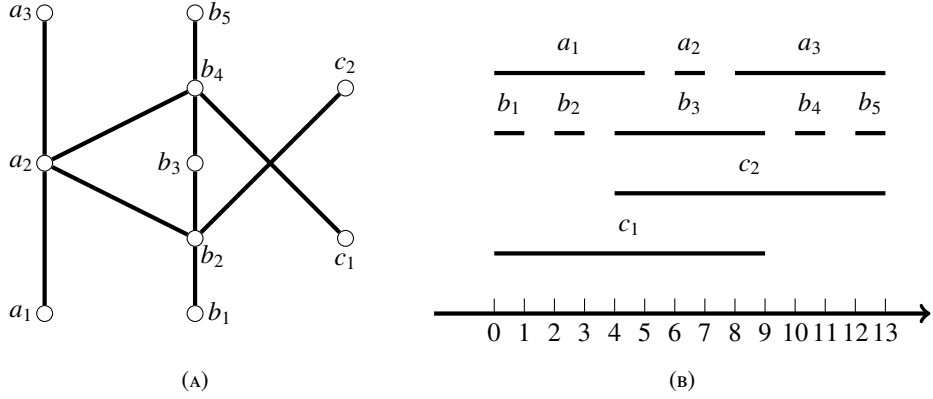

  \centering
  \subcaptionbox{\label{figs:P_I:order}}{\input{figs/P_I/order.tex}}
  \subcaptionbox{\label{figs:P_I:representation}}{\input{figs/P_I/representation.tex}}
  \caption{
    (a) An interval order $P_I$ that is a minimal forbidden order of linear-semiorders. 
    (b) An interval representation of $P_I$. 
    } 
  \label{figs:P_I}
\end{figure}

Note that one can check by inspection that 
every proper induced suborder of $P_I$ has a linear extension 
fulfilling the $\mathbf{3+1}$ rule. 
Since $P_I$ is an interval order, $P_I$ and every its induced suborder 
do not contain $\mathbf{2+2}$ as an induced suborder~\cite{Fishburn70-JMP}. 
Hence every proper induced suborder of $P_I$ is a linear-semiorder, that is, 
$P_I$ is a minimal forbidden order of linear-semiorders. 

\begin{Corollary}
The class of linear-semiorders is not a subclass of interval orders 
and vice versa; 
therefore, the class of linear-semiorders is a proper subclass 
of linear-interval orders. 
The class of linear-semiorders contains 
semiorders and orders of dimension 2 as proper subclasses. 
\end{Corollary}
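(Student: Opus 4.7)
The plan is to handle each assertion in turn by exhibiting explicit witnesses and invoking already-proved facts, with most of the work reduced to producing two examples.

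For the first assertion (linear-semiorders not contained in interval orders), I would use the four-element order $\mathbf{2+2}$. It has dimension two, realized as the intersection of the linear orders $x \prec y \prec z \prec w$ and $z \prec w \prec x \prec y$, and since every linear order is itself a semiorder this shows $\mathbf{2+2}$ is a linear-semiorder. Alternatively, the linear extension $x \prec y \prec z \prec w$ satisfies $y \prec z$, so it fulfills the $\mathbf{2+2}$ rule and vacuously the $\mathbf{3+1}$ rule, and Theorem~\ref{theorem:characterization} applies. But $\mathbf{2+2}$ is not an interval order by definition. For the reverse non-inclusion, Example~\ref{example} has already done the work: the interval order $P_I$ displayed there has no linear extension fulfilling the $\mathbf{3+1}$ rule, hence by Theorem~\ref{theorem:characterization} is not a linear-semiorder.

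For the proper containment of linear-semiorders in linear-interval orders, the inclusion is immediate because every semiorder is an interval order, so any decomposition $L \cap S$ of a linear-semiorder witnesses it as a linear-interval order. Properness then follows from the preceding paragraph: the order $P_I$ is an interval order (hence a linear-interval order via any of its linear extensions) but not a linear-semiorder.

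For the final assertion, I would verify that semiorders and two-dimensional orders both sit inside linear-semiorders: a semiorder $S$ equals $L \cap S$ for any linear extension $L$ of $S$, and a two-dimensional order is by definition an intersection of two linear orders, the second of which is trivially a semiorder. Properness for semiorders follows from $\mathbf{2+2}$, which is a linear-semiorder yet contains $\mathbf{2+2}$ as an induced suborder and therefore is not a semiorder. The main obstacle is properness for two-dimensional orders, where I need a linear-semiorder of dimension strictly greater than two. Here I would invoke the known fact (due to Rabinovitch) that there exist semiorders of dimension three; any such semiorder is a linear-semiorder and is not of dimension two, completing the argument. If a self-contained witness is preferred, one can exhibit an explicit semiorder of dimension three (for example, on $\{0, 1, \ldots, 6\}$ with $i \prec j$ whenever $j \geq i + 3$) and verify by hand that it has no realization as the intersection of two linear orders.
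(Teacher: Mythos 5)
Your proof is correct and, for most of the argument, parallel to the paper's: both use $P_I$ from Example~\ref{example} as the interval order that is not a linear-semiorder, both use $\mathbf{2+2}$ as the linear-semiorder that is not an interval order, and both derive the inclusions into linear-interval orders and out of semiorders and dimension-$2$ orders directly from definitions. The genuine divergence is in the two properness claims at the end. The paper economizes by using a single witness, namely the linear-semiorder of Figure~\ref{figs:chevron} (the dual of chevron), which it observes contains $\mathbf{2+2}$ as an induced suborder (hence is not a semiorder) and has dimension $3$ by a citation to Trotter's book. You split the claims: you reuse $\mathbf{2+2}$ to rule out semiorders, and you invoke Rabinovitch's theorem on the existence of $3$-dimensional semiorders (with a proposed explicit $7$-element witness) to rule out dimension $2$. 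Both routes are sound. Your version has the advantage of deriving properness over semiorders from a $4$-element example and of tying properness over dimension-$2$ orders to a classical named result rather than an unproven assertion about the chevron; the paper's has the advantage of a single concrete witness already present in the paper's figures. One caution on the optional self-contained witness: you assert that the semiorder on $\{0,\ldots,6\}$ with $i \prec j$ iff $j \geq i+3$ has dimension $3$ but defer the verification, which is not routine; keeping the appeal to Rabinovitch's theorem as the primary justification, as you do, is the safer choice.
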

\begin{proof}
Example~\ref{example} shows an interval order (hence a linear-interval order) 
that is not a linear-semiorder. 
The order $\mathbf{2+2}$ is a linear-semiorder but is not an interval order. 
\par
From definitions, 
every semiorder and order of dimension 2 is a linear-semiorder. 
Thus the classes of semiorders and orders of dimension 2 are 
subclasses of linear-semiorders. 
Since the linear-semiorder in Fig.~\ref{figs:chevron} 
is neither a semiorder (since it contains $\mathbf{2+2}$ as an induced suborder) 
nor an order of dimension 2~\cite{Trotter92-book}, the inclusion is proper. 
\end{proof}

\section{Concluding remarks}\label{sec:conclusion}
In this paper, we introduce linear-semiorders and 
show a characterization and a polynomial-time recognition algorithm 
for this new order class. 
We also prove that being a linear-semiorder is a comparability invariant, 
showing that incomparability graphs of linear-semiorders 
can also be recognized in polynomial time. 
As byproducts, we show a vertex ordering characterization of 
incomparability graphs of linear-semiorders and 
inclusion relationships between the order classes. 
\par
We finally list some open problems on linear-semiorders. 
\begin{enumerate}[\bfseries --]
\item 
Our algorithm takes $O(n \bar{m})$ time to recognize linear-semiorders and 
takes $O(m \bar{m}^2)$ time to produce 
a linear extension fulfilling the $\mathbf{2+2}$ and $\mathbf{3+1}$ rules, 
where $n$, $m$, and $\bar{m}$ are the number of 
elements, comparable pairs, and incomparable pairs of the given order, respectively. 
Is there any faster or simpler algorithm for recognizing linear-semiorders? 
\item 
The class of linear-semiorders is closed under taking induced suborders. 
Hence the orders can be characterized by forbidden induced suborders. 
Can we obtain the list of forbidden induced suborders for linear-semiorders? 
\item 
The isomorphism problem can be solved in linear time 
for interval graphs~\cite{LB79-JACM} and 
permutation graphs~\cite{Colbourn81-Networks}, 
whereas the complexity of the problem remains open 
for incomparability graphs of 
linear-interval orders~\cite{Takaoka15-IEICE,Takaoka20-DAM,Uehara14-DMTCS}. 
For incomparability graphs of linear-semiorders, 
can we determine the complexity of the isomorphism problem? 
\end{enumerate}

\subsubsection*{Acknowledgments}
We are grateful to the anonymous referees 
of the preliminary version of this paper 
for their time and valuable suggestions. 
A part of this work was done 
while the author was in Kanagawa University. 

%\bibliographystyle{abbrvurl}
%\bibliography{ref}

\begin{thebibliography}{10}

\bibitem{APT79-IPL}
B.~Aspvall, M.~F. Plass, and R.~E. Tarjan.
\newblock \href {http://dx.doi.org/10.1016/0020-0190(79)90002-4} {A linear-time
  algorithm for testing the truth of certain quantified boolean formulas}.
\newblock {\em Inf. Process. Lett.}, 8(3):121--123, 1979.

\bibitem{BLR10-Order}
K.~P. Bogart, J.~D. Laison, and S.~P. Ryan.
\newblock \href {http://dx.doi.org/10.1007/s11083-010-9144-6} {Triangle,
  parallelogram, and trapezoid orders}.
\newblock {\em Order}, 27(2):163--175, 2010.

\bibitem{BMR98-Order}
K.~P. Bogart, R.~H. M{\"o}hring, and S.~P. Ryan.
\newblock \href {http://dx.doi.org/10.1023/A:1006042122315} {Proper and unit
  trapezoid orders and graphs}.
\newblock {\em Order}, 15(4):325--340, 1998.

\bibitem{BW99-DM}
K.~P. Bogart and D.~B. West.
\newblock \href {http://dx.doi.org/10.1016/S0012-365X(98)00310-0} {A short
  proof that 'proper = unit'}.
\newblock {\em Discrete Math.}, 201(1-3):21--23, 1999.

\bibitem{BLS99}
A.~Brandst\"{a}dt, V.~B. Le, and J.~P. Spinrad.
\newblock \href {http://dx.doi.org/10.1137/1.9780898719796} {{\em Graph
  Classes: A Survey}}.
\newblock {SIAM}, Philadelphia, PA, USA, 1999.

\bibitem{COS08-ENDM}
M.~R. Cerioli, F.~de~S.~Oliveira, and J.~L. Szwarcfiter.
\newblock \href {http://dx.doi.org/10.1016/j.endm.2008.01.021} {Linear-interval
  dimension and {PI} orders}.
\newblock {\em Electron. Notes Discrete Math.}, 30:111--116, 2008.

\bibitem{Colbourn81-Networks}
C.~J. Colbourn.
\newblock \href {http://dx.doi.org/10.1002/net.3230110103} {On testing
  isomorphism of permutation graphs}.
\newblock {\em Networks}, 11(1):13--21, 1981.

\bibitem{CK87-CN}
D.~G. Corneil and P.~A. Kamula.
\newblock Extensions of permutation and interval graphs.
\newblock {\em Congr. Numer.}, 58:267--275, 1987.

\bibitem{DGP88-DAM}
I.~Dagan, M.~C. Golumbic, and R.~Y. Pinter.
\newblock \href {http://dx.doi.org/10.1016/0166-218X(88)90032-7} {Trapezoid
  graphs and their coloring}.
\newblock {\em Discrete Appl. Math.}, 21(1):35--46, 1988.

\bibitem{FM98-Order}
S.~Felsner and R.~H. M{\"{o}}hring.
\newblock \href {http://dx.doi.org/10.1023/A:1006223715426} {Note: Semi-order
  dimension two is a comparability invariant}.
\newblock {\em Order}, 15(4):385--390, 1998.

\bibitem{Fishburn70-JMP}
P.~C. Fishburn.
\newblock \href {http://dx.doi.org/10.1016/0022-2496(70)90062-3} {Intransitive
  indifference with unequal indifference intervals}.
\newblock {\em J. Math. Psych.}, 7(1):144--149, 1970.

\bibitem{Gallai67}
T.~Gallai.
\newblock \href {http://dx.doi.org/10.1007/BF02020961} {Transitiv orientierbare
  graphen}.
\newblock {\em Acta Math. Acad. Sci. Hungar.}, 18(1):25--66, 1967.

\bibitem{Golumbic04}
M.~C. Golumbic.
\newblock {\em Algorithmic Graph Theory and Perfect Graphs}, volume~57 of {\em
  Ann. Discrete Math.}
\newblock Elsevier, 2 edition, 2004.

\bibitem{GRU83-DM}
M.~C. Golumbic, D.~Rotem, and J.~Urrutia.
\newblock \href {http://dx.doi.org/10.1016/0012-365X(83)90019-5} {Comparability
  graphs and intersection graphs}.
\newblock {\em Discrete Math.}, 43(1):37--46, 1983.

\bibitem{GT04}
M.~C. Golumbic and A.~N. Trenk.
\newblock \href {http://dx.doi.org/10.1017/CBO9780511542985} {{\em Tolerance
  Graphs}}.
\newblock Cambridge Univ. Press, 2004.

\bibitem{KS93-SIAMDM}
D.~Kratsch and L.~Stewart.
\newblock \href {http://dx.doi.org/10.1137/0406032} {Domination on
  cocomparability graphs}.
\newblock {\em {SIAM} J. Discrete Math.}, 6(3):400--417, 1993.

\bibitem{LB79-JACM}
G.~S. Lueker and K.~S. Booth.
\newblock \href {http://dx.doi.org/10.1145/322123.322125} {A linear time
  algorithm for deciding interval graph isomorphism}.
\newblock {\em J. {ACM}}, 26(2):183--195, 1979.

\bibitem{MS99-DM}
R.~M. McConnell and J.~P. Spinrad.
\newblock \href {http://dx.doi.org/10.1016/S0012-365X(98)00319-7} {Modular
  decomposition and transitive orientation}.
\newblock {\em Discrete Math.}, 201(1--3):189--241, 1999.

\bibitem{Mertzios15-SIAMDM}
G.~B. Mertzios.
\newblock \href {http://dx.doi.org/10.1137/140963108} {The recognition of
  simple-triangle graphs and of linear-interval orders is polynomial}.
\newblock {\em {SIAM} J. Discrete Math.}, 29(3):1150--1185, 2015.

\bibitem{MC11-DAM}
G.~B. Mertzios and D.~G. Corneil.
\newblock \href {http://dx.doi.org/10.1016/j.dam.2011.03.023} {Vertex splitting
  and the recognition of trapezoid graphs}.
\newblock {\em Discrete Appl. Math.}, 159(11):1131--1147, 2011.

\bibitem{Ryan98-Order}
S.~P. Ryan.
\newblock \href {http://dx.doi.org/10.1023/A:1006223420935} {Trapezoid order
  classification}.
\newblock {\em Order}, 15(4):341--354, 1998.

\bibitem{SS58-JSL}
D.~Scott and P.~Suppes.
\newblock \href {http://dx.doi.org/10.2307/2964389} {Foundational aspects of
  theories of measurement}.
\newblock {\em J. Symb. Log.}, 23(2):113--128, 1958.

\bibitem{Spinrad03}
J.~P. Spinrad.
\newblock {\em Efficient Graph Representations}, volume~19 of {\em Fields
  Institute Monographs}.
\newblock {AMS}, Providence, RI, USA, 2003.

\bibitem{Takaoka15-IEICE}
A.~Takaoka.
\newblock \href {http://dx.doi.org/10.1587/transfun.E98.A.1838} {{Graph
  Isomorphism Completeness for Trapezoid Graphs}}.
\newblock {\em IEICE Trans. Fundamentals}, E98--A(8):1838--1840, 2015.

\bibitem{Takaoka18-DM}
A.~Takaoka.
\newblock \href {http://dx.doi.org/10.1016/j.disc.2018.08.009} {A vertex
  ordering characterization of simple-triangle graphs}.
\newblock {\em Discrete Math.}, 341(12):3281--3287, 2018.

\bibitem{Takaoka20a-DAM}
A.~Takaoka.
\newblock \href {http://dx.doi.org/10.1016/j.dam.2019.11.009} {A recognition
  algorithm for simple-triangle graphs}.
\newblock {\em Discret. Appl. Math.}, 282:196--207, 2020.

\bibitem{Takaoka20-DAM}
A.~Takaoka.
\newblock \href {http://dx.doi.org/10.1016/j.dam.2019.10.028} {Recognizing
  simple-triangle graphs by restricted 2-chain subgraph cover}.
\newblock {\em Discret. Appl. Math.}, 279:154--167, 2020.

\bibitem{Trotter92-book}
W.~T. Trotter.
\newblock Combinatorics and partially ordered sets: Dimension theory.
\newblock 1992.

\bibitem{Uehara14-DMTCS}
R.~Uehara.
\newblock \href
  {http://www.dmtcs.org/dmtcs-ojs/index.php/dmtcs/article/view/2528} {The graph
  isomorphism problem on geometric graphs}.
\newblock {\em Discrete Math. Theor. Comput. Sci.}, 16(2):87--96, 2014.

\end{thebibliography}

\end{document}